\documentclass[aps,twocolumn,superscriptaddress,floatfix,nofootinbib]{revtex4-1} 

\usepackage[export]{adjustbox}
\usepackage{amsmath,amssymb}
\usepackage{braket}
\usepackage{mathtools}
\usepackage{hyperref}
\usepackage{mathrsfs}
\usepackage{dsfont,isomath}
\usepackage{verbatim}
\usepackage{setspace}
\usepackage{ulem}
\usepackage{xcolor}

\usepackage{algorithm}
\usepackage{float}
\usepackage[noend]{algpseudocode}
\usepackage{setspace}

\usepackage{dsfont}
\newcommand{\A}{\mathcal{A}}
\newcommand{\1}{{\widehat{\mathds{1}}}}
\newcommand{\W}{\widehat{W}}
\newcommand{\w}{\widehat{V}}
\newcommand{\sw}{\widehat{w}}
\newcommand{\V}{\widehat{V}}
\newcommand{\Q}{\widehat{Q}}
\renewcommand{\P}{\widehat{P}}
\renewcommand{\AA}{\widehat{A}}
\newcommand{\BB}{\widehat{\v{b}}}
\newcommand{\CC}{\widehat{\v{c}}}
\newcommand{\DD}{\widehat{d}}
\newcommand{\PP}{\mathbb{P}}

\newcommand{\Z}{\mathbb{Z}}

\newcommand{\R}{\mathbb{R}}

\newcommand{\n}[1]{\left| #1 \right|}
\newcommand{\dn}[1]{|| #1 ||}
\newcommand{\st}[1]{\left\{#1\right\}}
\newcommand{\setc}[2]{\{#1\; :  \; #2 \}}

\renewcommand{\v}[1]{\boldsymbol{#1}}
\DeclareMathOperator{\Tr}{Tr}
\DeclareMathOperator{\tr}{tr}
\DeclareMathOperator{\Id}{Id}
\DeclareMathOperator{\diag}{diag}
\DeclareMathOperator{\spec}{spec}
\DeclarePairedDelimiterX{\IP}[2]{\langle}{\rangle}{#1, #2}

\renewcommand{\O}{\widehat{\mathcal{O}}}

\usepackage{pifont}
\newcommand{\cmark}{\ding{51}}%
\newcommand{\xmark}{\ding{55}}%

\usepackage{amsthm}
\newtheorem{thm}{Theorem}

\newtheorem{prop}[thm]{Proposition}

\newtheorem{lemma}[thm]{Lemma}

\theoremstyle{definition}

\newtheorem{defn}[thm]{Definition}

\theoremstyle{remark}
\newtheorem{eg}[thm]{Example}

\begin{document}
	\title{Local Matrix Product Operators: Canonical Form, Compression, \& Control Theory}

	\author{Daniel E. Parker}
	\email[]{daniel\_parker@berkeley.edu}
	\affiliation{Department of Physics, University of California, Berkeley, CA 94720, USA}
	
	\author{Xiangyu Cao}
	\email[]{xiangyu.cao@berkeley.edu}
	\affiliation{Department of Physics, University of California, Berkeley, CA 94720, USA}
	
	\author{Michael P. Zaletel}
	\email[]{mikezaletel@berkeley.edu}
	\affiliation{Department of Physics, University of California, Berkeley, CA 94720, USA}

	\date{\today}
	
	\begin{abstract}
       
		We present a new method for compressing matrix product operators (MPOs) which represent sums of local terms, such as Hamiltonians. Just as with area law states, such local operators may be fully specified with a small amount of information per site. Standard matrix product state (MPS) tools are ill-suited to this case, due to extensive Schmidt values that coexist with intensive ones, and Jordan blocks in the transfer matrix. We ameliorate these issues by introducing an ``almost Schmidt decomposition" that respects locality.  Our method is ``$\varepsilon$-close" to the accuracy of MPS-based methods for finite MPOs, and extends seamlessly to the thermodynamic limit, where MPS techniques are inapplicable. In the framework of control theory, our method generalizes Kung's algorithm for model order reduction. Several examples are provided, including an all-MPO version of the operator recursion method (Lanczos algorithm) directly in the thermodynamic limit. All results are accompanied by practical algorithms, well-suited for the large MPOs that arise in DMRG for long-range or quasi-2D models.
	\end{abstract}
	\maketitle

	\section{Introduction}

While it is now well understood how matrix product states (MPS) can approximate 1d ground states~\cite{Hastings_2007,schollwock2011density,mcculloch2007density,hauschild2018efficient,laurens1,schuch2008entropy,verstraete2006matrix}, matrix-product representations of operators (MPOs) remain less understood. MPOs feature prominently in modern implementations of the density matrix renormalization group (DMRG)~\cite{schollwock2011density}, yet we lack a complete understanding of the resources required for an MPO approximation of a complex (but local) operator, an important ingredient for several problems of current interest.
 For instance, DMRG calculations of 1d systems with long-ranged interactions or 2d cylinder geometries are hampered by the large bond dimension of MPO representations of the Hamiltonian. Complex operators also arise during the Heisenberg  evolution of simpler ones, so efficient numerical representations would have wide ranging applications in the study of quantum thermalization and the emergence of hydrodynamics.

While a MPO can formally be treated as a MPS in a doubled Hilbert space, this neglects the special structure of operators like Hamiltonians: they  are a sum of local terms, $\widehat{H} = \sum_j \widehat{H}_j$, where $\widehat{H}_j$ is localized around site $j$.
If the standard MPS compression algorithm via Schmidt decomposition (i.e., singular value decomposition) is directly applied to operators, this  structure leads to an ill-conditioned thermodynamic limit, in which some of the Schmidt values become infinite.
In 1d, locality gives rise to the following simple property that is the basis for our results. When a 1d system is partitioned into left and right halves, any local operator can be written as:
\begin{equation}
	\widehat{H} = \widehat{H}_L \otimes \1_R + \1_L \otimes \widehat{H}_R + \sum_a h_{ab} \widehat{h}_L^a \otimes \widehat{h}_R^b \,.
	\label{eq:regular_form}
\end{equation}
where $\widehat{h}_{L/R}^a$ run over traceless operators localized on the left/right halves respectively, with coefficients $h_{ab}$.
The first two terms contain the part of the operator supported on strictly one or the other side of the partition, whose magnitude grows linearly with system size, while the third term contains the terms in the operator straddling the partition.
This immediately suggests a compression scheme: approximate the intensive part $h_{ab}$ using a singular value decomposition (SVD), whose rank will determine the bond dimension of the MPO, while leaving the extensive terms untouched.
Doing so manifestly preserves locality, which will allow us to take the limit of infinite system size,  addressing the long-standing problem of efficiently representing operators in the thermodynamic limit~\cite{michel2010schur,chan2016matrix,hubig2017generic,pirvu2010matrix,zaletel2015time}.
This idea was discussed in Ref.~\cite{chan2016matrix}.
However, the coefficients $h_{ab}$, and the resulting singular value spectrum, depend on the choice of operators $\widehat{h}_{L/R}^a$, and \textit{a priori} there is no reason SVD truncation should be optimal. 
In this work we provide the simple `fix' which makes the procedure optimal: the compression is performed
only after the MPO is brought to a \textit{canonical form} in which $\Tr[\widehat{h}_{L/R}^a \widehat{h}_{L/R}^b] \propto \delta_{ab}$. The main result of this work is an compression algorithm for both finite and infinite MPOs (iMPOs) which works for physical Hamiltonians with virtually any type of interaction.

Canonical forms play a crucial r\^ole in MPS compression and many other algorithms, but the naive generalization of the MPS definition to MPOs fails to capture the locality structure of Eq.~\eqref{eq:regular_form} (for this reason, naive SVD truncation of an MPO in the same manner as MPS generically destroys locality.)
We therefore adapt the MPS technology of ``canonicalization'' and compression algorithms to the class of ``first degree" MPOs, which includes short and long ranged Hamiltonians. As a byproduct, we provide a rigorous analysis of the convergence of well-known iterative ``canonicalization'' algorithms for infinite MPSes. We also present a non-iterative compression algorithm specific to the type of iMPOs that occur in DMRG calculations, which exploits their upper-triangular structure to efficiently handle MPOs with bond dimensions on the order of several thousands.
Finally, we detail an intriguing connection to notions from control theory: our compression scheme is a generalization of Kung's method for model-order reduction via balanced truncation\cite{Kung1978}. Whenever possible, we provide rigorous proofs of our statements. Our results apply to both finite MPOs and infinite matrix product operators, although we put more emphasis on the infinite case.

This work is organized into two parts: the first three sections are a ``practical handbook'' for compressing finite MPOs, followed by a more sophisticated treatment of infinite MPOs. The practical handbook starts with an overview of the key ideas of MPO compression in Section \ref{sec:MPO_compression_idea} and Section \ref{sec:MPO_review} reviews standard facts about MPOs to set notation. We then provide all the concepts and algorithms needed for finite MPO compression in Section \ref{sec:finite_MPOs}, along with a quick numerical example. We then transition to infinite MPOs, which require a somewhat more detailed and mathematical treatment. Section \ref{sec:local_iMPOs} specifies the class of ``first degree'' MPOs our method applies to, and shows their Jordan block structure is completely fixed by locality. Sections \ref{sec:iMPO_canonical_form} is devoted to canonical forms and algorithms to compute them. We give the algorithm for compressing infinite MPOs in Section \ref{sec:iMPO_compression}. Section~\ref{sec:error_bounds} reveals the peculiar structure of the operator entanglement of local MPOs, which we use to show the error from our compression scheme is $\varepsilon$-close to optimal. We also show that the change in the sup norm is small under compression. Section~\ref{sec:relation_to_control_theory} goes on to reinterpret our compression algorithm within control theory. We provide a few examples of iMPO compression in Section~\ref{sec:examples}: compressing operators with long-ranged interactions and computing Lanczos coefficients for operator dynamics. We conclude in Section~\ref{sec:conclusions}. The Appendices prove statements from the main text and describe how all elementary algebra operations can be performed on MPOs.

\tableofcontents

\section{The Idea of Compression}
\label{sec:MPO_compression_idea}

To introduce the key ideas, we first present them on the level of operators, then later translate them into the language of MPOs. Consider a local operator $\widehat{H}$ on $N$ sites. As mentioned in the introduction, we can split the system into left and right halves at some bond, which gives the \textbf{regular form} of an operator
\begin{align}
	\label{eq:H_split}
	\widehat{H} &= \widehat{H}_L \otimes \1_R + \1_L \otimes\widehat{H}_R + \sum_{a,b=1}^\chi  \mathsf{M}_{ab} \widehat{h}_L^a \otimes \widehat{h}^a_R\\
	\nonumber
&=  \begin{pmatrix}
		\1_L & \widehat{\v{h}}_L & \widehat{H}_L
	\end{pmatrix}
	\begin{pmatrix}
		1 & & \\
		& \mathsf{M} & \\
		& & 1
	\end{pmatrix}
	\begin{pmatrix}
		\widehat{H}_R &
		\widehat{\v{h}}_R &
		\1_R
	\end{pmatrix}^T, 
\end{align}
where we have introduced vectors of operators $\widehat{\v{h}}_{L/R}$ on the left and right, and the matrix $\mathsf{M}$ keeps track of the coefficients which straddle the cut. This decomposition is not unique --- we can insert basis transformations to the left / right. So, roughly speaking, we will require \eqref{eq:H_split} be a Schmidt decomposition by ensuring that $\mathsf{M}$ is diagonal and that the components of the vectors are mutually orthogonal. One can then compress $\widehat{H}$ by truncating the Schmidt spectrum  --- but there is a slight wrinkle due to locality. 

To understand the extra structure present in a local operator, let's consider an example. Let 
\begin{equation}
	\widehat{H}_\text{e.g} = \sum_{n=1}^N J \widehat{X}_{n} \widehat{X}_{n+1} + K \widehat{X}_n \widehat{Z}_{n+1} \widehat{X}_{n+2} + h \widehat{Z}_n,
	\label{eq:motivational_example}
\end{equation}
where $\widehat{X}_{n}$ and $\widehat{Z}_{n} $ are operators acting on lattice site $n$. $H_\text{e.g.}$ is a linear combination of strings, such as $\cdots  \otimes \1_{1} \otimes \1_{2} \otimes X_3 \otimes X_{4} \otimes \1_{5} \otimes \1_6 \otimes \cdots$. If we split $\widehat{H}_{e.g.}$ across a bond $n$ in the middle, we can write it in regular form (non-uniquely) as
\begin{equation}
\begin{aligned}
	\widehat{\v{h}}_L &= (
	\widehat{X}_n, \widehat{X}_n,  \widehat{X}_{n-1}\widehat{Z}_n
	)\\
	\widehat{\v{h}}_R &= 
	(
		\widehat{X}_{n+1}, \widehat{Z}_{n+1}\widehat{X}_{n+2}, \widehat{X}_{n+1}
	)\\
	\mathsf{M} &= \diag(J, K, K)\\
	\widehat{H}_L &= \sum_{k=1}^n J \widehat{X}_{k-1} \widehat{X}_k + K \widehat{X}_{k-2} \widehat{Z}_{k-1} \widehat{X}_{k} + h \widehat{Z}_k,
\end{aligned}
\label{eq:example_regular_form_operator}
\end{equation}
and with $\widehat{H}_R$ similar to $\widehat{H}_L$.
We see $H_{L/R}$ differs from the $\widehat{h}_{L/R}$ in two respects: first, it's norm diverges linearly with system size (it is \textit{extensive}) and second, it contains terms arbitrarily far from the partition. So in order for the Schmidt compression to be well defined in the thermodynamic limit and preserve locality, it is eminently reasonable to single out $\widehat{H}_{L/R}$ and treat them separately in a Schmidt decomposition.

	This motivates the generalization and modification of canonical forms and Schmidt decompositions for the case of local operators.
\begin{defn}
	A local operator in regular form Eq. \eqref{eq:H_split}, is in \textbf{left canonical form} if
	\begin{equation}
		\IP{\widehat{h}_L^a}{\widehat{h}_L^b} = \delta^{ab}, \quad 0 \le a,b \le \chi,
		\label{eq:left_canonical_operator}
	\end{equation}
	where $\IP{\widehat{A}}{\widehat{B}} := \Tr[\widehat{A}^\dagger \widehat{B}]/\Tr[\1]$ is the inner-product for operators and $\widehat{h}_L^0 := \1_L$. \textbf{Right canonical form} is the same with $L \leftrightarrow R$.
	\label{defn:operator_canonical_form}
\end{defn}
Notice that we have excluded $\widehat{H}_{L/R}$ from the definition. If an operator is both left canonical and right canonical on a bond, then we can  form the ``almost'' Schmidt decomposition by an SVD decomposition $\mathsf{M} = \mathsf{U} \mathsf{S} \mathsf{V}^\dagger$. 
\begin{defn}
	Suppose $\widehat{H}$ is a local operator and suppose it is both left and right canonical at a bond. Then the \textbf{almost-Schmidt decomposition} of $\widehat{H}$ is 
	\begin{equation}
		\widehat{H} = \widehat{H}_L \otimes \1_R + \1_L \otimes\widehat{H}_R + \sum_{a=1}^\chi s_a \widehat{h}_L^a \otimes \widehat{h}^a_R,
	\label{eq:almost_schmidt_decomposition}
\end{equation}
for some real numbers $s_1 \ge s_2 \ge \cdots \ge s_\chi$.
\label{def:almost_Schmidt}
\end{defn}
This is not a true Schmidt decomposition because we have excluded $\widehat{H}_{L/R}$; $\braket{h_{L/R}^a, H_{L/R}}$ is generically non-zero. This seeming imperfection will actually prove to be a feature, leading to concise algorithms and an truncation error $\varepsilon$-close to optimal with respect to \textit{both} the Frobenius and operator (induced) norms (see Sec. \ref{sec:error_bounds}.) Once we know the almost-Schmidt decomposition of an operator, compressing it to a bond dimension $\chi' < \chi$ is easy: simply restrict the sum in Eq. \eqref{eq:almost_schmidt_decomposition} to run from $1$ to $\chi'$ instead of $\chi$. 
Our task is now to translate this idea from the level of operators to concrete computations and algorithms in the language of MPOs.

\section{Review of MPOs}
\label{sec:MPO_review}

Matrix product operators (MPOs) arise in DMRG as a pithy representation of 1d Hamiltonians. This section will review a few essential facts about finite and infinite MPOs for the reader's convenience and to set notation. The well-known construction  of MPOs comes from viewing a Hamiltonian as a finite-state machine~\cite{crosswhite,schollwock2011density}, which we illustrate with an example.

Consider $\widehat{H}_{e.g.}$ from Eq. \eqref{eq:motivational_example} again. All of the Pauli strings needed to generate $\widehat{H}_{e.g.}$ can be described by a finite state machine, shown in Fig. \ref{fig:MPO_automata}. (We will see below this machine can be improved.) The MPO itself is the adjacency matrix of the finite state machine: 
\begin{equation}
	\W_\text{e.g} = \left(\begin{array}{@{}c|ccc|c@{}}
		\1 & \widehat{X} &\widehat{X} & 0  & h \widehat{Z}\\ \hline
		   &  0          &  0      & 0  & J\widehat{X}\\
		   &  0          &  0      & \widehat{Z}  & 0 \\ 
		   &  0    & 0  & 0 & K \widehat{X} \\ \hline
		  &   &   & & \1\\
	\end{array}
\right) \,,\,
\label{eq:motivational_MPO}
\end{equation}
where the hat on the matrix $\W_\text{e.g}$ indicates that its components are operator-valued. The Hamiltonian on the open chain $[1,N]$ then has the compact representation
\begin{equation}
	\widehat{H}_\text{e.g.} = \v{\ell} \underbrace{\W_\text{e.g.} \W_\text{e.g.} \cdots \W_\text{e.g.}}_{N \text{ matrices}} \v{r},
	\label{eq:motivational_example_MPO_form}
\end{equation}
where $\v{\ell} := (1 \,\; \v{0}_3 \; 0)$ and $\v{r}^\dagger := (0 \,\; \v{0}_3 \; 1)$ are c-number vectors, also called ``boundary conditions''. They encode the instructions ``start at node $i$'' and ``end at node $f$''. The multiplication of MPOs in \eqref{eq:motivational_example_MPO_form} is a matrix product in the \textit{auxiliary space} and a tensor product in the \textit{physical space}, such that physical indices of the $n$th matrix in \eqref{eq:motivational_example_MPO_form} acts on lattice site $n$.

\begin{figure}
	\includegraphics{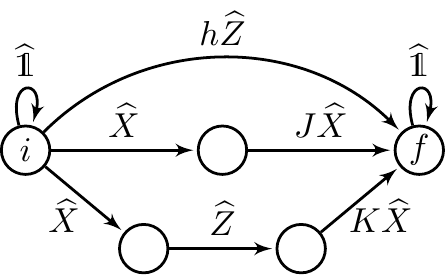}
	\caption{A finite-state machine that generates Eq. \ref{eq:motivational_example}. } 
\label{fig:MPO_automata}
\end{figure}

The example above is a so-called \textbf{infinite MPO (iMPO)}: the whole operator only depends on one matrix $\W$, regardless of the system size. A regular \textbf{MPO} is made of inhomogenous matrices
\begin{equation}
\widehat{H} =  \v{\ell} \W^{(1)}\W^{(2)} \cdots \W^{(N)} \v{r}  \,.
\end{equation}
where $\W^{(1)},  \dots, \W^{(N)}$ are distinct matrices and need not be square, with $\W^{(n)}$ of size $\chi^{(n-1)} \times \chi^{(n)}$ so that matrix multiplication makes sense. 

In a local Hamiltonian, each term begins and ends with strings of identities, which gives rise to the first two terms in the regular form of an operator, Eq. \eqref{eq:H_split} above. This property is encoded by the distingished nodes $i$ and $f$ in the finite state machine Fig. \ref{fig:MPO_automata}, and is reflected by the block structure of the MPO \eqref{eq:motivational_MPO}. We therefore restrict ourselves to a special class of (i)MPOs which manifestly maintain this local structure. 
\begin{defn}
	An (i)MPO is in \textbf{regular form} if each matrix has the block upper triangular structure
	\begin{equation}
		\W
		= \begin{pmatrix}
			\1 & \CC & \DD\\
			0 & \AA & \BB\\
			0 & 0 & \1
		\end{pmatrix} \,,\,
		\label{eq:regular_form_MPO}
	\end{equation}
	where the first and last blocks have dimension $1$ for both rows and columns.\footnote{Structurally, $\DD$ is a single operator, and $\CC$ and $\BB$ are operator-valued vectors.} Furthermore, we require that the boundary conditions are of the form
	\begin{equation}
		\v{\ell} = \begin{pmatrix}
			1 & * & *
		\end{pmatrix},
		\; \v{r}^\dagger = \begin{pmatrix}
			* & * & 1
		\end{pmatrix}
		\label{eq:standard_boundary_conditions}
	\end{equation}
	where $*$ denotes an arbitrary block. 
	\label{defn:regular_form}
\end{defn}
The shape of $\W$ in \eqref{eq:regular_form} is thus entirely determined by the shape of $\AA$. For iMPOs, $\AA$ is a square matrix of size $\chi \times \chi$ where $\chi$ is called the \textit{bond dimension}. (Some authors instead define the bond dimension as the size of $\W$, $\chi+2$.) Operators in regular form are represented by (i)MPOs in regular form, and all (i)MPOs in this work will be in regular form.

The usual diagram notation for tensor networks cannot capture the block structure of \eqref{eq:regular_form_MPO}, so we simply work with equations, making them index-free whenever possible. In the rare exceptions, the auxiliary space is indexed by Latin letters starting from zero to highlight the block structure: $a,b,c \dots = 0 ; 1, 2, \dots \chi; \chi+1$. 

The class of (i)MPOs in regular form is closed under addition, scalar multiplication, and operator multiplication. These constructions are computationally straightforward and more-or-less well-known. They are collected in Appendix \ref{app:elementary_operations} for the reader's convenience. 

Physical operators admit many distinct MPO representations; MPOs have a large \textit{gauge freedom}. An operator $\widehat{H} = \v{\ell} \W^{(1)} \cdots \W^{(N)}\v{r}$ can also be represented by $\widehat{H} = \v{\ell}' \W^{(1)'} \cdots \W^{(N)'} \v{r}'$ whenever there are matrices $L^{(0)},\dots, L^{(N)}$ that satisfy the interlacing conditions
\begin{equation}
	\W^{(n)'}L^{(n)} = L^{(n-1)} \W^{(n)}, 
	\v{\ell}' L^{(0)} = \v{\ell},\\
	\v{r}' = L^{(N)} \v{r}.
\label{eq:gauge_MPO}
\end{equation}
In the infinite case, all the $L^{(n)}$'s are equal to some $L$, so the gauge transformation resembles a similarity transform:
\begin{equation}
	\W'L = L \W.
	\label{eq:gauge_iMPO}
\end{equation}
To preserve the regular form \eqref{eq:regular_form_MPO}, all gauge matrices must be block triangular,
\begin{equation}
    L =  \left( \begin{array}{@{}c|c|c@{}}
				1 & \v{t} & r  \\ \hline
				0 & \mathsf{L} & \v{s} \\ \hline
				0 & 0 & 1
			\end{array} \right). \label{eq:gauge_form}
\end{equation}
Note that $L$ need not be square, but only shaped to be compatible with \eqref{eq:gauge_MPO} or \eqref{eq:gauge_iMPO}\footnote{Some authors define a less general class of invertible gauge transformation $\W' = L \W L^{-1}$, which precludes $L$ from changing the bond dimension.}. In particular, $\W'$ and $\W$ may have different bond dimensions. 

For instance, we can gauge transform $\W_\text{e.g.}$ to
\begin{equation}
	\W'_\text{e.g} = \left(\begin{array}{@{}c|cc|c@{}}
		\1 & \widehat{X} & 0  & h \widehat{Z}\\ \hline
		   &  0               & Z  & J\widehat{X}\\
		   &  0    & 0  & K \widehat{X} \\ \hline
		  &   &   & \1\\
	\end{array}
	\right)
	\label{eq:efficient_motivational_MPO}
\end{equation}
which encodes $\widehat{H}_\text{e.g.}$ more simply than $\W_\text{e.g}$. This previews our end goal: given a MPO (and an error tolerance), how do we compute the smallest MPO that encodes the same operator?

\section{Finite MPO Compression}
\label{sec:finite_MPOs}

Now that we have reviewed MPOs, we give a ``practical handbook'' for compressing finite matrix product operators. We proceed expeditiously: first upgrading canonical forms and ``sweeps'' to MPOs, then giving the compression algorithm, and lastly a brief numerical example. Readers familiar with matrix product states will find that our compression method amount to a small --- yet conceptually significant --- modification of standard MPS algorithms. As the subsequent treatment of iMPOs will revisit all the concepts here in greater detail, many technical details are postponed for later sections.

\subsection{MPO Canonical Forms}
\label{subsec:canonical_form_finite_case}
Just as with matrix product states, the main tool for manipulating matrix product operators is the idea of \textit{canonical forms}. They are choices of gauge that make the rows or columns of the matrix $\W$ orthogonal, an essential step for controlling the errors from compression or carrying out the DMRG algorithm.

We define canonical forms in terms of a condition on the matrix itself, then show that canonical MPOs represent canonical operators. 
\begin{defn}
	An MPO $\widehat{H} = \v{\ell}\W^{(1)}\cdots \W^{(N)}\v{r}$ is in \textbf{left canonical form} if, for each $n >1$, the upper left block of $\W^{(n)}$, 
		\begin{equation}
			\V^{(n)} := 	\begin{pmatrix}
				\1 & \CC^{(n)}\\
				0 & \AA^{(n)}
			\end{pmatrix} \,,\,
		\end{equation}
	    has orthonormal columns: 
		\begin{equation}
			\forall b,c \leq \chi^{(n)},	\sum_{a=0}^{\chi} \IP{\W_{ab}^{(n)}}{\W_{ac}^{(n)}} = \delta_{bc}.
		\end{equation}
		For $n=1$ we instead require $\IP{[\v{\ell} \W^{(1)}]_{b}}{[\v{\ell} \W^{(1)}]_{c}} = \delta_{bc}$ for all $b,c \le \chi^{(1)}$. 
	\label{defn:MPO_left_canonical_form}
	\end{defn}
An MPO is in \textbf{right canonical form} if, and only if, its \textit{mirror}\footnote{A MPO is mirrored by (I) transposing each matrix $\W^{(n)}$, (II) exchanging $\v{\ell}^\dagger \leftrightarrow \v{r}$, (III) reversing all auxiliary indices ($0 \leftrightarrow \chi + 1$, $1,\dots ,\chi \leftrightarrow \chi, \dots, 1$), and (IV) reversing the physical positions.} is in left canonical form.
Right canonical forms are always directly analagous, so we focus on the left-handed case.

Let us now see why left canonical MPOs describe left canonical operators, in the sense of Defn. \ref{defn:operator_canonical_form}.\footnote{Actually the two definitions are entirely equivalent, but we show only one implication for concision.} If we split an MPO in left canonical form at a bond $n$, then we can multiply the matrices together to put the operator into regular form:
\begin{align}
	\widehat{H}_W 
	\ &=\ \left(\v{\ell} \W^{(1)}\cdots \W^{(n)}\right)
	\left( \W^{(n+1)} \cdots \W^{(N)} \v{r} \right) \nonumber \\
	\ &=\ \begin{pmatrix}
		\1_L^{(n)} & \widehat{\v{h}}_L^{(n)} & \widehat{H}_L^{(n)}
	\end{pmatrix}
	\begin{pmatrix}
		\widehat{H}_R &
		\widehat{\v{h}}_R &
		\1_R
	\end{pmatrix}^T. 
\end{align}
Standard form for MPOs implies that the vectors of operators are related by the recursion relation
\begin{equation}
	\begin{pmatrix}
		\1_L^{(n-1)} & \widehat{\v{h}}_L^{(n-1)} 
	\end{pmatrix} \V^{(n)} 
    = \begin{pmatrix}
		\1_L^{(n)} & \widehat{\v{h}}_L^{(n)}. 
	\end{pmatrix}
\end{equation}
If the MPO's are in regular form, then $\V^{(1)},\dots \V^{(n)}$ have orthonormal columns, so by induction,
\begin{align}
	\IP{\widehat{h}_{L,a}}{\widehat{h}_{L,b}} = \delta_{ab},\,  0\le a,b \le \chi^{(n)} \,,
\end{align}
where $\widehat{h}_{L,0} := \1_L$. Left canonical form for MPOs therefore ensures that all components but the last of the vector $(\1_L , \widehat{\v{h}}_L , \widehat{H}_L)$ are orthonormal --- and imposes no constraint whatsoever on $\widehat{H}_L$. So MPO canonical form implies operator canonical form, Defn. \ref{defn:operator_canonical_form}.

Now that we have defined canonical forms for MPOs, our next task is compute them. One can always find a gauge transform, Eq. \eqref{eq:gauge_MPO}, to bring a finite MPO to left canonical form and, just as in the MPS situation, we can compute the change of gauge via a QR decomposition. Suppose $\W$ is an MPO in regular form of dimensions $(1+\chi+1)$ by $(1+\chi'+1)$ with $\V$ given by \eqref{eq:regular_form_MPO}. If we group indices as $V_{(\alpha a) b}$, where $0 \le \alpha < d^2$ indexes the standard orthonormal basis of $\A$, then $\V$ can be interpreted as a matrix with shape $d^2(1+\chi) \times (1+\chi')$.
Performing a (thin) QR decomposition gives	
\begin{equation}
		\w = 
		\begin{pmatrix}
		\1 & \CC\\
		0 & \AA
		\end{pmatrix}
		\overset{QR}{=} 
		 {\begin{pmatrix}
			\1 &\CC'\\
			0 & \AA'
		\end{pmatrix} }
		\begin{pmatrix}
			1 & \v{t}\\
			0 & \mathsf{R}
		\end{pmatrix} \,,
	\label{eq:QRdef}
	\end{equation}
	where $\mathsf{R}$ is upper-triangular. 
\begin{defn}
	Define the \textbf{block-respecting $\widehat{QR}$ decomposition} of $\W$ as 
	\begin{equation}
	\widehat{QR}[\W] = \Q R
\end{equation}
with
\begin{equation}
	\Q := \begin{pmatrix}
			\1 & \CC' & \DD\\
			0 & \AA' & \BB\\
			0 & 0 & \1
		\end{pmatrix}, \,
		R := 
		\begin{pmatrix}
			1 & \v{t} & 0\\
			0 & \mathsf{R} & 0\\
			0 & 0 & 1
		\end{pmatrix} 
		\label{eq:block_QR_decomposition}
	\end{equation}
	where the upper-left block comes from \eqref{eq:QRdef}. Therefore, $\Q$ is in left canonical form, and $R$ is upper-triangular. \label{def:QR}
\end{defn}

With this, we can define a sweeping procedure to put a finite MPO into left canonical form.
\begin{align}
	&\v{\ell} \W^{(1)} \W^{(2)}\W^{(3)} \cdots\\
	\stackrel{QR}{=}\ &\v{\ell} \left[\Q^{(1)} R^{(1)}\right] \W^{(2)}\W^{(3)} \cdots\\
	\stackrel{\phantom{QR}}{=}\ &\v{\ell} \Q^{(1)} \left[R^{(1)} \W^{(2)}\right]\W^{(3)} \cdots\\
	\stackrel{QR}{=}\ &\v{\ell} \Q^{(1)} \left[\Q^{(2)}R^{(2)} \right] \W^{(3)} \cdots\\
	\stackrel{\phantom{QR}}{=}\ &\v{\ell} \Q^{(1)} \Q^{(2)}\left[R^{(2)}\W^{(3)}\right] \cdots\\
	\label{eq:explicit_QR_sweep}
\end{align}
By the definitition of the block QR decomposition, the first $1+\chi^{(n)}$ columns of each $\Q^{(n)}$ are indeed orthonormal. Moreover, $\{R^{(1)}, \dots, R^{(N)}\}$ specifies a gauge transform from $\{\v{\ell},W^{(n)},\v{r}\}$ to $\{\v{\ell},Q^{(n)},R^{(N)}\v{r}\}$. We summarize the procedure as Algorithm \ref{alg:left_can_alg_finite}.

\begin{figure}
\begin{algorithm}[H]
	\caption{Left Canonical Form for finite MPOs}
	\label{alg:left_can_alg_finite}
	\begin{algorithmic}[1]
		\Procedure{MPOLeftCan}{$\{\v{\ell},\{\W^{(n)}\}_{n=1}^N ,\v{r}\}$}
		\State $R^{(0)} \gets \v{\ell}$
		\For{$n \in [1,N]$}
			\State $(\Q^{(n)}, R^{(n)}) \gets \widehat{QR}[R^{(n-1)} \W^{(n)}]$
			\Comment{Eq.~\eqref{eq:QRdef}}
		\EndFor
		\State	\textbf{return} $\{\v{\ell}, \{\Q^{(n)}\}_{n=1}^N,  R^{(N)} \v{r}\}, \{R^{(n)}\}$
		\EndProcedure		
	\end{algorithmic}
\end{algorithm}
\end{figure}

Note that Algorithm \ref{alg:left_can_alg_finite} is almost identical to a standard ``right-sweep'' that brings an MPS to its left-canonical form, except that the block-respecting $\widehat{QR}$ decomposition is used \textit{in lieu} of normal QR. 

\subsection{Finite MPO Compression}
\label{subsec:finite_MPO_compression}

We can now give the compression procedure for finite MPOs. Suppose we have a finite MPO on sites $[1,N]$. We first bring the whole chain to right canonical form
\[
	\widehat{H}_W = \v{\ell} \, \W_R^{(1)} \, \W_R^{(2)} \dots \W_R^{(N)} \, \v{r} \,, 
\]
by the mirror of Algorithm~\ref{alg:left_can_alg_finite}. To truncate at bond $(n,n+1)$, we first bring sites $[1,n]$ to left canonical form
\begin{equation}
\begin{aligned}
      &\v{\ell} \, \W_R \, \W_R \, \cdots\,  \W_R \, \W_R  \, \cdots \W_R  \,\v{r} \nonumber \\
    = \,& \v{\ell} \, \W_L  \, {R}  \, \W_R \, \cdots \, \W_R \, \W_R  \cdots  \W_R \, \v{r} \nonumber \\
     \vdots\,\,& \nonumber \\
	 = \,& \, \v{\ell}\, \underbrace{\W_L   \,\W_L \, \cdots  \, \W_L}_{\text{sites } [1,n]}\, R^{(n)} \, \underbrace{{\W_R} \, \cdots \,  \W_R}_{\text{sites } [n+1,N]} \, \v{r}.
\end{aligned}
	 \label{eq:mixed0}
 \end{equation}
(Superscripts have been suppressed for clarity.) The block structure of $R^{(n)}$ is fixed by block QR decomposition, Eq. \eqref{eq:block_QR_decomposition}, and we can always decompose it as\footnote{Here and below, we use the short hand $\mathrm{diag}(1, \mathsf{A}, 1) = A$ for block diagonal matrices, with sans-serif letters for the middle block.}
\begin{equation}
   	R^{(n)}
		= M R' \,,\, 
		M =
		\begin{pmatrix}
			1 & 0 & 0\\
			0 & \mathsf{M} & 0\\
			0 & 0 & 1
		\end{pmatrix}
		R' = \begin{pmatrix}
			1 & \v{t} & 0\\
			0 & \Id_{\chi} & 0  \\
			0 & 0 & 1
		\end{pmatrix}.
		\label{eq:decompose_Ln}
\end{equation}
We then perform an singular value decomposition of $M$ and write
\begin{equation}
   M  = U S V^{\dagger} \,,\, 
   \mathsf{S} = \mathrm{diag}(s_1\ge s_2 \ge \dots \ge s_{\chi}) \,, \label{eq:SVD_C}
\end{equation} 
where the middle blocks are unitary: $\mathsf{U}^\dagger \mathsf{U} = \mathsf{V}^\dagger \mathsf{V} = \Id_{\chi}$. Combining \eqref{eq:mixed0} through \eqref{eq:SVD_C}, we obtain 
\begin{equation}
	\widehat{H}_W =  \cdots \W_{L}^{(n-1)}   \Q^{(n)} S \P^{(n+1)}  \W_R^{(n+1)} \cdots
	\label{eq:mixed_diag}
\end{equation}
where 
\begin{equation}
	\Q^{(n)}  := \W_{L}^{(n)}  U, \quad \P^{(n+1)}  :=   V^\dagger R' \W_{R}^{(n+1)}
\end{equation} 
are still left and right canonical, respectively.\footnote{Right-canonical form is preserved because $R'$ only affects the top row while leaving the bottom $\chi+1$ rows orthonormal, as required for right-canonical form.} 
Therefore Eq.~\eqref{eq:mixed_diag} is left canonical on the left, right canonical on the right, and the central matrix $S$ is diagonal --- so it is an almost-Schmidt decomposition, Eq.~\eqref{eq:almost_schmidt_decomposition}, as desired.

We can now reduce the bond dimension by dropping the smallest singular values, as well as the corresponding columns of $\Q$ and rows of $\P$. The compression scheme is summarized in Algorithm~\ref{alg:MPO_compression}. The truncation is combined with a left-sweep, so the returned MPO is left canonical.

	Due the presence of ``sweeps'' in the algorithms, it is not immediately clear how to generalize them to the infinite case, nor is the precise relation to truncations by ``true'' Schmidt decompositions clear. We will address these points in Sections \ref{sec:iMPO_compression} and \ref{sec:error_bounds} below. We note that our compression scheme is $\varepsilon$-close to optimal, in a sense we make clear below.

\begin{figure}
\begin{algorithm}[H]
	\caption{MPO Compression}
	\label{alg:MPO_compression}
	\setstretch{1.35}
	\begin{algorithmic}[1]
		\Procedure{Compress}{$\{\v{\ell}, \W^{(n)}, \v{r}\},\eta$} \Comment{Cutoff $\eta$}
		\State $\v{\ell}, \{\W_R^{(n)}\} , \v{r} \gets \textsc{RightCan}[\v{\ell},\{  \W^{(n)} \},  \v{r} ]$ 
		\State $R \gets \v{\ell}$
		\For{$n = 1, \dots, N-1$}
		    \State $(\W_L^{(n)}, R) \gets \widehat{QR}[R \W_R^{(n)}]$\Comment{Eq. \eqref{eq:block_QR_decomposition}}
			\State $(M, R') \gets R$ \Comment{ Eq. \eqref{eq:decompose_Ln}}
			\State $(U, S, V^\dagger) \gets \textsc{SVD}[M]$
			\State $\chi' \gets  \max \{a: s_{a} > \eta \}$; $I \gets \st{0,1,\dots,\chi',\chi+1}$.
			\State $\Q^{(n)} \gets [\W_{L}^{(n)}  U]_{0:\chi+1,I}$
			\State $R \gets [V^\dagger R']_{I,0:\chi+1}$
		\EndFor
		\State  $(\Q^{(N)}, R) \gets \widehat{QR}[R \W_R^{(N)}]$
		\State \textbf{return} $ \v{\ell}, \{ \Q^{(n)} \}, 
		 R\v{r}$
		\EndProcedure	
	\end{algorithmic}
\end{algorithm}
\end{figure}

\subsection{An Example}

To demonstrate the utility of our compression scheme, we give a brief numerical example. Specifically, we compress a Hamiltonian with long-ranged interactions and show our method is quite comperable to the standard ``MPS'' compression technique, i.e. treating the operator like an MPS in a doubled Hilbert space. We note, however, that our ``MPO'' compression technique outscales the naive ``MPS'' technique because it contains only intensive values in the entanglement spectrum.

\begin{figure}
	    \centering
	    \includegraphics{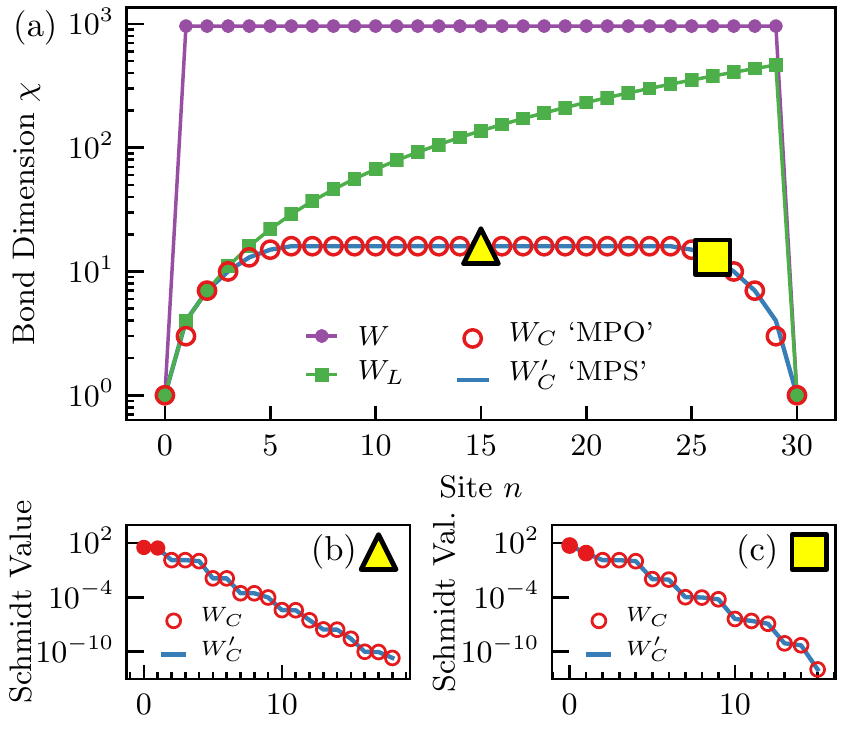}
		\caption{Compression of a finite MPO representing the Hamiltonian~\eqref{eq:H_1_example}. (a) The bond dimensions for: $\W$, the naive MPO representation of $H_1$; $\W_L$, the left-canonical representation by Alg. \ref{alg:left_can_alg_finite}; $\W_C$, the compressed MPO by Alg. \ref{alg:MPO_compression}, and $\W_C'$, the result of the standard MPS compression. (b,c) The Schmidt spectra of $\W_C'$ and almost-Schmidt spectra of $\W_C$ at the sites denoted by the triangle and square, respectively. The numerical precision was taken to be $\varepsilon_{\text{can}} = 10^{-12}$ for canonicalization and $\varepsilon_{C} = 10^{-4}$ for compression.} 
		\label{fig:mpo}
	\end{figure}

	It is well known that a two body interaction $V\left( i-j \right) \O_i \O_j$, where $V(r) = \sum_{j=1}^{\chi} a_j \lambda_j^r$ is a sum of $\chi$ exponentials has an exact MPO representation with bond dimension $\chi$.\footnote{See Eq. \ref{eq:first_degree_example} for an example.} Our algorithm will \textit{automatically} discover this structure even if the MPO is initially presented in a non-optimal form.
	
	We therefore select a more challenging example with power-law interactions:
\begin{equation}
	H_1 = \sum_{k,n,m=1}^N J_{kn} J_{nm} \widehat{Z}_k \widehat{Z}_n \widehat{Z}_m + J'_{nm} \widehat{Z}_n \widehat{Z}_m 
	\label{eq:H_1_example}
\end{equation}
where $J_{nm} = \n{n-m}^{-2}$ and $J'_{nm} = \n{n-m}^{-4}$.  In \eqref{eq:H_1_example} and below, we include a three-body term to test our algorithms beyond the domain of two-body Hamiltonians, which was addressed in previous work~\cite{zaletel2015time}. The results are shown in Fig. \ref{fig:mpo}.

The compression in Fig. \ref{fig:mpo}  follows Algorithm \ref{alg:MPO_compression}, and takes place in two stages. First, a right-sweep with block-QR decomposition (Algorithm \ref{alg:left_can_alg_finite}) performs a preliminary bond reduction: it only reduces bond dimensions if columns are linearly dependent.
Then a left-sweep of almost Schmidt value truncation results in a more significant compression. We compare the resulting bond dimensions with those obtained from a standard MPS compression (which does not preserve the block structure) and find them essentially identical. In fact, the whole entanglement spectrum from the almost-Schmidt decomposition closely matches the one from the true Schmidt decomposition. The only difference is the first two Schmidt values are extensive and not present in the almost-Schmidt spectrum.\footnote{Such an precise match of the spectra holds only for simple Hamiltonians; in general, however, we have the interlacing relations~\eqref{eq:interlacing}.}. We return to this point in Section \ref{sec:error_bounds} below, when we discuss operator entanglement.

This concludes our discussion of compressing finite MPOs. We now move on to infinite matrix product operators.

\section{Local Infinite  Matrix Product Operators}
\label{sec:local_iMPOs}

We now transition to infinite matrix product operators. The discussion proceeds analogously to the finite case above. However, working with infinite operators requires additional care, and our discussion will become corresponding more precise and detailed. Indeed, before we can define and compute canonical forms, we must examine exactly what it means for an infinite MPO to be local. We will precisely define and characterize a good class of operators --- operators of ``first degree'' --- which (1) includes local physical Hamiltonians and (2) are described by ``local" iMPOs.

Locality is a non-trivial requirement for a physical operator. It is accompanied by a host of properties, such as an extensive norm, and that spatially-separated terms should commute. For Hamiltonians, perhaps the most important consequence of locality, however, is the existence of thermodynamic limits: the ground state energy and other thermodynamic observables grow as first order polynomials in the size of the system, i.e. extensively. We would like to be able to work with and compress all such local Hamiltonians. As characterizing the class of iMPOs with extensive ground states is quite difficult, we will instead work with a class of operators characterized by an extensive norm, which includes virtually all local physical Hamiltonians. As an analogy, just as local Hamiltonians of interest contribute a constant amount of energy per site, we work with operators that are described be a constant amount of ``information per site". We will often call such operators ``local as iMPOs" or simply ``local".

\subsection{Norm and Transfer Matrices}

The norm of an operator is a starkly different object than that of a state. States, of course, are normalized, so the norm of a generic iMPS should be $1$ in the limit $N \to \infty$. This is rooted in the iMPS transfer matrix, where a standard result~\cite{mcculloch2007density} shows that the largest eigenvalue is non-degenerate with eigenvalue $\lambda=1$, after normalization. In contrast, the space of operators admits many different norms, and this choice must often be resolved by physical considerations. When one is interested in ground state energies and static expectation values, the sup norm is usually the correct choice. However, for questions of quantum dynamics in the common setting of infinite temperature, the Frobenius (aka Hilbert-Schmidt) norm is the natural one, which is relatively easy to compute.

In this work, our ``default" norm will be a Frobenius norm per unit length. For a translation-invariant operator $\widehat{H}$, call its restriction to $N$ sites $\widehat{H}_N$ and define 
\begin{equation}
	\dn{\widehat{H}}_F^2 := \lim_{N\to\infty} \IP{\widehat{H}_N}{\widehat{H}_N} = \lim_{N\to \infty} \frac{\Tr[\widehat{H}_N^\dagger\widehat{H}_N]}{\Tr[\1^N]}.
	\label{eq:extensive_norm}
\end{equation}
where the subscript ``$F$'' is a reminder that this is essentially the Frobenius norm.\footnote{We note that this norm is not submultiplicative: see Appendix D.} The norm is normalized so that $\dn{I}_F = 1$, unlike the usual Frobenius norm where the norm of the identity is the dimension of the space. We will be interested in iMPOs where this norm is extensive. Despite this choice of norm, we prove in Section \ref{sec:error_bounds} that our compression algorithm behaves well with respect to the sup norm as well --- so our choice of norm is suitable for both dynamics and statics applications. We will therefore refer to \eqref{eq:extensive_norm} as \textit{the} norm of an operator in this work. 

To compute the norm of an operator expressed as an iMPO, we must recall the definition of the \textbf{transfer matrix}. The space of single site operators forms an algebra $\A$ with an inner product $\IP{\cdot}{\cdot}$ such that $\IP\1\1 = 1$. We fix an orthonormal basis $\A = \operatorname{span}\setc{\widehat{O}_\alpha}{0 \le \alpha < d}$ (indexed by Greek letters $\alpha, \beta,\dots$) starting with $\widehat{O}_0 = \1$. For example, one might take the algebra of spin-$\tfrac{1}{2}$ operators with the basis of Pauli operators $\{\1, \widehat{X}, \widehat{Y}, \widehat{Z}\}$. Then the real algebra over this basis gives Hermitian operators and the complex algebra gives all operators. 
For Fermions, $\Tr[\widehat{c}^\dagger \widehat{c}] = \Tr[\widehat{n}^\dagger \widehat{n}] = 1$, so one orthonormal basis is $\{\1, \sqrt{2}\widehat{c}^\dagger, \sqrt{2} \widehat{c}, \widehat{Z} = \1 - 2 \widehat{n}\}$ with complex coefficients. In such a single site basis, any operator-valued matrix $\W$ becomes equivalent to an vector of c-number matrices $\{W_\alpha \}$ defined via
\begin{equation} 
\W = \sum_{\alpha} \widehat{O}_\alpha W_\alpha  \,,\quad\, (W_\alpha)_{ab} := \IP{\widehat{O}_\alpha}{\W_{ab}}.
\label{eq:Walpha} \end{equation}

\begin{defn}
	Suppose $\W$ is an operator-valued square matrix that acts on the auxiliary vector space $\mathcal{V}$ of dimension $\chi$. Then the $\W$-\textbf{transfer matrix} is a linear operator on $\mathcal{V} \otimes \mathcal{V}$, defined as
	\begin{equation}
		T_W :=  \sum_\alpha \overline{W}_\alpha \otimes W_\alpha,
		\label{eq:defn_transfer_matrix} \,
	\end{equation}
where the bar denotes complex conjugation. 
\end{defn}It is sometimes convenient to identify $\mathcal{V} \otimes \mathcal{V}$ with the space of square matrices. Then $T_W$ acts on matrices $X \in \mathcal{V} \otimes \mathcal{V}$ on the left by
	\begin{equation} 
      X T_W = \sum_\alpha W_\alpha^{\dagger} X W_\alpha, \label{eq:TW_matrix_notation}
	\end{equation}
	where $W^\dagger_\alpha$ is the Hermitian conjugate as usual. By Choi's Theorem~\cite{choi1975completely}, transfer matrix are always postive operators: whenever $X$ is positive semi-definite, so is $X T_W$. 

The transfer matrix gives a simple formula for the norm of an operator in terms of its MPO representation. On a lattice of $N$ sites, the norm squared is
\begin{equation}
	\dn{\widehat{H}_N}_F^2  =  ({\v{\ell}} \v{\ell}) \left(T_W\right)^N  (\v{r} \v{r}) \,.
	\label{eq:norm_transfer_matrix}
\end{equation} 
where $\v{\ell} \v{\ell} := \overline{\v{\ell}} \otimes \v{\ell}$ and $\v{r} \v{r} := \overline{\v{r}} \otimes \v{r}$.

The only way that \eqref{eq:norm_transfer_matrix} can give rise to an extensive norm, \eqref{eq:extensive_norm}, is if the iMPO transfer matrix $T_W$ \eqref{eq:defn_transfer_matrix} is dominated by some nontrivial Jordan block with eigenvalue $1$.

To build intuition, we first consider the simple example
\begin{equation}
    \widehat{H} = \sum_i \DD_i 
	\text{ with } \W = \begin{pmatrix} \1 & \DD \\ 0 & \1 \end{pmatrix} \,,\, 
\end{equation}
such that $\IP{\1}{\DD}=0$ and $\IP{\DD}{\DD}= \rho$. Of course, $\dn{H_N}_F^2 = N \rho$. Then the transfer matrix $T_W$ is a $4 \times 4$ matrix
\begin{equation}
    T_{W} =
    \begin{pmatrix} 
    1 & 0 & 0 & \rho \\
    0 & 1 & 0 & 0 \\
    0 & 0 & 1 & 0\\
    0 & 0 & 0 & 1
    \end{pmatrix} \,
    \sim 
    \left(
    \begin{array}{cc|c|c}
        1 & \rho & 0 & 0 \\
    0 & 1 & 0 & 0 \\ \hline
    0 & 0 & 1 & 0\\ \hline
    0 & 0 & 0 & 1
    \end{array}
    \right),
    \label{eq:T_W_example}
\end{equation}
where $\sim$ denotes a similarity transform (but not a gauge transform). Taking powers $T_{W}^N$, \eqref{eq:norm_transfer_matrix} shows that the Jordan block is clearly responsible for the extensive norm~\footnote{The other two blocks do not contribute to the extensive norm, but can become relevant when the operator has an extensive trace; see Appendix \ref{app:proofs} for details.}. This behavior should be generic; all local operators should have an extensive norm. However, not all iMPOs in regular form satisfy \eqref{eq:extensive_norm} because, even though such a Jordan block always exists, it may not dominate the norm \eqref{eq:norm_transfer_matrix} as $N \to \infty$. The remedy is to precisely define the what it means for iMPOs to be ``local as an iMPO''.

\subsection{First Degree Operators}
\label{subsec:first_degree_operators}

This section will carefully define the class of first degree operators. Before giving the mathematical definition, let us provide some motiviation.

A natural class of iMPOS which are local by any reasonable criterion are those whose finite state machines do not involve any loops, such as Fig. \ref{fig:MPO_automata}. Such iMPOs represent operators where each term has identities on all sites except on a contiguous block of at most $\chi$ sites. This structure implies that the ground state must be extensive. These operators can be readily characterized as follows:
\begin{defn}
   An iMPO $\W$ is \textbf{strictly local} if its $\AA$ block is strictly upper-triangular. 
\end{defn}
However, this definition has important drawbacks: the property of $\AA$ being strictly upper-triangular is neither  gauge invariant, nor robust under small perturbations  --- which inevitably arise as numerical errors from compression. This definition is therefore an inadequate starting point to define a good class of local operators.

As mentioned earlier, the cure is actually to consider a \textit{larger} class of operators. We will define this class first in terms of the transfer matrix and we will show by the end of the section that these are the operators with extensive norms \eqref{eq:extensive_norm}. Specifically, we make a condition on the spectral properties of the $\AA$ block of their iMPO representation. 
\begin{defn}
	Suppose $\W$ is an iMPO in regular form \eqref{eq:regular_form_MPO}, and $T_A$ is the transfer matrix corresponding to its $\AA$ block. 
	$\W$ is called \textbf{first degree} if $\n{\lambda} < 1$ for all eigenvalues $\lambda$ of $T_A$.\footnote{We note our definition is closely akin to the idea of an ``interaction'' in the mathematical physics literature. See e.g. Chapter 6 of \cite{operator_algebras}.} 
	\label{def:first_degree_operator}
\end{defn}
The name ``first degree'' anticipates Prop. \ref{prop:dominant_jordan_block}, which states that first degree operators have extensive norm $\dn{\widehat{O}_N}_F^2 = O(N)$. Physically, this definition amounts to the requirement that there is a decomposition \eqref{eq:regular_form} where the operators $h^a_{L/R}$ fall off with exponentially-localized tails. 

\begin{table}
	\centering
	\begin{ruledtabular}
		\begin{tabular}{lccc}
			Property	& \textbf{SL} & \textbf{FD} & \textbf{Gen}\\ \colrule
	Leading eig.val. of $T_A$ & $\lambda = 0$ & $\lambda < 1$ & $\lambda < \infty$\\		
	Norm $\dn{H_N}_F^2$ & $\sim N$ & $\sim N$  & $\sim \lambda^{N}$\\
	Open Set & \xmark & \cmark  & \cmark\\	
	Closed under commutation. & \cmark & \xmark & \cmark\\
	Canonical form (see Sec. \ref{sec:iMPO_canonical_form}) & \cmark & \cmark  & \xmark\\	
\end{tabular}
\end{ruledtabular}
	\caption{Properties of different set of iMPOs: strictly local (SL), first degree (FD), and the set of general (Gen) iMPOs without restriction.  }
	\label{tab:locality_hierachy_table}
\end{table}

By Definition~\ref{def:first_degree_operator}, the set of first degree iMPOs is a topologically open set, and is therefore numerically robust, but also a superset of strictly local iMPOs. Indeed, strict locality implies that the $T_A$ matrix is also strictly upper-triangular and thus nilpotent (all $\lambda = 0$). To give an example of an first degree iMPO which is not strictly local, consider
\begin{equation}
	\widehat{H}_{\text{FD}} = \sum_{i} \sum_{k = 0}^{\infty} 
\widehat{X}_i  \left[ \prod_{j=i+1}^{i+k} \alpha \widehat{Z}_{j} \right]  \widehat{Y}_{i+k+1} \,.
\label{eq:first_degree_example}
\end{equation}
whose iMPO representation is
\begin{equation}
	\W_{\text{FD}}
	= \begin{pmatrix}
		\1 & \widehat{X} & 0\\
		0 &	\alpha \widehat{Z} & \widehat{Y}\\
		0 & 0 & \1
	\end{pmatrix} .
\end{equation}
The only eigenvalue of $T_A$ is $\n{\alpha}^2$, so
\begin{equation}
	\dn{\widehat{H}_{FD,N}}_F^2 \sim
	\begin{cases}
		N 					& \n{\alpha} < 1\\
		N^2 					& \n{\alpha} =1\\
		\n{\alpha}^{2N} 	& \n{\alpha} > 1\\
	\end{cases}
\end{equation}
so $\W_{\text{FD}}$ is first degree if and only if $\n{\alpha} < 1$. In this sense, the definition of first degree operators is tight. (Note that $\widehat{H}_\text{FB}$ only has extensive ground state energy for ${\alpha} < 1$.) The spectral nature of the definition also makes the class of first degree iMPOs invariant under iMPO gauge transforms (see Lemma~\ref{lem:quasilocal-gauge} in App. \ref{app:proofs}).

We caution that the class of first degree Hamiltonians is quite vast. It includes all operators that are usually classified as ``local Hamiltonians". For instance, it include all $k$-local Hamiltonians, but also Hamiltonians with long ranged interactions with exponential falloff. 
In general, we expect all physical Hamiltonians are first degree operators, but not all first degree operators are physical Hamiltonians. For example, projectors are first degree operators which do \textit{not} make sense as Hamiltonians. See Appendix \ref{app:elementary_operations} for an example of another non-Hamiltonian first degree operators.

A slight drawback of our broad definition first degree operators is that --- unlike strictly local operators --- they are not closed under commutation (the commutator of two first degree operators can be ``second degree''). Nevertheless, one can show (see Appendix \ref{app:elementary_operations}) that if $\W$ is first degree and $\W'$ strictly local, the commutator $[\W, \W']$ is still first degree. This is sufficient for our applications, including operator dynamics (see Section~\ref{sec:examples} below).

\subsection{The dominant Jordan block of $T_W$}
\label{subsec:dominant_jordan_blocks}

We now show that the transfer matrix of first degree iMPOs have the dominant Jordan block structure required for an extensive norm \eqref{eq:extensive_norm}. From the finite state machine picture, we know that the iMPO always maps the initial state to the initial state, and the final state to the final state. Intuitively, the dominant Jordan block encodes the fact that these are the ``most important processes" in the state machine, rather than running around loops in intermediate states. 

We begin with an intermediate result which will be crucial to establish canonical forms in Section~\ref{sec:iMPO_canonical_form} below. 
\begin{prop}
	Suppose that $\W$ is a first degree iMPO and consider its upper-left block
	\begin{equation}
		\V := \begin{pmatrix}
			\1 & \CC\\
			0 & \AA
		\end{pmatrix}
		\label{eq:V_matrix_defn}
	\end{equation}
	Then the transfer matrix $T_V$ has a unique dominant left eigenvalue of unity with an eigenvector $X$ of the form	
	\begin{equation}
		X T_V = X, \quad 
	    X = \begin{pmatrix}
			1 & \v{x} \\ \v{x}^\dagger & \mathsf{X}
	    \end{pmatrix}.
		\label{eq:T_V_leading_eigenvector}
	\end{equation}
	All other eigenvalues $\lambda$ satisfy $\n{\lambda} < 1$. 
	\label{prop:quasi_local_Tv}
\end{prop}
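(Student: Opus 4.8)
The plan is to exploit the block upper-triangular structure of $\V$ in Eq.~\eqref{eq:V_matrix_defn} to compute the action of $T_V$ on matrices $X$ written in the corresponding $2\times 2$ block form, and then isolate the $\lambda=1$ eigenspace using the first-degree hypothesis ($\spec T_A \subset \{|\lambda|<1\}$). Writing $X = \bigl(\begin{smallmatrix} x_{00} & \v{x} \\ \v{y}^\dagger & \mathsf{X} \end{smallmatrix}\bigr)$ and using $X T_V = \sum_\alpha V_\alpha^\dagger X V_\alpha$ with $V_\alpha = \bigl(\begin{smallmatrix} \delta_{\alpha 0} & C_\alpha \\ 0 & A_\alpha \end{smallmatrix}\bigr)$ (so that $V_0$ has a $1$ in the top-left and the others have a $0$ there), I would expand the product blockwise. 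The key observation is that the equations close in a triangular (cascading) fashion: the $(0,0)$ component gives $x_{00} = x_{00}$ (so $x_{00}$ is free, and we normalize it to $1$); the off-diagonal component $\v{x}$ satisfies $\v{x}(1 - T_{A,C}) = (\text{lower-order terms involving } x_{00})$ for an operator built from $A_\alpha$ acting on vectors; and finally $\mathsf{X}$ satisfies $\mathsf{X}(1 - T_A) = (\text{source terms involving } x_{00} \text{ and } \v{x})$.

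The crux is that at every stage the relevant "propagator" is $(1 - T_A)$ or a compression of it — more precisely, the map $\mathsf{X}\mapsto \sum_\alpha A_\alpha^\dagger \mathsf{X} A_\alpha = \mathsf{X}T_A$ on the $\mathsf{X}$-block, and an analogous map $\v{x}\mapsto \sum_\alpha C_\alpha^\dagger x_{00} A_\alpha + \sum_\alpha A_\alpha^\dagger \v{x} A_\alpha$ — wait, I should be careful about which index pattern survives; the point is that the "diagonal" part of the propagator on the nontrivial blocks is always $T_A$ (or $1\otimes A$, $\bar A\otimes 1$ type blocks whose spectral radius is bounded by that of $T_A$ by submultiplicativity of spectral radius, or can be handled directly). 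Since $\n{\lambda}<1$ for all eigenvalues of $T_A$, the operator $(1 - T_A)$ is invertible, so each cascading equation has a \emph{unique} solution once the previous blocks are fixed. Hence the $\lambda=1$ eigenvector of $T_V$ exists, is unique up to the overall scale $x_{00}$, and has the claimed block form $\begin{pmatrix} 1 & \v{x} \\ \v{x}^\dagger & \mathsf{X}\end{pmatrix}$ (Hermiticity and the reality/symmetry of the $(0,0)$ entry following because $T_V$ preserves the cone of Hermitian matrices and the fixed point is unique, or by applying the same argument to $X^\dagger$).

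For the statement that all \emph{other} eigenvalues satisfy $\n{\lambda}<1$ (i.e. that $\lambda=1$ is dominant with a single Jordan block of size one), I would argue as follows: the spectrum of $T_V$ on its block decomposition splits as $\{1\}$ (from the $x_{00}$ block) together with the spectra of the maps governing the $\v{x}$ and $\mathsf{X}$ blocks, each of which, after removing the already-accounted $\lambda=1$ direction, is controlled by $T_A$ and hence lies strictly inside the unit disk. More carefully, $T_V - 1$ restricted to the complement of the fixed-point line is invertible precisely because the cascading system has a unique solution for \emph{any} right-hand side (not just zero), which is the statement that $1$ is a semisimple eigenvalue of geometric and algebraic multiplicity one. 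The main obstacle I anticipate is bookkeeping the four sub-blocks of $V_\alpha^\dagger X V_\alpha$ carefully enough to see the triangular structure cleanly — in particular making sure the "$\v{x}$-block" propagator really is invertible (it involves a mix of $A_\alpha$ on one side and possibly the identity-channel on the other, so one must check its spectrum is $\{\,|\mu|<1\,\}$, which should follow since its eigenvalues are among those of $T_A$ or products with the trivial $1\times 1$ block). Once that invertibility is in hand, existence, uniqueness, and dominance all fall out of the same Neumann-series / fixed-point argument.
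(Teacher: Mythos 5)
Your overall strategy is the same as the paper's: expand $X T_V=\sum_\alpha V_\alpha^\dagger X V_\alpha$ blockwise, observe the cascading (block upper-triangular) structure with diagonal blocks $1$, $\overline{A}_0$, $A_0$, $T_A$ (where $A_0=\IP{\1}{\AA}$ is the identity-channel component), solve the $\v{x}$- and $\mathsf{X}$-equations by inverting $I-A_0$ and $\Id-T_A$, and read off dominance of $\lambda=1$ from the diagonal blocks. However, there is a genuine gap at exactly the point you flag as an anticipated obstacle: you need $\spec(A_0)$ (and hence $\spec(\overline{A}_0)$) to lie strictly inside the unit disk, and your proposed justifications do not deliver it. The eigenvalues of $A_0$ are \emph{not} among those of $T_A=\sum_\alpha \overline{A}_\alpha\otimes A_\alpha$, and "submultiplicativity of the spectral radius" gives no relation between $\rho(A_0)$ and $\rho(T_A)$: from the first-degree hypothesis alone one cannot conclude anything about a single Kraus component by generic spectral arguments, because the other channels could in principle interfere. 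Without this, neither the invertibility of $I-A_0$ in the $\v{x}$-equation nor the claim that all non-unit eigenvalues of $T_V$ have modulus $<1$ is established.

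The paper closes this gap with a separate lemma (Lemma~\ref{lemma:A_0_spectrum_constraint}) whose proof uses positivity in an essential way: if $A_0\v{v}=\sigma\v{v}$ with $\n{\sigma}\ge 1$, set $w=\overline{\v{v}}\otimes\v{v}$; since each channel $X\mapsto A_\alpha^\dagger X A_\alpha$ is a positive map, the cross terms in $T_A^N=\big(T_0+\sum_{\alpha\neq 0}T_\alpha\big)^N$ contribute nonnegatively to $\IP{w}{T_A^N w}$, so $\IP{w}{T_A^N w}\ge \n{\sigma}^{2N}\ge 1$ for all $N$, contradicting $\rho(T_A)<1$ (which forces $\IP{w}{T_A^N w}\to 0$). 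In other words, the inequality $\rho(A_0)^2\le\rho(T_A)$ is a Perron--Frobenius-type fact about completely positive maps, not a consequence of general matrix spectral theory; your proof needs to invoke this positivity argument (or an equivalent) explicitly. With that lemma in hand, the rest of your cascading/fixed-point argument, including the Hermiticity of the solution and the semisimplicity of the unit eigenvalue, goes through and coincides with the paper's proof.
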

\begin{proof}
	Since $\V$ has block sizes $(1,\chi)$, the transfer matrix $T_V$ has block sizes  $(1,\chi,\chi, \chi^2)$ in the natural basis.\footnote{Schematically, $(\overline{1} \oplus \overline{\chi}) \otimes (1 \oplus \chi) \cong (1 \oplus \chi \oplus \chi \oplus \chi^2)$.} Moreover, it is block upper-triangular in that basis:
\begin{equation}
T_V =
	 \begin{pmatrix}
	1 & * & *  &  *\\
	0 &  \overline{A}_0 & 0 &  * \\
	0 & 0 & A_0 & * \\
	0 & 0 & 0 &  T_A 
	\end{pmatrix},\; A_0 = \IP{\1}{\AA},
	\label{eq:Tv_block}
\end{equation} so the eigenvalues of $T_V$ are those of the diagonal blocks.

By first degreeness, all eigenvalues $\lambda$ of the $T_A$ block have $\n{\lambda} < 1$. A technical linear algebra fact, Lemma \ref{lemma:A_0_spectrum_constraint} from App. \ref{app:proofs}, shows the same is true for the $A_0$ and $\overline{A}_0$ blocks. The dominant eigenvalue of $T_V$ is therefore $\lambda=1$ from the trivial upper-left block of $T_V$.

To find the eigenvector, we compute $X T_V$, which yields
\begin{equation}
	\begin{pmatrix}
		1 & \v{c}_0 + \v{x} A_0\\
		\v{c}_0^\dagger + A_0^\dagger \v{x}^\dagger
		&
		\sum_{\alpha} \v{c}_\alpha^\dagger \v{c} + \v{c}_\alpha^\dagger \v{x} A_\alpha + A_\alpha^\dagger \v{x}^\dagger \v{c}_\alpha + A_\alpha^\dagger \mathsf{X} A_\alpha
	\end{pmatrix}.
\end{equation}
So $\v{x}$ and $\mathsf{X}$ are determined by
\begin{subequations}
\label{eq:lyapunov}
\begin{align}
	\v{x}[I- A_0] &= \v{c}_0 \\
	\mathsf{X}[\Id - T_A] &= Q\\ 
	Q &:= \sum_\alpha \v{c}_\alpha^\dagger \v{c}_\alpha + \v{c}_\alpha^\dagger \v{x} A_\alpha + A^\dagger_\alpha \v{x}^\dagger \v{c}_\alpha. 
\end{align}
\end{subequations}
As the eigenvalues $\lambda$ of $A_0$ and $T_A$ satisfy $\n{\lambda} < 1$, the operators on the left-hand sides of \eqref{eq:lyapunov} are invertible and solutions $\v{x}$ and $\mathsf{X}$ exist. The dominant eigenvalue therefore has the form \eqref{eq:T_V_leading_eigenvector}.
\end{proof}
Intuitively, in terms of the state machine, the leading eigenvector of $T_V$ is dominated by the ``initial to initial" process. It is worth noting that \eqref{eq:lyapunov} can be written as $Y - \sum_\alpha A_\alpha^\dagger Y A_\alpha = Q$, which is reminiscent of the discrete Lyapunov equation $Y - A^\dagger Y A = Q$ which occurs in control theory. This is a first indication of a nice connection, which we shall detail in Section~\ref{sec:relation_to_control_theory} below.

We now ``enlarge'' the leading eigenvector of $T_V$ to form the dominant Jordan block of $T_W$, which is responsible for the extensive norm, Eq. \eqref{eq:extensive_norm}. 
\begin{prop}
	Suppose $\W$ is an first degree iMPO for $\widehat{H}$ with order-unity trace: $\tr[\widehat{H}] = O(1)$. Then there is a vector $\v{z}$ such that the matrices
\begin{equation}
Z = \begin{pmatrix}
X & \v{z}\\
\v{z}^\dagger & 0
\end{pmatrix},
\text{ and }
Z' = \begin{pmatrix}
0 & 0\\
0 & 1
\end{pmatrix}, \label{eq:Jordan_z}
\end{equation}
[with the same $X$ from Eq. \eqref{eq:T_V_leading_eigenvector}] span the dominant Jordan block of $T_W$:
\begin{equation}
    \begin{pmatrix}
			Z \, T_{W} &
			Z' T_W \\
		\end{pmatrix}
		= \begin{pmatrix}
			Z &
			Z' \\
		\end{pmatrix} \begin{pmatrix}
			1 & \rho \\
			0 & 1
		\end{pmatrix},
		\label{eq:dominant_Jordan_block}
\end{equation}
for some real number $\rho \ge 0$. The norm, Eq. \eqref{eq:extensive_norm}, is then extensive with  $\dn{\widehat{H}_N}_F^2 \to \rho N$ as $N \to \infty$. 
\label{prop:dominant_jordan_block}
\end{prop}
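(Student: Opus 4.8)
The plan is to verify \eqref{eq:dominant_Jordan_block} by computing $ZT_W=\sum_\alpha W_\alpha^\dagger ZW_\alpha$ and $Z'T_W$ directly in the $(1,\chi,1)$ block grading, using Prop.~\ref{prop:quasi_local_Tv} for the upper-left block and the trace hypothesis for the last column, and then to read off the extensive norm from the Jordan structure of $T_W$ that this reveals. First write each component matrix $W_\alpha=\IP{\widehat{O}_\alpha}{\W}$ in $(1,\chi,1)$ blocks, with top row $(\delta_{\alpha0},\,\v c_\alpha,\,d_\alpha)$, middle row $(0,\,A_\alpha,\,\v b_\alpha)$, bottom row $(0,0,\delta_{\alpha0})$, so that the upper-left $(1{+}\chi)\times(1{+}\chi)$ corner of $W_\alpha$ is the $\V$-component $\V_\alpha$. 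Since $X$, and hence $Z$, is Hermitian (the block $\mathsf X$ solves the Hermiticity-preserving equation \eqref{eq:lyapunov} with Hermitian source $Q$), $ZT_W$ is Hermitian, so it suffices to check the $(0,0)$, $(0,M)$, $(M,M)$, $(0,\chi{+}1)$, $(M,\chi{+}1)$ and $(\chi{+}1,\chi{+}1)$ blocks, with $M=\{1,\dots,\chi\}$. One line gives $Z'T_W=Z'$, since only $\alpha=0$ survives and $(W_0)_{\chi+1,\chi+1}=1$.

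For the upper-left corner: because every $W_\alpha$ is block upper-triangular, only the upper-left corner of $Z$ --- which is exactly $X$ --- can feed into the upper-left corner of $ZT_W$, and the corresponding corner of $W_\alpha$ is $\V_\alpha$, so this corner of $ZT_W$ equals $\sum_\alpha\V_\alpha^\dagger X\V_\alpha=XT_V=X$ by Prop.~\ref{prop:quasi_local_Tv}. This disposes of the $(0,0)$, $(0,M)$, $(M,M)$ blocks ($ZT_W$ there agrees with $Z$, and $Z'$ vanishes, as required), and also records $\v x(\Id-A_0)=\v c_0$ from \eqref{eq:lyapunov}.

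The last column is where $\tr[\widehat H]=O(1)$ is used. Writing $\v z=(z_0,\v z_1)^T$ in $(1,\chi)$ blocks, one finds $(ZT_W)_{0,\chi+1}=d_0+\v x\v b_0+z_0$, which must equal $z_0$, i.e.\ $d_0+\v x\v b_0=0$. To see this is precisely the hypothesis: $\widehat H_N=\v\ell\W^N\v r$ gives $\tr[\widehat H_N]/\tr[\1^N]=\v\ell\,(W_0)^N\v r$, and since $\n{\lambda}<1$ for all eigenvalues of $A_0$ (Lemma~\ref{lemma:A_0_spectrum_constraint}) this equals $N\,[d_0+\v c_0(\Id-A_0)^{-1}\v b_0]+O(1)$; with $\v x=\v c_0(\Id-A_0)^{-1}$, the condition $\tr[\widehat H]=O(1)$ becomes $d_0+\v x\v b_0=0$. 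The equation $(ZT_W)_{M,\chi+1}=\v z_1$ then reads $(\Id-A_0^\dagger)\v z_1=\sum_\alpha\bigl(\v c_\alpha^\dagger(d_\alpha+\v x\v b_\alpha)+A_\alpha^\dagger(\v x^\dagger d_\alpha+\mathsf X\v b_\alpha)\bigr)+\v c_0^\dagger z_0$, uniquely solvable (take $z_0=0$) since $\n{\lambda}<1$ on $A_0$; Hermiticity then fixes the $(\chi{+}1,0)$ and $(\chi{+}1,M)$ blocks, and the $(\chi{+}1,\chi{+}1)$ entry \emph{defines} $\rho=\sum_\alpha\bigl(\n{d_\alpha}^2+2\Re(\bar d_\alpha\v x\v b_\alpha)+\v b_\alpha^\dagger\mathsf X\v b_\alpha\bigr)+2\Re(\v z_1^\dagger\v b_0)$, which is manifestly real and (one checks) independent of the slack $z_0$. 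This establishes \eqref{eq:dominant_Jordan_block}.

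For the norm, in the natural grading $T_W$ is block upper-triangular, and by Prop.~\ref{prop:quasi_local_Tv} and Lemma~\ref{lemma:A_0_spectrum_constraint} every diagonal block but the four one-dimensional ones built from the index pairs $(0,0),(0,\chi{+}1),(\chi{+}1,0),(\chi{+}1,\chi{+}1)$ has spectral radius $<1$; hence $\lambda=1$ is the unique eigenvalue of modulus $\ge1$, with algebraic multiplicity four. Equation \eqref{eq:dominant_Jordan_block} exhibits one length-two Jordan chain in this subspace, and a construction parallel to the above produces two further genuine eigenvectors (besides $Z'$), supported on the $(\cdot,\chi{+}1)$ and $(\chi{+}1,\cdot)$ blocks, so the Jordan form at $\lambda=1$ is $2\oplus1\oplus1$ and \eqref{eq:dominant_Jordan_block} is the unique nontrivial block. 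Therefore $T_W^N=P+N\mathcal N+o(1)$ with $P$ the rank-four spectral projector at $\lambda=1$ and $\mathcal N$ the rank-one nilpotent of the dominant block; feeding this into \eqref{eq:norm_transfer_matrix}, and using that the matrix unit $E_{00}$ is a right eigenvector of $T_W$ biorthogonal to $Z$ while $Z'$ is biorthogonal to the right generalized eigenvector, the standard boundary conditions \eqref{eq:standard_boundary_conditions} ($\ell_0=r_{\chi+1}=1$) reduce the relevant overlaps to $\n{\ell_0}^2=\n{r_{\chi+1}}^2=1$, giving $\dn{\widehat{H}_N}_F^2=\rho N+O(1)$ and hence $\rho\ge0$. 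The fiddly block algebra is routine; the delicate points are recognizing that the $(0,\chi{+}1)$ identity \emph{is} the trace hypothesis and, in the norm step, pinning the growth coefficient to exactly $\rho$ (via the biorthogonal partner $E_{00}$ and the boundary normalizations) and ruling out a larger Jordan block --- the latter resting entirely on first-degreeness through Prop.~\ref{prop:quasi_local_Tv}.
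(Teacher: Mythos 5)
Your proposal is correct and follows the same overall strategy as the paper's proof: exploit the block upper-triangularity of $T_W$, use Prop.~\ref{prop:quasi_local_Tv} and Lemma~\ref{lemma:A_0_spectrum_constraint} to confine the unit-modulus spectrum to a multiplicity-four eigenvalue $1$, build the Jordan chain on top of $X$, and then expand $\v{\ell}\v{\ell}$ and $\v r\v r$ against the dominant structure to extract $\rho N$. The differences are in execution rather than substance. The paper first passes to the gauge of Lemma~\ref{lem:traceform} (so $\v b_0=\v c_0=0$ and the trace condition becomes $d_0=0$), which makes the ``spurious'' eigenvectors $E_{0,\chi+1}$, $E_{\chi+1,0}$ exact summands and lets it invoke Lemma~\ref{lemma:triangular} to get the eigenvector abstractly, with $\rho = XF$; you instead work gauge-free, solve the Lyapunov-type equations for $\v z_1$ explicitly using invertibility of $\Id-A_0^\dagger$, and correctly identify the solvability condition of the $(0,\chi{+}1)$ entry, $d_0+\v x\v b_0=0$, as the gauge-invariant form of the trace hypothesis --- a nice clarification of where that hypothesis enters. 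Your coefficient-fixing step via biorthogonality with the right eigenvector $E_{00}$ is equivalent to the paper's argument that $S_{00}=0$ and $Z_{00}$, $E_{00}$-overlaps vanish. One point worth noting: your computation shows the corner matrix is the genuine eigenvector and the $X$-bearing matrix the generalized one, i.e.\ $Z'T_W=Z'$ and $ZT_W=Z+\rho Z'$; this is the transpose of the literal display \eqref{eq:dominant_Jordan_block} (and of the appendix version), but it is the orientation consistent with the paper's own warm-up example \eqref{eq:T_W_example} and with the norm asymptotics, so you have in effect corrected a labeling slip rather than proved a different statement. The only places you wave hands --- the parallel construction of the two extra genuine eigenvectors (which again needs the trace condition) and the $z_0$-independence of $\rho$ --- do check out, so there is no gap.
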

This proposition is easily generalized to traceful operators at the cost of a more complex Jordan block structure. The proof, given in Appendix \ref{app:proofs}, is similar to the one for Prop \ref{prop:quasi_local_Tv}, but somewhat more technical.

We note that $X$, $\v{z}$ and $\rho$ can be calculated from $\W$, but computational tractable formulas use canonical forms, and await us in Sec. \ref{sec:iMPO_canonical_form}. Intuitively, the reason for the extensive norm is that the overlaps of $\v{\ell\ell}$ with $Z$ and $\v{r}\v{r}$ with $Z'$ are both $1$, so if $\W$ is first degree then 
\begin{equation}
    \v{\ell\ell} T_W^N \v{r}\v{r} \sim 
    \begin{pmatrix}
    1 & 0\\
    \end{pmatrix}
\begin{pmatrix}
1 & \rho\\
0 & 1\\
\end{pmatrix}^N
\begin{pmatrix}
0\\
1
\end{pmatrix} = N\rho.
\end{equation}
Therefore first degree operators, as anticipated by their name, have (Frobenius) norm which is a first degree polynomial in $N$.

In summary, we have identified a well-behaved class of local iMPOs --- first degree operators --- that are general enough to contain most operators of interest, and satisfy the physical requirements of an extensive norm. Crucially, first degree iMPOs are qualitatively distinct from generic infinite MPSes: their transfer matrix do not have a unique dominant eigenvalue, but rather a dominant Jordan block (whose eigenvalue is fixed to unity without normalization). Table~\ref{tab:locality_hierachy_table} recapitulates these results. The distinction between a unique dominant eigenvalue versus a Jordan block is of paramount importance as we upgrade canonical forms from states to operators.

\section{Canonical forms for Infinite MPOs}
\label{sec:iMPO_canonical_form}

This section discusses canonical forms for infinite matrix product operators. We first show that canonical forms \textit{exist}: any first degree iMPO admits a choice of gauge that brings it to left canonical form. Actually \textit{computing} such a gauge transform is rather subtle. We first give a general-purpose algorithm, based on QR iteration, with fast convergence for generic iMPOs. Most iMPOs constructed to represent an analytical formula have a special property: they are upper triangular. In this case, canonicalization can be done by an more efficient, iteration free method. We also show that once an operator is in canonical form, it is easy to read off its norm. To our knowledge, canonical forms for operators have not been defined before, perhaps because of the non-trivial first degree requirement.

\subsection{Existence of iMPO Canonical Forms}
\label{subsec:iMPO_can_form_existence}

The definition of canonical form is much the same as in the finite case.
\begin{defn}
		An iMPO $\W$ is in \textbf{left-canonical form}  if its upper-left block $\V$
		has orthonormal columns: $\forall b,c \leq \chi'$, 
		\begin{equation}
			\sum_{a=0}^{\chi} \IP{\W_{ab}}{\W_{ac}} = \delta_{bc}.
			\label{eq:canonical_orthonormal}
		\end{equation}
		\label{defn:left_canonical_form}
	\end{defn}
	An iMPO is in \textbf{right canonical form} if its mirror is left canonical.

Defn. \ref{defn:left_canonical_form}, the definition of iMPO canonical form, is closely related to the MPS case. Precisely, $\W$ is left canonical as an iMPO if, and only if, $\V$ is left canonical \textit{as an MPS}. We can thus import many properties from the case of states. For example, $\eqref{eq:canonical_orthonormal}$ can be written in terms of the transfer matrix (defined in \eqref{eq:TW_matrix_notation}) as 
	\begin{equation}
      \Id_{[0, \chi]} T_V = \sum_\alpha V_\alpha^{\dagger} V_\alpha =\Id_{[0, \chi]}.
	  \label{eq:canonical_transfer}
  \end{equation}
  So $\W$ is left-canonical whenever $\Id_{[0,\chi]}$ is a left eigenvector of $T_V$ with eigenvalue $1$.  This fact is exactly what allows us to prove that canonical forms exist.

  \begin{prop}
	  Let $\W$ be a first degree iMPO. Then there exists a matrix $L$ that which specifies a gauge transform
	  \begin{equation}
		  \W_L L = L \W
	  \end{equation}
	  so that $\W_L$ is left canonical. \label{prop:leftcan_exist}
  \end{prop}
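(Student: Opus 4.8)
The plan is to exhibit the gauge matrix $L$ explicitly in terms of the dominant left eigenvector $X$ of $T_V$ provided by Proposition~\ref{prop:quasi_local_Tv}. Recall from \eqref{eq:canonical_transfer} that $\W_L$ is left canonical precisely when $\Id_{[0,\chi]}$ is a left eigenvector of $T_{V_L}$ with eigenvalue $1$. The key algebraic observation is that transfer matrices transform covariantly under gauge transforms: if $\V_L \mathsf{L} = \mathsf{L} \V$ on the upper-left block (with $\mathsf{L}$ the invertible part of $L$ in the block decomposition \eqref{eq:gauge_form}), then $T_{V_L} = \mathsf{L}^{-\dagger} \, T_V \, (\text{conjugation by } \mathsf{L})$ in the matrix picture \eqref{eq:TW_matrix_notation}, i.e. $Y T_{V_L} = \mathsf{L}^{-\dagger}\big[(\mathsf{L}^\dagger Y \mathsf{L})\, T_V\big]\mathsf{L}^{-1}$. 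So if $X T_V = X$ then $(\mathsf{L}^{-\dagger} X \mathsf{L}^{-1}) T_{V_L} = \mathsf{L}^{-\dagger} X \mathsf{L}^{-1}$, and we want to choose $\mathsf{L}$ so that $\mathsf{L}^{-\dagger} X \mathsf{L}^{-1} = \Id$, i.e. $X = \mathsf{L}^\dagger \mathsf{L}$.

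First I would establish that $X$ is positive definite, so that such a factorization exists. Positivity ($X \succeq 0$) follows because $T_V$ is a positive map (Choi, as noted after \eqref{eq:TW_matrix_notation}) and $X = \lim_N \Id\, T_V^N / (\text{normalization})$ can be obtained as a limit of positive matrices — or, more carefully, by noting that $X$ is the unique fixed point in its eigenspace and can be built from the recursion \eqref{eq:lyapunov}: $\mathsf{X} = \sum_{k\ge 0} \big(\sum_\alpha A_\alpha^\dagger (\cdot) A_\alpha\big)^k[Q]$ with $Q \succeq 0$ (one checks $Q = \sum_\alpha (\v{c}_\alpha + \mathsf{X}^{1/2}\cdots)$-type completion-of-square is PSD once one also uses the $\v{x}$ equation), and the upper-left entry is $1 > 0$. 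Strict positive definiteness then needs an argument that $X$ has no kernel; this is where first-degreeness and the regular-form structure (the presence of the identity row/column) enter — a null vector of $X$ would have to be a null vector compatible with the Lyapunov sum, forcing a gauge-degenerate direction that can be excluded, or alternatively one argues the bond space may be assumed free of such redundancy (a ``reduced'' iMPO) without loss of generality, which is the standing assumption. Given $X \succ 0$, take a Cholesky (or symmetric square root) factorization $X = \mathsf{L}^\dagger \mathsf{L}$ with $\mathsf{L}$ invertible, and set
\begin{equation}
  L = \begin{pmatrix} 1 & \v{t} & r \\ 0 & \mathsf{L} & \v{s} \\ 0 & 0 & 1 \end{pmatrix}
\end{equation}
in the block form \eqref{eq:gauge_form}; the off-diagonal blocks $\v{t},\v{s},r$ are then fixed (or left free) by demanding $\W_L = L \W L^{-1}$ stay in regular form with the correct $\1$ blocks — this is a routine back-substitution since $L$ is upper triangular with unit corner blocks. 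Then $\W_L := L \W L^{-1}$ is in regular form, and by the covariance computation its $T_{V_L}$ fixes $\Id_{[0,\chi]}$, so by \eqref{eq:canonical_transfer} $\W_L$ is left canonical.

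The main obstacle I anticipate is not the existence of the factorization per se but pinning down \emph{positive-definiteness} (as opposed to mere positive-semidefiniteness) of $X$, since the bond space of a generic iMPO can carry null directions; the clean way around this is to invoke (or first prove) that one may reduce to an iMPO whose bond space supports no such degeneracy — i.e. that $X \succ 0$ after discarding a gauge-trivial subspace — and then the rest is linear algebra. A secondary technical point is verifying that the off-diagonal blocks of $L$ can always be chosen to restore the exact regular form \eqref{eq:regular_form_MPO} (unit blocks in the corners, zeros below the diagonal); this follows because the equations for $\v{t},\v{s},r$ are triangular and the relevant diagonal blocks ($I - A_0$, etc.) are invertible by first-degreeness, exactly as in the proof of Proposition~\ref{prop:quasi_local_Tv}.
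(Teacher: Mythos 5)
Your handling of the generic case is essentially the paper's own sketch (Section \ref{subsec:iMPO_can_form_existence}): take the dominant eigenvector $X T_V = X$, factor $X = \mathsf{L}^\dagger\mathsf{L}$ with $\mathsf{L}$ invertible, and use covariance of the transfer matrix under gauge transforms to conclude $\Id_{[0,\chi]}T_{V_L} = \Id_{[0,\chi]}$, i.e.\ left canonicality. (A small simplification: since $X_{00}=1$, the Cholesky factor of the full $(1+\chi)$-dimensional $X$ automatically has unit $(0,0)$ entry and upper-triangular form, so the enlargement to a gauge matrix of the form \eqref{eq:gauge_form} needs no back-substitution for $\v{t},\v{s},r$; they can simply be taken compatible with $\diag(K,1)$. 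Also, your positive-semidefiniteness argument is fine, since $X=\lim_n \Id_{[0,\chi]}T_V^n$ by uniqueness of the dominant eigenvalue and each term is PSD because $T_V$ is a positive map.)

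The genuine gap is exactly the step you flag and then defer: strict positive definiteness of $X$. First-degreeness does \emph{not} imply $X \succ 0$, and there is no standing ``reduced iMPO'' assumption you may invoke without proof. Example~\ref{eg:troll} in the paper is a first degree iMPO (any $|\alpha|<1$) whose dominant eigenvector is $X = \diag(1,0)$, singular because the middle automaton state is unreachable from the initial state; this is also precisely the situation in which naive QR iteration fails. The statement is claimed for \emph{all} first degree iMPOs, so the singular case is the crux, and the paper devotes a separate lemma to it (Lemma~\ref{lem:rank_reduce}): writing $X = L^\dagger \diag(1,\sigma_1,\dots,\sigma_\chi) L$ with $L$ invertible, the fixed-point equation $X_1 T_{V_1} = X_1$ forces the matrix elements of $\V_1$ connecting the support of $X$ to its kernel to vanish, so $\V_1$ is block diagonal and the rectangular projection onto the support is itself a legitimate (bond-dimension-reducing) gauge transform; the reduced iMPO then has $X'\succ 0$ and your square-root argument applies, with the final $L$ obtained by composing the two gauges. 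Saying ``a null vector can be excluded'' or ``assume the iMPO is reduced WLOG'' is not an argument — the content of the proof is showing that the kernel of $X$ can always be projected out by a gauge transform. (For the nonsingular case the paper actually proves convergence of QR iteration, Prop.~\ref{prop:QRconverge}, which yields existence as a byproduct; your direct factorization is an acceptable alternative for existence alone.)
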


  The proof itself is given in Appendix \ref{app:convergence}, but we briefly outline the idea. Prop. \ref{prop:quasi_local_Tv} tells us that, for any first degree $\W$, the dominant eigenvector of $T_V$ is $X T_V = X$. Suppose that we could take the ``square root decomposition'' $X = K^\dagger K$ with some invertible matrix $K$. Then we could enlarge $K$ to $L = \diag(K \;\, 1)$ and use it as a gauge transform $\W_L = L \W L^{-1}$. Such a $\W_L$ is left-canonical:
  \begin{align*}
	  \Id T_{V_L} &= \sum_{\alpha} (K^{-1})^\dagger \V_{\alpha}^{\dagger} K^\dagger K \V_\alpha K^{-1} \\
	  &= (K^{-1})^\dagger  X K^{-1} = \Id
  \end{align*}
  where $\V_L = K \V K^{-1}$ is the upper-left part of $\W_L$. To turn this into a genuine proof, one must deal carefully with the case when $L$ is \textit{not} invertible --- and this is precisely what we do in Appendix \ref{app:convergence}.

  To demonstrate the utility of canonical forms, we now give a simple formula for the norm of an (i)MPO. For any traceless operator, we can easily ``improve'' the canonical form via the gauge transform
  \begin{equation}
	  L_{lc} := \begin{pmatrix}
		  1 & &\\
		  & I & \v{s}\\
		  & & 1
	  \end{pmatrix},
	   \v{s}  := [A_0 - I]^{-1} \v{b}_0.
	  \label{eq:last_col_id_free_gauge}
  \end{equation}
  (Also see Lemma \ref{lem:traceform}.)  This will gauge away the identity components of the last so that:
\begin{equation}
	\IP{\1}{\DD} = \IP{\1}{\BB_a} = 0,\quad \forall 1\le a \le \chi \,,
	\label{eq:last_col_identity_free}
\end{equation} 
Doing this makes the dominant Jordan block particularly simple.

\begin{prop}
	Suppose $\W$ is an iMPO for $\widehat{H}$ in left-canonical form where \eqref{eq:last_col_identity_free} holds. Then the dominant Jordan block of $\W$ is given by \eqref{eq:Jordan_z} and \eqref{eq:dominant_Jordan_block} with $X = \Id_{[0,\chi]}$, $\v{z}_a = \IP{\AA_{ab}}{\BB_b} + \IP{\CC_a}{\DD}$, and
\begin{equation}
    \lim_{N\to \infty} \dn{H_N}_F^2/N = \rho = \IP{\DD}{\DD} + \sum_{a=1}^\chi \IP{\widehat{b}_a}{\widehat{b}_a}.
    \label{eq:norm_left_can}
\end{equation}
\end{prop}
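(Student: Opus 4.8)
The plan is to verify directly that the matrices $Z$, $Z'$ of Eq.~\eqref{eq:Jordan_z} with the claimed special form span the dominant Jordan block, by computing $Z T_W$ and $Z' T_W$ explicitly and matching them against Eq.~\eqref{eq:dominant_Jordan_block}. The point is that the left-canonical condition, Eq.~\eqref{eq:canonical_transfer}, together with the gauge choice Eq.~\eqref{eq:last_col_identity_free} which kills the identity components of $\BB$ and $\DD$, makes the block $X$ collapse to $\Id_{[0,\chi]}$ and makes the off-diagonal data $\v{z}$ and $\rho$ take the simple closed forms stated. I would begin by recording the block decomposition of $\W$ from Eq.~\eqref{eq:regular_form_MPO}, i.e. the $\alpha$-components $W_\alpha$ in terms of $(\delta_{\alpha 0}, C_\alpha, D_\alpha; 0, A_\alpha, b_\alpha; 0, 0, \delta_{\alpha 0})$, and note that $V_\alpha$ is the upper-left $(1+\chi)\times(1+\chi)$ block.

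First I would check that $X = \Id_{[0,\chi]}$ is the leading eigenvector of $T_V$: this is immediate from Eq.~\eqref{eq:canonical_transfer}, which is precisely the statement $\Id_{[0,\chi]} T_V = \Id_{[0,\chi]}$, and Prop.~\ref{prop:quasi_local_Tv} guarantees uniqueness and that all other eigenvalues of $T_V$ have modulus below one. So in the enlarged transfer matrix $T_W$, acting on the $(1+\chi+1)^2$-dimensional space, the candidate $Z = \diag(\Id_{[0,\chi]})$ augmented by a last row/column vector $\v{z}$ should satisfy $Z T_W = Z + \rho Z'$ for suitable $\v{z}, \rho$. I would compute $Z T_W = \sum_\alpha W_\alpha^\dagger Z W_\alpha$ blockwise. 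Writing $W_\alpha$ in the three-block form, the $(0{:}\chi, 0{:}\chi)$ block of the result is $\sum_\alpha V_\alpha^\dagger \Id_{[0,\chi]} V_\alpha = \Id_{[0,\chi]}$ by canonicity; the $(0{:}\chi, \chi{+}1)$ block is $\sum_\alpha V_\alpha^\dagger (C_\alpha^T D_\alpha\text{-type terms})$ — more precisely it picks up $\sum_\alpha \begin{pmatrix} C_\alpha \\ A_\alpha\end{pmatrix}^\dagger \begin{pmatrix} D_\alpha \\ b_\alpha \end{pmatrix}$, which one identifies as exactly the claimed $\v{z}$ with components $\v{z}_a = \IP{\AA_{ab}}{\BB_b} + \IP{\CC_a}{\DD}$ after using that the identity components of $\DD,\BB$ vanish so only the traceless parts enter; and the bottom-right $(\chi{+}1,\chi{+}1)$ scalar is $\sum_\alpha |D_\alpha|^2 + |b_\alpha|^2 = \IP{\DD}{\DD} + \sum_a \IP{\widehat b_a}{\widehat b_a} =: \rho \ge 0$. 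The crucial cancellation is that, because $\IP{\1}{\DD} = \IP{\1}{\BB_a} = 0$, there is no $\Id$-component feeding back into the $\Id_{[0,\chi]}$ block, so $X$ genuinely stays $\Id_{[0,\chi]}$ rather than acquiring corrections; this is where the improved gauge Eq.~\eqref{eq:last_col_id_free_gauge} earns its keep.

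Having matched $Z T_W$, the computation of $Z' T_W = \sum_\alpha W_\alpha^\dagger Z' W_\alpha$ is a one-liner: $Z' = e_{\chi+1} e_{\chi+1}^\dagger$ and $W_\alpha e_{\chi+1} = \delta_{\alpha 0} e_{\chi+1}$ (the last column of $W_0$ is $e_{\chi+1}$, of $W_{\alpha\neq 0}$ is zero, using again that $\DD,\BB$ have no identity part), so $Z' T_W = e_{\chi+1}e_{\chi+1}^\dagger = Z'$, giving the lower row $(0,1)$ of the Jordan block. Then I would invoke Prop.~\ref{prop:dominant_jordan_block} (or re-derive its short argument) to conclude that this $2\times 2$ Jordan block with eigenvalue $1$ is dominant — all competing eigenvalues of $T_W$ come from $T_A$, $A_0$, $\overline{A_0}$ and are strictly inside the unit disk by first-degreeness and Lemma~\ref{lemma:A_0_spectrum_constraint} — and that $\v{\ell\ell}$ overlaps $Z$ with coefficient $1$ while $\v{r}\v{r}$ overlaps $Z'$ with coefficient $1$, so $\dn{H_N}_F^2 = \v{\ell\ell}\, T_W^N\, \v{r}\v{r} \to \rho N$. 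The only mildly delicate point — and the one I would be most careful about — is bookkeeping the separation of traceless and identity parts of the operator-valued entries when passing between $\IP{\cdot}{\cdot}$ on operators and the $\alpha$-component sums, making sure e.g. that $\sum_\alpha |D_\alpha|^2$ really equals $\IP{\DD}{\DD}$ given $D_0 = 0$, and similarly that no stray $\alpha=0$ term contaminates $\v{z}$; everything else is a routine blockwise transfer-matrix computation.
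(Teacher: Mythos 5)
Your proposal is correct and follows essentially the same route as the paper, whose proof is precisely the direct blockwise matrix multiplication you describe: canonicity fixes the $\Id_{[0,\chi]}$ block, the identity-free gauge \eqref{eq:last_col_identity_free} yields $\rho=\IP{\DD}{\DD}+\sum_{a}\IP{\widehat{b}_a}{\widehat{b}_a}$, and Prop.~\ref{prop:dominant_jordan_block} supplies dominance and $\dn{H_N}_F^2\to\rho N$. Two bookkeeping slips, neither fatal: $Z'T_W=Z'$ follows from the last \emph{row} of each $W_\alpha$ being $\delta_{\alpha 0}\,e_{\chi+1}^\dagger$ (the last \emph{column} of $W_{\alpha\neq 0}$ is generically nonzero, since it carries the traceless parts of $\DD,\BB$ that produce $\rho$), and the $(0{:}\chi,\chi{+}1)$ block of $Z T_W$ also contains the feedback term $V_0^\dagger \v{z}$, so the generalized-eigenvector condition is really $(\Id-V_0^\dagger)\v{z}=\sum_\alpha V_\alpha^\dagger f_\alpha$ — this only changes the closed form of $\v{z}$ by a factor of $(\Id-A_0^\dagger)^{-1}$ (invertible by Lemma~\ref{lemma:A_0_spectrum_constraint}) and leaves $X$, $\rho$, and the norm formula \eqref{eq:norm_left_can} untouched.
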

The proof is immediate from matrix multiplication. In practice, then, one should compute the intensive norm of an iMPO by bringing it to left canonical form, gauging away identities in $\BB$ by \eqref{eq:last_col_Id_free}, and applying \eqref{eq:norm_left_can}. The intuitive reason this works is that, in left canonical form, orthonormality pushes all the weight in each term to the last site (e.g. $0.3 X_1 Y_2 Z_3 \to X_1 Y_2 [0.3 Z_3]$.) The norm is then simply the sums of the squares of the weights of the ending sites. The condition \eqref{eq:last_col_identity_free} ensures that all the edges incident to ``$f$'' in the automata are identity-free, i.e. no terms can ``end prematurely''. 

The finite case is directly analogous. A finite operator $H$ whose MPO is left-canonical with each $\W^{(n)}$ also identity-free in the last column has norm
\begin{equation}
	\dn{H_N}_F^2 = \sum_{n=1}^N \left[\IP{\DD^{(n)}}{\DD^{(n)}} + \sum_{a=1}^{\chi^{(n)}} \IP{\widehat{b}_a^{(n)}}{\widehat{b}_a^{(n)}} \right].
	\label{eq:norm_left_can_finite}
\end{equation}

\subsection{QR Iteration}
\label{sec:QR_iteration}

We now present a general-purpose algorithm to gauge an iMPO $\W$ into left canonical form. Recall that if we can decompose the dominant eigenvector $X T_V = X$ as $X = R^\dagger R$, then $R$ is exactly the gauge transform we need. Any algorithm along these lines must follow the strategy: (I) find $X$, (II) decompose it to find $R$, and (III) deal with the case where $R$ is not invertible. We will see that (I) and (II) are straightforward, but (III) requires considerable care.

Because $X$ is the dominant eigenvector, it is simple to compute using the power method. If $X_{n+1} := X_n T_V$, then $X_n \to X$ as $n\to \infty$. The speed of convergence is controlled by the gap to the second-largest eigenvalue. Unlike in the MPS case, the second-largest eigenvalue is typically far less than $1$, so $X_n$ converges quite fast. We have therefore achieved (I).

To decompose $X$, we need to take the square-root. Simply taking the matrix square-root of $X$ via eigendecomposition or Cholesky decomposition will severely reduce the precision (from $10^{-16}$ to $10^{-8}$ with the standard floating point), which is undesirable. To sidestep this, we use the technique of QR iteration, wherein each application of $T_V$ is performed by taking a QR decomposition. Precisely, let $\W_0 := \W$ and for $n \ge 1$ inductively define
\begin{equation}
		\Q_n R_n := \widehat{QR}[\W_{n-1}],  \quad	\W_n := R_n \W.
\end{equation}
Let $\widetilde{R}_n$ denote the restriction of $R_n =\diag(\widetilde{R}_n \; 1)$ to the upper left blocks (and similarly for $\widetilde{Q}_n$). We have
\begin{equation}
	\sum_{\alpha} \V^\dagger_\alpha \widetilde{R}_{n-1}^\dagger \widetilde{R}_{n-1} \V_{\alpha} 
	= \sum_{\alpha} \widetilde{R}_n^\dagger \left( \widetilde{Q}_{n} \right)^\dagger_\alpha \left( \widetilde{Q}_{n} \right)_\alpha \widetilde{R}_{n},
\end{equation}
so $\left( \widetilde{R}^\dagger_{n-1} \widetilde{R}_{n-1} \right) T_V = \widetilde{R}_n^\dagger \widetilde{R}_n$. This computes the application of the transfer matrix while maintaining the factorized form, giving the limit:
\begin{equation}
	\widetilde{R}_n^\dagger \widetilde{R}_n = X_n \xrightarrow{n\to\infty} X = \widetilde{R}^\dagger \widetilde{R}. 
	\label{eq:iterated_QR_convergence}
\end{equation}
One could then gauge-transform by $R = \diag(\widetilde{R} \; 1)$ as $\W_L R = R W$ to find a left canonical $\W_L$. We have now achieved (II). 

The above procedure is no more than a simple adaption of a well-known standard method in the iMPS context~\cite{laurens1}, and suffices to compute canonical forms for generic iMPOs. However there are many reasonable iMPOs for which it fails badly (we will encounter them in the application discussed in Section~\ref{sec:examples}, Fig.~\ref{fig:lanczos} below). The essential problem is that convergence $X_n \to X$ does \textit{not} guarentee $\widetilde{R}_n \to \widetilde{R}$, especially when $X$ is a singular matrix. 
This is the main obstruction to achiving (III). 

\begin{figure}
	\begin{algorithm}[H]
		\caption{iMPO Left Can. Form: Iterated QR}
		\label{alg:left_can_alg_infinite_iterated_QR}
		\setstretch{1.2}
		\begin{algorithmic}[1]
			\Procedure{LeftCanQRIter}{$\W, \eta$} \Comment{$\eta$: desired precision}
			\State $L \gets \Id_{[0,\chi+1]}$
			\State $\varepsilon \gets \infty$\Comment{Current error}
			\While{$\varepsilon > \eta$} \Comment{Repeat until convergence}
				\State $(\Q, R) \gets \widehat{QR}(\W)$ \Comment{Eq.~\eqref{eq:QRdef}}
				\State $\W \gets R \Q$
				\State $L \gets R \, L$ 
				\State $\varepsilon \gets \dn{R - \Id}$ \textbf{if} $R$ is square \textbf{else }$\infty$
			\EndWhile
			\State \textbf{return} $\Q, L$
			\EndProcedure		
		\end{algorithmic}
	\end{algorithm}
\end{figure}
Algorithm \ref{alg:left_can_alg_infinite_iterated_QR} presents the ``practical solution'' to this conundrum. The idea is to apply a gauge transformation after every QR step, i.e.:
$$ \W_0 = \Q_1 R_1 \,,\, \W_1 = R_1 \Q_1 \,,\, 
\W_1 = \Q_2 R_2 \,,\, \W_2 = R_2 \Q_2 \dots $$
Then $\W_1$ is related to $\W_0$ by a gauge transform $R_1 \W_0 = \W_1 R_1$, and $\W_2$ to $\W_0$ by $ R_2 R_1 \W_0 = \W_1 R_2 R_1$, etc. The desired gauge transform to a canonical form will be approached by the product $L_n =  R_n R_{n-1} \dots R_1$. An important advantage of this method comes from bond dimension reduction: to see this, suppose that $\W_0$ has bond dimension $\chi_0$ but linearly dependent columns, so that $\Q_1, R_1$ can have shape $(\chi_0+2) \times (\chi_1+2), (\chi_1+2) \times (\chi_0+2)$ respectively, with $\chi_1 < \chi_0$.\footnote{This is known as ``rank-revealing'' QR, and can be done by removing vanishing rows of $R$ and the corresponding columns of $\Q$ after running some standard QR routine, for example.} As a result, $\W_1$ will have a smaller bond dimension $\chi_1$. Thus, the first few iterations will reduce the bond dimension of $\W$. Eventually, the bond dimension will stabilize, and $R_n$ will become a square matrix, and invertible in most situations, thereby ameliorating the problem (III).

Unfortunately,  there are still pathological cases where this algorithm will fail as well, but it gives a good balance between speed, applicability, and ease-of-implementation. Appendix \ref{app:convergence} proves the conditions under which Alg. \ref{alg:left_can_alg_infinite_iterated_QR} converges, supplies non-converging counterexamples, and a more complex algorithm which we prove \textit{always} converges (Algorithm~\ref{alg:general_left_canonical_form}). We reiterate that Algorithm \ref{alg:left_can_alg_infinite_iterated_QR} will work almost always in practice, and the fool-proof algorithm is only used to handle rare exceptions.

We remark that the above discussion on iMPO canonical forms (including Appendix~\ref{app:convergence}) can also be regarded as a careful treatment of iMPS canonical forms. To our knowledge, the subtlety involved in the convergence of QR iteration has not been thoroughly discussed previously, since it appears that the matrices encountered in iMPS calculations are always in a generic class for which any QR iteration scheme converges.

\subsection{Upper Triangular Algorithm}
\label{subsec:UT_algorithm}

\label{sec:iMPOcan_utri_algorithm}

When an iMPO is an upper-triangular operator-valued matrix --- as is often the case when MPOs are constructed to represent an analytical Hamiltonian --- it is possible to put it into canonical form with a non-iterative algorithm. In some sense, algorithms for canonical forms are a generalization of the Gram-Schmidt algorithm, where elementary row- and column-operations are replaced by gauge transforms. In the upper-triangular case, however, gauge transformations are so close to elementary row/column operations that we can adapt Gram-Schmidt directly. The result is a non-iterative algorithm that uses an upper-triangular solver to compute the gauge transform one column at a time. 

Suppose we have an upper-triangular MPO
\begin{equation}
	\W_{M-1} = \begin{pmatrix}
		\1 & \vrule & \vrule & \vrule & \cdots\\[0.2em]
		& \v{\sw}_1 & \vrule & \vrule & \cdots\\[0.2em]
		& & \v{\sw}_2 & \vrule & \cdots\\[0.2em]
		& & & \v{\sw}_3 & \cdots\\[0.2em]
		& & & & \ddots
	\end{pmatrix}.
\end{equation}
and assume, for induction, that the first $M$ column vectors $\v{\sw}_0,\cdots\v{\sw}_{M-1}$ are already orthonormal. We want to modify $\v{\sw}_M \to \v{\sw}_{M}'$ to be orthogonal to all previous columns. To do this, we apply a gauge transformation which is the identity except for the $M$th column:
\begin{equation}
	R_{M} = \begin{pmatrix}
		1 & 0 &  &r_0 & &\\
		& \ddots & & \vdots & &\\
		& & 1 & r_{M-1} & &\\
		& & & s_M & &\\
		& & & & \ddots &\\
		& & & & & 1
	\end{pmatrix}.
	\label{eq:gauge_transformation_column_M}
\end{equation}
The transformation $\W_M = R_{M} \W_{M-1} R_{M}^{-1}$ is then easily computed\footnote{The inverse $R_M^{-1}$ has the same form as $R_M$ but with $r_a \to -r_a$ and $s_M \to 1/s_M$.} and maintains the upper-triangular form, while only affecting columns $M$ and beyond. In particular, setting $s_M =1$ temporarily,
\begin{equation}
	\v{\sw}_M' = \v{\sw}_M - \sum_{a=0}^{M-1} r_a \v{\sw}_a + \sum_{a=0}^{M-1} r_a \widehat{d}_M \v{e}_a
	\label{eq:new_column_M}
\end{equation}
where
$\v{e}_a$ is the standard basis vector $\left( \v{e}_a \right)_b = \delta_{ba}$ and $\widehat{d}_M := \left( \v{\sw}_M \right)_M = \W_{MM}$ is the diagonal component of the $M$th column. In Gram-Schmidt, the last term is absent, and one would simply set $r_b = \IP{\v{\sw}_b}{\v{\sw}_M}$ to orthogonalize the columns. We need only make a slight modification to account for the last term.

Orthogonality against column $b < M$ is the condition
\begin{equation}
	0
\equiv \braket{\v{\sw}_b, \v{\sw}_M}
+ \sum_{a=0}^{M-1} \left( -\braket{\v{\sw}_b,\v{\sw}_a} + \braket{ \v{\sw}_b, \widehat{d}_M \v{e}_a} \right)r_a.
	\label{eq:upper_triangular_orthogonality_column_transform}
\end{equation}
This is just a linear equation $K \v{r}  = \v{c}$ where
\begin{subequations}
\begin{align}
	K_{ba} \ &=\ \delta_{ba} - \braket{\W_{ba}, \W_{MM}}\\
	c_b \ &=\ \braket{\v{\sw}_b, \v{\sw}_M},
\end{align}
\label{eq:orthogonality_linear_eqn}%
\end{subequations}
the Kronecker-$\delta$ comes from the induction hypothesis $\braket{\v{\sw}_b, \v{\sw}_a} = \delta_{ba}$, and $K$ is lower-triangular. Therefore we can easily solve for $\v{r} = K^{-1} \v{c}$ by back-substitution to find the $r_b$'s, giving an $\v{\sw}_M'$ orthogonal to previous columns. We can use the final free parameter, $s_M$, to normalize. The effect of $s_M$ on column $M$ is 
\begin{equation}
	\v{\sw}_M' \to \v{\sw}_M^{''} = \frac{1}{s_M} (\v{\sw}_M' - \widehat{d}_M \v{e}_M ) +  \widehat{d}_M \v{e}_M ,
\end{equation}
The normalization condition $1 \equiv \braket{\v{\sw}_N^{''}, \v{\sw}_N^{''}}$ implies
\begin{equation}
	s_M =\sqrt{\frac{
	\braket{\v{\sw}_M, \v{\sw}_M}
}{
	1 - \braket{\widehat{d}_M, \widehat{d}_M}
}}.
\label{eq:column_normalization}
\end{equation}
The first order condition ensures the denominator is non-zero.

We have thus solved for the gauge transformation $R_M$ to orthonormalize column $M$ against the previous columns. Of course, this gauge will modify the columns beyond $M$, but those are treated in subsequent steps. The procedure is summarized in Algorithm \ref{alg:left_can_form_triangular}  and has a total cost of $O(\chi^3)$ operations. In each loop, we perform a triangular solve and a matrix multiplication. The triangular solve costs $O(\chi^2)$ and, since $R$ is almost the identity matrix, we can apply it in time $O(\chi^2)$ as well. With the outer loop of size $\chi$, we have a total cost of $O(\chi^3)$. 

\begin{algorithm}[H]
	\caption{iMPO Left Can. Form: Triangular}
	\label{alg:left_can_form_triangular}
	\setstretch{1.35}
	\begin{algorithmic}[1]
		\Require
			$\W$ upper-triangular
		\Procedure{LeftCanTriangular}{$\W$}
			\State $R_T \gets I_{1+\chi+1}$
			\For{$M \in [1,\chi]$}
			\State $K_{ba} = \delta_{ba} - \braket{\W_{ab}^\dagger, \W_{MM}},\quad m,k \in [0,M-1]$	
			\State $c_b = \sum_{a=0}^{M-1} \braket{\W_{bM}^\dagger,\W_{aM}}, \quad m \in [0,M-1]$
			\State $\v{r} \gets K^{-1} \v{c}$ \Comment{$O(\chi^2)$ triangular solve}
			\State $R \gets \Id_{1+\chi+1}$, $R_{bM} \gets r_b,\quad m \in [0,M-1]$
			\State $\W \gets R \W R^{-1}$, $R_T \gets R R_T$ \Comment{only $O(\chi^2)$}
			\State $s \gets $ Eq. \eqref{eq:column_normalization}
			\State $R \gets \Id_{1+\chi+1}$, $R_{MM} \gets s$
	\State $\W \gets R \W R^{-1}$, $R_T \gets L R_T$ \Comment{only $O(\chi)$}
			\EndFor
		\State \textbf{return} $\W, R_T$
		\EndProcedure	
	\end{algorithmic}
\end{algorithm}

Several remarks are in order. First, this algorithm has an easily-curable instability, which arises when $s_M$ in \eqref{eq:column_normalization} is vanishingly small. This means $ \sw_{M}' - \widehat{d}_M \v{e}_M$ is also vanishing. Consequently, in terms of the state machine, the $M$th state cannot be reached from the initial state, so one should simply discard the $M$th row and column of $\W$ (as well as the $M$th row of the gauge matrix), and carry on.

Second, most upper-triangular MPOs encountered in practice in DMRG have have no diagonal components. In this case, $\W$ is often \textit{strictly} upper triangular, whereupon $K=I$, the linear system becomes trivial, and the algorithm essentially reduces to normal Gram-Schmidt.\footnote{We caution that it is still necessary to row-transform the $j>M$ columns at each step, so the algorithm is distinct from a simple QR factorization.} 

Third, this algorithm is easily generalized to the case of extended unit cells, which occur frequently in applications to 2d DMRG. Let us sketch how this extension works. An iMPO with an extended unit cell with $N$ sites is composed of repeating blocks $[\W^{(1)}\W^{(2)}\cdots \W^{(N)}]$. Gauge transforms are now collections of matrices $R_1,\dots, R_N$ which satisfy intertwining relations $R_{n-1} \W^{(n)'} = \W^{(n)} R_n$ for $n \in \Z/N\Z$. So we must now carry out the algorithm where the matrices $R_n$ and $R_{n-1}$ on each side are not the same. After a gauge transformation for column $M$ using $R_n$'s with the same form as \eqref{eq:gauge_transformation_column_M},
\begin{equation}
	\v{\sw}_M^{(n)'}
	= \v{\sw}_M^{(n)} - \sum_{a=0}^{M-1} r_a^{(n)} \v{\sw}_a + r_a^{(n-1)} \widehat{d}_M^{(n)} \v{e}_c.
\end{equation}

The new orthogonality condition is the linear equation $\mathcal{M} \mathcal{R} = \mathcal{C}$ where $\mathcal{M}$ is a $\chi\times \chi$ lower-triangular block matrix where each block is $N\times N$:

\begin{subequations}
    \begin{align}
    \mathcal{M} &:=
    	\begin{pmatrix}
		M_{11} & & & \\
		M_{21} & M_{22} & &\\	
		\vdots & \ddots & \ddots &\\
		M_{\chi 1} & \cdots & \cdots & M_{\chi\chi} 
	\end{pmatrix}\\ 
	M_{ab} &:=
	\begin{pmatrix}
		1 & & & M^{(1)}_{ab}\\
		M_{ab}^{(2)} & \ddots & &\\	
		& \ddots & 1 &\\
		& & M_{ab}^{(N-1)} & 1
	\end{pmatrix}\\
	\mathcal{C} &:= \begin{pmatrix}
		\v{c}_1 & \v{c}_2 & \cdots & \v{c}_\chi
	\end{pmatrix}^T\\
	\v{c}_a &:= \begin{pmatrix}
		\v{c}_a^{(1)} & \v{c}_a^{(2)} & \cdots & \v{c}_a^{(N)}
	\end{pmatrix}^T\\
	M^{(n)}_{ba} &:= -\braket{\W_{ab}^{(n)}, \W^{(n)}_{MM}}\\
	\v{c}^{(n)}_a &:= \braket{\sw_a^{(n)}, \sw_M^{(n)}}.
    \end{align}
\end{subequations}
Again, when $\W^{(n)}$ are all strictly upper triangular, the system is trivial and $\mathcal{R} = \mathcal{C}$. In general, however, this linear system is solvable in $O(N\chi^2)$ operations by exploiting the special structure of $\mathcal{M}$. Specifically, as each $M_{ab}$ is almost tridiagonal, one may solve $M_{ab} \v{x} = \v{c}$ in $O(N)$ with a combination of forwards and backwards substitution.\footnote{In particular, let $\alpha_k$ and $\beta_k$ be such that $x_k  = \alpha_k x_1  + \beta_k$. Put $(\alpha_1, \beta_1) := (1,0)$ and recursively compute  $\alpha_{k+1} = - (M_{ab}^{(k)}/1) \alpha_k$, $\beta_{k+1} = \beta_{k} + (c_k/1)$. Then $x_1 = \beta_{N+1}/(1-\alpha_{N+1})$ and the other $x_k$'s follow from $x_k = \alpha_k x_1 + \beta_k$.} This allows $\mathcal{M}$ to be solved by forwards substitution as
\begin{subequations}
\begin{align}
    	\mathcal{R} &:= \begin{pmatrix}
		\v{r}_1 & \v{r}_2 & \cdots & \v{r}_\chi
	\end{pmatrix}^T\\
	\v{r}_a &= M_{aa}^{-1}\Big[ \v{c}_a - \sum_{b=1}^{a-1} M_{ab} \v{r}_b \Big].
\end{align}
One can thus solve for $\mathcal{R}$ in $O(N\chi^2)$ operations.

\end{subequations}

The other parts of the algorithm are simple to adapt, and the total cost to find the left-canonical form is $\mathrm{O}(N \chi^3)$, linear in the unit cell size $N$. This is a highly practical algorithm for compressing the iMPOs that appear in 2d DMRG.

In this section we have shown that first degree iMPOs can always be brought to canonical forms. We then gave two algorithms for computing them, one which converges well for almost all local iMPOs, and one which is specialized to upper-triangular iMPOs. Appendix \ref{app:convergence} gives a yet-more-general algorithm, which is guarenteed to converge for \textit{all} first degree iMPOs. We now proceed to compression of infinite MPOs which, unlike canonicalization, hews closely to the finite case.

\section{Compression of iMPOs}
\label{sec:iMPO_compression}

We now explain how to compress infinite MPOs. The algorithm is directly analagous to the finite case: use canonical forms to make an almost-Schmidt decomposition of the operator, then truncate the almost-Schmidt values. Subsequently, Section \ref{sec:error_bounds} will show it is virtually optimal by bounding its error and Section \ref{sec:relation_to_control_theory} will link operator compression to problems in control theory. 

Suppose $\W_R$ is an iMPO in right canonical form. 
Using the gauge from Lemma \ref{lem:traceform} we may impose $c_0 = \braket{\1,\CC} = 0$ without loss of generality.\footnote{Actually we only need the $\v{t}$ part and set $\v{s} = 0$. in Eq. \eqref{eq:last_col_Id_free}.} There is then a gauge transform between right and left canonical form,
\begin{equation}
    C \W_R = \W_L C,   \label{eq:iMPOgauge}
\end{equation}
and $c_0 = 0$ implies $C = \diag(1 \; \mathsf{C} \; 1)$ is block-diagonal. (To ease bookkeeping, we treat $\W_R$ and $\W_L$ as square matrices of the same dimension, though the algorithm works equally well for non-square iMPOs.) The SVD of $C = USV^\dagger$, now implies 
	 \begin{equation}
    U S V^{\dagger} \W_R = \W_L 
    U S V^{\dagger} \,, \label{eq:gauge_svd}
\end{equation}  
where $U$ and $V$ are unitary. Therefore, we can use them to gauge transform $\W_{L,R}$ into
\begin{equation}
\widehat{Q} := U^{\dagger} \W_L U \text{ and }
\widehat{P} :=V^{\dagger} \W_R  V \,,
\end{equation} 
which are left and right canonical, respectively. Furthermore, \eqref{eq:gauge_svd} implies that they are related by the gauge transform
\begin{equation}
    \widehat{Q} S = S \widehat{P}  \,. \label{eq:QSisSP}
\end{equation}
Consequently, we obtain a \textbf{mixed canonical form} for the iMPO:
\begin{align}
    \widehat{H}_W = &\cdots \W_R  \W_R  \W_R \W_R \cdots \nonumber \\
      = &\cdots \W_L  \W_L C  \W_R  \W_R \cdots \nonumber \\
      = & \cdots \W_L  \W_L U S V^{\dagger}  \W_R  \W_R \cdots \nonumber \\
      =& \cdots  \widehat{Q}  \widehat{Q}  S  \widehat{P}  \widehat{P} \cdots  \,. \label{eq:QSP}
\end{align}
In the second line above, we inserted an $L$ matrix at $-\infty$ and moved to the center using \eqref{eq:iMPOgauge}; in the fourth line, $U$ and $V$ are moved to $-\infty$ and $+\infty$ respectively\footnote{These operations incur $\mathrm{O}(1)$ errors near the boundary, which are negligible for an iMPO.}.

Compression of iMPOs must be done on all bonds simultaneously and self-consistently, otherwise errors are incurred even when the compression is exact. To ensure this self-consistency, suppose for now that only $\chi' < \chi$ singular values are non-vanishing.\footnote{In this case, the optimal compression error is zero, but the procedure itself is identical to the case where the singular values are numerically small.} Then
\begin{equation}
	S = \PP \PP^\dagger S = S \PP \PP^\dagger = \PP S' \P^\dagger
	\label{eq:projector_conjuration}
\end{equation}
where $\PP$ is the projection matrix to the first $\chi'$ indices in the middle block
\begin{equation}
	\PP_{ab} = \begin{cases}
		\delta_{ab} & a \in \st{0,1,\dots,\chi',\chi+1}\\
		0 & \text{ otherwise,}
	\end{cases}
	\label{eq:projector_matrix}
\end{equation}
and $\mathsf{S}' = \diag(s_1,\dots,s_{\chi'})$. We can then us the fact that, in mixed canonical form, the position of $S$ can be freely translated to any site using \eqref{eq:QSisSP}. We can then use \eqref{eq:projector_conjuration} to ``conjure'' up projectors at every bond of \eqref{eq:QSP}:
\begin{align}
	H_W &= \cdots \Q \Q S \P \P \cdots\\ 
	&= 
	\cdots  \PP \PP^\dagger \widehat{Q} \PP  \PP^\dagger \widehat{Q}   \PP S' \PP^\dagger \widehat{P} \PP \PP^\dagger \widehat{P} \PP \PP^\dagger \cdots \nonumber \\
    &=  \cdots \widehat{Q}' \widehat{Q}'  S'  \widehat{P}'  \widehat{P}' \cdots 
\end{align}
where $\widehat{Q}' =  \PP^\dagger \widehat{Q} \PP$ and $\widehat{P}' =  
\PP^\dagger \widehat{P} \PP$ now have bond dimension $\chi'$. Either $\widehat{Q}'$ or $\widehat{P}'$ can be returned as a compression of the original iMPO; one may make a choice keeping in mind that $ \widehat{Q}'$ and $ \widehat{P}'$ are approximately left and right canonical, respectively. Since we have assumed that the singular values beyond $\chi'$ vanish exactly, this is an exact compression. When this is not true, there will be some finite error (see Sec. \ref{sec:error_bounds})  but the procedure is unchanged. Algorithm \ref{alg:iMPO_compression} gives an implementation, which we reiterate works also for non-square matrices.

\begin{figure}
\begin{algorithm}[H]
	\caption{iMPO Compression}
	\label{alg:iMPO_compression}
	\setstretch{1.35}
	\begin{algorithmic}[1]
		\Procedure{iCompress}{$\W,\eta$} \Comment{Cutoff $\eta$}
		\State $\W_R \gets \textsc{RightCan}[\W]$ 
		\State $\W_R \gets R \W_R R^{-1}$ so that $\CC_0 = 0$  \Comment{Use $\v{t}$ from Lem. ~\eqref{lem:traceform}}
	    \State $\W_L, C \gets
	    \textsc{LeftCan}[\W_R] $
		\State $(U, S, V^\dagger) \gets \textsc{SVD}[C]$
		\State $\Q, \P \gets U^{\dagger} \W_L U \,,\, V^{\dagger} \W_R  V $
		\State $\chi' \gets  \max \{a \in [1,\chi]: s_{a} > \eta \}$ \Comment{Defines $\PP$ \eqref{eq:projector_matrix}}
		\State $\Q, S, \P \gets \PP^\dagger \Q  \PP, , \PP^\dagger S \PP , \PP^\dagger \P  \PP$
		\State \textbf{return} $\P$ \Comment{One could also return $\Q$.}
		\EndProcedure	
	\end{algorithmic}
\end{algorithm}
\end{figure}

\section{Operator Entanglement and Error Bounds}
\label{sec:error_bounds}

In this section we discuss the error resulting from compressing an operator. The first stage in our analysis will be to show that, just as the singular values of an MPS are closely related to the entanglement, the almost-Schmidt values of an MPO are closely related to the \textit{operator} entanglement entropy. We will immediately apply this relation to answer a practical question: how accurate is our compression algorithm? We will derive a quantitative bound on the error and show the algorithm is $\epsilon$-close to optimal. Finally, we will show that the change in the sup norm is small under compression and hence our compression algorithm is suitable to use when finding ground states.

\subsection{Relation to Operator Entanglement}

To assess the accuracy of our MPO compression scheme, we require a point of comparison. For this, we recall that all MPO's can be thought of as (non-injective) MPSes, and can be compressed via the true Schmidt decomposition. We will refer to this as the ``MPS'' compression method. For iMPOs, the iMPS method will simply fail, due to the Jordan block structure and the reasons detailed in Section \ref{sec:iMPO_canonical_form}, as well as below, so our compression scheme has no obvious competitor in the infinite case. On a finite chain, however, both methods are valid, and it is meaningful to compare the MPO and ``MPS'' methods.

It is well-known that the matrix product compression of a state is intimately related to its bipartite entanglement spectrum. The same notion can be defined for an operator $\widehat{H}$ viewed as a state. If we consider a finite chain $[1, N]$ and make an entanglement cut on bond $(n,n+1)$, then the (true) operator \textbf{Schmidt decomposition} is
\begin{equation}
	\widehat{H} = \sum_{a=-1}^\chi \lambda_a \O_L^a \otimes \O_R^a \,,\, \Tr[\O_L^{a\dagger} \O_L^b] = \delta^{ab} 
	\label{eq:operator_schmidt_decomposition}
\end{equation}
(and the same for $R$), where the $\O_L$'s and $\O_R$'s act only on the left or right of the cut respectively. The Schmidt values $\lambda_{-1} \ge \lambda_0 \ge \cdots \lambda_{\chi} > 0$ are unique and positive.\footnote{The irregular index convention for the $\lambda_a$'s will prove convenient below.} Note that we do \textit{not} normalize $\sum_a \lambda_a^2$ to unity. 

The reason the MPO compression scheme works is the close, quantitative, resemblance between the almost-Schmidt decomposition, Eq. \eqref{eq:almost_schmidt_decomposition}, and the true Schmidt decomposition, Eq. \eqref{eq:operator_schmidt_decomposition}. To see this, we start with the almost-Schmidt decomposition and convert it to the true one. Suppose we have an almost-Schmidt decomposition (Definition~\ref{def:almost_Schmidt}):
	\begin{align}
	\widehat{H}
	&= \widehat{H}_L \otimes \1_R + \1_L \otimes \widehat{H}_R + \sum_a s_a \widehat{h}_L^a \otimes \widehat{h}^a_R \nonumber \\
 &=	\begin{pmatrix}
		\1_L & \widehat{\v{h}}_L & \widehat{H}_L
	\end{pmatrix}
	\begin{pmatrix}
	        1 & & \\  & \mathsf{S} & \\ & & 1
	\end{pmatrix}\begin{pmatrix}
		\widehat{H}_R & \widehat{\v{h}}_R & \1_R
	\end{pmatrix}^T. \label{eq:almost-Schmidt_entanglement}
\end{align}
where $\mathsf{S} = \diag(s_1 \ge \dots \ge s_\chi)$ is a diagonal matrix built from the almost-Schmidt values and $\{\1_{L/R},\widehat{h}_{L/R}^1,\dots,\widehat{h}_{L/R}^\chi\}$ are already orthonormal. All we need to do to get to the true Schmidt decomposition is to add $\widehat{H}_{L/R}$ to the list and orthonormalize. Explicitly, we apply a Gram-Schmidt update:
\begin{equation}
	\begin{pmatrix}
		\1_L & \widehat{\v{h}}_L & \widehat{H}_L
	\end{pmatrix}
	= 
	\begin{pmatrix}
		\1_L & \widehat{\v{h}}_L & \widehat{H}_L'
	\end{pmatrix}
	\begin{pmatrix}
		1 & 0 & 0\\
		0 & \Id & \v{p}_L\\
		0 & 0 & \mathcal{N}_L
	\end{pmatrix},
	\label{eq:pLNL}
\end{equation}
where $p_L^a := \braket{\widehat{h}_L^a, \widehat{H}_L}$ ensures orthogonality and $\mathcal{N}_L := \dn{\widehat{H}_L}_F^2 - \dn{\v{p}_L}_F^2$ enforces normalization, so that $\st{\1,\widehat{h}_L^1,\dots,\widehat{h}_L^\chi,\widehat{H}_L'}$ are now orthonormal. Doing the same on the right side, the operator now becomes
\begin{equation}
	\widehat{H}= \begin{pmatrix}
		\1_L & \widehat{\v{h}}_L & \widehat{H}_L'
	\end{pmatrix}
	\underbrace{
	\begin{pmatrix}
		\mathcal{N}_R & \v{p}_R & 0\\
		0 & \mathsf{S} & \v{p}_L\\
		0 & 0 & \mathcal{N}_L
	\end{pmatrix}}_{M :=}
	\begin{pmatrix}
		\widehat{H}_R' & \widehat{\v{h}}_R & \1_R
	\end{pmatrix}^T. \label{eq:Schmidt_for_MPO}
\end{equation}
It follows that the true Schmidt values, i.e. the entanglement spectrum, is given by the singular values of the matrix $M$. The essential point is that $M$ and $S$ are almost the same matrix --- and so their spectra are as well. We compute the precise relation between the singular values of $M$ and its matrix elements in Appendix \ref{app:schmidt} with rank-one updates, and import those results to here for show the optimality of our method.

The dominant feature of the entanglement spectrum is the separation of scales between extensive and intensive values. Suppose $\widehat{H}$ comes from a translation-invariant MPO on $N\gg 1$ sites, and our entanglement cut is at some bond $(n,n+1)$ near the middle. Then the matrix elements of $M$ have a separation of scales:
\begin{equation}
	\mathcal{N}_L, \mathcal{N}_R \in \Theta(N),\quad  s_{a}, \v{p}_L, \v{p}_R \in O(1). \label{eq:scales}
\end{equation}
Without the $\v{p}$'s $M$ would be diagonal. There would then be two extensive singular values, namely $\mathcal{N}_L^2$ and $\mathcal{N}_R^2$, and $\chi$ intensive ones, $s_1^2,\dots, s_\chi^2$. Appendix \ref{app:schmidt} shows that the extensive/intensive separation remains after the $\v{p}$'s have been taken into account:
\begin{equation}
    \lambda_{-1}^2,\lambda_{0}^2 \in \mathrm{\Theta}(N) \,,\,
    \lambda_a \in \mathrm{O}(1) \,,\,  a = 1, \dots \chi \,. \label{eq:schmidt_structure}
\end{equation}
This result illustrates again why the MPS compression scheme must fail with iMPOs: the extensive Schmidt values diverge in the thermodynamic limt. Normalizing the Schmidt values, that is, considering $\sigma_a := \lambda_a / \sqrt{\sum_b \lambda_b^2}$, would not be helpful: for any $a > 0$, $\sigma_a \in \mathrm{O}(1/N)$ vanishes in the thermodynamic limit, so that the normalized spectrum retains no nontrivial information about the operator.

Intuitively, the separation of scales is a consequence of locality. Indeed, the two extensive Schmidt vectors are very close to $\widehat{H}_L \otimes \1_R$ and $\1_L \otimes \widehat{H}_R$ --- exactly the operators that the block structure of our MPOs keeps track of ``for free''. In other words, the local MPO construction automatically keeps track of the extensive part of the spectrum (to a good approximation), and we need only deal with the \textit{intensive} part. This is precisely the role of the almost-Schmidt decomposition. 

\subsection{Comparison of MPO and ``MPS-style" Compression}

We now make to a quantitative comparison between MPO and ``MPS-style'' methods on a finite chain. If we compress an operator from bond dimension $\chi$ down to $\chi'$ with either scheme, the new operators are
\begin{align*}
	\widehat{H}_\text{MPS} &= \sum_{a=-1}^{\chi'} \lambda_a \O_L^a \otimes \O_R^a \,, \\
	\widehat{H}_\text{MPO} &= \widehat{H}_L \otimes \1_R + \1_L \otimes \widehat{H}_R + \sum_{a=1}^{\chi'} s_a \widehat{h}_L^a \otimes \widehat{h}_R^a  \,,
\end{align*}
respectively. The orthogonality properties of the decompositions tell us
\begin{align}
\dn{\widehat{H} -  \widehat{H}_{\text{MPS}}  }_F^2 =  &\sum_{a=\chi'+1}^\chi \lambda_a^2 := \varepsilon_{\text{MPS}}(\chi'),\\
\dn{\widehat{H} -  \widehat{H}_{\text{MPO}}  }_F^2 =  &\sum_{a=\chi'+1}^\chi s_a^2 := \varepsilon_{\text{MPO}}(\chi').
\end{align}
To compare these, we use the eigenvalue interlacing relation (derived in Appendix~\ref{app:schmidt})
\begin{align}
  s_a \ge \lambda_a \ge s_{a+2}   \,,\, \forall \, a \in [1, \chi - 2] \,.
  \label{eq:interlacing}
\end{align}
We can therefore conclude
\begin{equation}
\varepsilon_{\text{MPS}} (\chi') \le \varepsilon_{\text{MPO}} (\chi') \le 
\varepsilon_{\text{MPS}}(\chi'-2) \,.
\label{eq:single_bond_compression_error}
\end{equation}
This means the difference between our scheme and the MPS scheme is within \textit{two} Schmidt values, which is negligible, since in practice one always truncates sufficiently deep into the spectrum that $s_{\chi'}$ is small. 

Since the MPS truncation scheme is known to be optimal\cite{verstraete2006matrix}, we can make the error from our MPO scheme $\varepsilon$-close to optimal, by truncating at $\chi'$ large enough that $\n{s_{\chi'} - s_{\chi'-2}} < \varepsilon$. There is no strict guarantee that this is possible, but for physical operators the entanglement spectrum usually becomes a continuum with increasingly small separation. It is in this sense that our truncation scheme is $\varepsilon$-close to optimal. We remark that the error analysis above applies to the truncation of a finite MPO on an individual bond. It would be interesting to analyze the global error of an iMPO compression, but we expect it to be almost exactly the same as the iMPS case.  

In summary, the MPO compression scheme only captures the intensive Schmidt values, avoiding the pathological, extensive parts of the entanglement spectrum. As a result, we obtain an excellent approximation to the optimal ``MPS'' compression while preserving the locality structure.

\subsection{Ground State Error Bound}
We conclude this section by discussing the special case of Hamiltonians. We will show that the change in the ground state of a Hamiltonian under compression is small and, just as one would expect, the error is proportional to the weight of the truncated singular values.

To frame the question, let us back up for a second. We envisage two common applications for our compression algorithm: compressing operators for use in infinite-temperature dynamics, and compressing Hamiltonians whose naive MPO bond dimensions are too large for DMRG. For the first, the figure of merit for the compression error is the change in the Frobenius norm of the operator --- which we have already shown is small and proportional to the sum of the truncated singular values. For the second, however, the figure of merit is the change in the \textit{sup} norm
	\begin{equation}
		\dn{\widehat{O}}_s^2 := \sup_{\ket{\Psi}} \frac{\braket{\Psi|\widehat{O}\widehat{O}|\Psi}}{\braket{\Psi|\Psi}}, 
		\label{eq:sup_norm}
	\end{equation}
	 or, perhaps more physically, the ground state energy. Our task for this section is to show the change in the ground state energy is also small under compression\footnote{We note that small changes to the Hamiltonian can cause dramatic changes to the ground state wave\textit{function}. For example, if the Hamiltonian is $\epsilon$-close to a first order phase transition, like $H = \sum (1-\frac{\epsilon}{2}) \widehat{Z} + \widehat{X} \widehat{X}$, then an $\epsilon$ change (such as $\epsilon \widehat{Z}$) in the Hamiltonian will completely alter the ground state, even though the change in the ground state energy will still be $\epsilon$-small. Far from phase boundaries, the ground state wavefunction and its expectation values should change continuously with the Hamiltonian.}.

As mentioned above, the class of first degree operators to which our algorithms apply is broader than the class of physically reasonable Hamiltonians. For instance, there are projectors which can be represented with small bond dimension MPOs which are first degree, but whose ground states energies are \textit{not} extensive. If one feeds in an first degree operator which is a ``non-Hamiltonian" with a superextensive ground state energy, then the error in the ground state energy may be very large. But, while this is mathematically true, such operators do not make sense as physical Hamiltonians. We therefore exclude them for consideration and restrict ourselves to operators which are sums of terms with support on at most $k$ sites.\footnote{Similar bounds apply to broader classes of Hamiltonians, but require greater technical complexity.} This allows us to give the following bound.
	\begin{prop}
		\label{prop:GS_error_bound}
		Suppose $\widehat{H}$ is an operator on $N$ sites with on-site dimension $d$ and at most $k$-body interactions. Suppose $\widehat{H}$ can be written in the form
		\begin{equation}
			\widehat{H} = \widehat{H}_L \1_R + \1_L \widehat{H}_R + \sum_{a,b=1}^\chi \widehat{h}_L^a M_{ab} \widehat{h}_R^b 
		\end{equation}
		where each $\widehat{h}_S^a$ is a unique tensor product of on-site operators (such as a Pauli string $XYXZ$ or $\widehat{c}^\dagger \widehat{c} \widehat{c}^\dagger \widehat{c} $ for fermions). 

		If we take the singular value decomposition $M = USV^\dagger$ and define $O_L := \widehat{h}_L U_L$, $O_R := V_R^\dagger \widehat{h}_R$, then for $\chi' < \chi$, we can define the compressed Hamiltonian
	\begin{equation}
		\widehat{H}' := \widehat{H}_L \1_R + \1_L \widehat{H}_R + \sum_{a=1}^{\chi'} \widehat{O}_L^a s_a \widehat{O}_R^a 
	\end{equation}
	where $\st{s_1 \ge s_2 \ge \cdots \ge s_\chi}$ are the singular values. Then the change in the ground state energy $\delta E$ satisfies
	\begin{equation}
		\delta E \le \dn{\widehat{H}-\widehat{H}'}_s \le \sqrt{d^k \sum_{a=\chi'}^\chi s_a^2} \le d^{\frac{k}{2}} \dn{\widehat{H} - \widehat{H'}}_F.
		\label{eq:ground_state_error_bound}
	\end{equation}

	\end{prop}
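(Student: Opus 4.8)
The bound is a chain of three relations, and the plan is to dispose of the two outer ones quickly and concentrate on the middle inequality $\dn{\widehat{H}-\widehat{H}'}_s \le \sqrt{d^k \sum_a s_a^2}$, which carries all the content. Throughout I use that $\widehat{H}$ is Hermitian and the $\widehat{h}^a_{L/R}$ are Hermitian strings, so $M$ is real, $U$ and $V$ may be taken real orthogonal, and $\widehat{H}'$ is Hermitian; I abbreviate $\delta\widehat{H} := \widehat{H}-\widehat{H}' = \sum_{a=\chi'+1}^{\chi} s_a\,\widehat{O}^a_L \otimes \widehat{O}^a_R$. The leftmost inequality is the familiar variational estimate: for any normalized $\ket{\psi}$, $\braket{\psi|\widehat{H}|\psi} = \braket{\psi|\widehat{H}'|\psi} + \braket{\psi|\delta\widehat{H}|\psi} \ge E_0(\widehat{H}') - \dn{\delta\widehat{H}}_s$, and minimizing the left side over $\ket{\psi}$ gives $E_0(\widehat{H}) \ge E_0(\widehat{H}') - \dn{\delta\widehat{H}}_s$; the symmetric argument gives the reverse, so $\delta E = |E_0(\widehat{H})-E_0(\widehat{H}')| \le \dn{\delta\widehat{H}}_s$.

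The rightmost relation is in fact an equality. The operators $\widehat{O}^a_L = \sum_b \widehat{h}^b_L U_{ba}$ are orthonormal, since the $\widehat{h}^b_L$ are distinct tensor products of an orthonormal single-site basis --- hence themselves orthonormal under $\IP{\cdot}{\cdot}$ --- and $U$ is unitary; likewise the $\widehat{O}^a_R$, and since the left and right supports are disjoint the products $\widehat{O}^a_L\otimes\widehat{O}^{a'}_R$ form an orthonormal set. Hence $\dn{\delta\widehat{H}}_F^2 = \sum_{a>\chi'} s_a^2$ and $d^{k/2}\dn{\delta\widehat{H}}_F = \sqrt{d^k\sum_{a>\chi'}s_a^2}$ (the difference between this index range and the $\sum_{a=\chi'}^{\chi}$ written in the statement is only the usual off-by-one in labelling which singular values are discarded).

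For the middle inequality I would first isolate the elementary fact that an operator $\widehat{A}$ --- in our case the Hermitian $\delta\widehat{H}$ --- acting as the identity outside a region $R$ of $m$ sites satisfies $\dn{\widehat{A}}_s \le \sqrt{d^m}\,\dn{\widehat{A}}_F$. Writing $\widehat{A}=\widehat{a}\otimes\1_{R^c}$ with $\widehat{a}$ on the $d^m$-dimensional space $\mathcal{H}_R$, tensoring with the identity leaves the sup norm unchanged, and $\dn{\widehat{A}}_s^2 = \max\spec(\widehat{a}^\dagger\widehat{a}) \le \Tr_R[\widehat{a}^\dagger\widehat{a}] = d^m\,\dn{\widehat{A}}_F^2$, the last equality because the normalization $\IP{\1}{\1}=1$ means $\dn{\widehat{A}}_F^2 = \Tr_R[\widehat{a}^\dagger\widehat{a}]/d^m$. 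Now $\delta\widehat{H} = \sum_{b,c}\widetilde{M}_{bc}\,\widehat{h}^b_L\otimes\widehat{h}^c_R$ with $\widetilde{M} = U\widetilde{S}V^\dagger$ the truncated coefficient matrix; since $\widehat{H}$ has interactions supported on at most $k$ sites, every string $\widehat{h}^b_L$ that occurs is part of a $\le k$-body term of $\widehat{H}$ and so sits within $k$ sites of the cut, and similarly for $\widehat{h}^c_R$, so the straddling part of $\widehat{H}$ --- and in particular $\delta\widehat{H}$, which lies in its span --- is supported on a window of at most $k$ sites. Applying the lemma with $m=k$ yields $\dn{\delta\widehat{H}}_s \le \sqrt{d^k}\,\dn{\delta\widehat{H}}_F = \sqrt{d^k\sum_{a>\chi'}s_a^2}$, which closes the chain.

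The one place that requires genuine care --- the main obstacle --- is precisely this support count. The truncated matrix $\widetilde{M}$ generically has a denser sparsity pattern than $M$ itself, since an SVD truncation can populate entries where $M$ vanished, so one must verify that $\delta\widehat{H}$ still cannot involve any string $\widehat{h}^b_L\otimes\widehat{h}^c_R$ outside what the $k$-locality of $\widehat{H}$ permits. The clean way to do this is to track the supports of all $\widehat{h}^b_L$ (indexing nonzero rows of $M$) and all $\widehat{h}^c_R$ (nonzero columns) through the regular-form decomposition and to bound the size of their union; this union is controlled by $k$ and the interaction range, and whether the resulting exponent is exactly $d^{k/2}$ --- rather than the weaker $d^{\mathrm{O}(k)}$ one gets from a crude union bound over the two sides --- is the delicate point. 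Everything else, namely the variational argument, the orthonormality computation, and the Schatten-$\infty$-versus-$2$ comparison underlying the lemma, is routine.
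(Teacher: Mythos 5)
Your treatment of the two outer relations is fine and agrees with the paper: the variational estimate gives $\delta E\le\dn{\widehat{H}-\widehat{H}'}_s$, and orthonormality of the $\widehat{O}_L^a\otimes\widehat{O}_R^a$ gives $\dn{\widehat{H}-\widehat{H}'}_F^2=\sum_{a>\chi'}s_a^2$. The problem is the middle inequality, which you yourself flag as ``the delicate point'' and never close --- and that step is the entire content of the proposition. Your route is to bound the support of $\delta\widehat{H}$ by $k$ sites and then invoke $\dn{\widehat{A}}_s\le d^{m/2}\dn{\widehat{A}}_F$ with $m=k$. But ``at most $k$-body'' does not mean ``supported within $k$ sites of the cut'': a $2$-body term such as $\widehat{Z}_{i}\widehat{X}_{j}$ with $|i-j|$ arbitrarily large still straddles the cut, so the strings $\widehat{h}_L^a,\widehat{h}_R^b$ appearing in $M$ need not lie in any window of $k$ sites, and $\delta\widehat{H}$ need not either. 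Even under the more favorable finite-range reading (every term inside a contiguous block of $k$ sites), the left strings occupy up to $k-1$ sites on one side of the cut and the right strings up to $k-1$ sites on the other, so a support count gives a window of size up to $2(k-1)$ and hence at best $d^{k-1}$, not $d^{k/2}$. So the argument as written does not produce the stated constant, and no repair of the ``support of $\delta\widehat{H}$'' strategy can, because that support is simply not controlled by $k$.

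The paper's proof takes a different route that never bounds the support of $\delta\widehat{H}$ as a whole. It works term by term in the SVD basis: first $\dn{\delta\widehat{H}}_s^2\le\sum_{a>\chi'}s_a^2\,\dn{\widehat{O}_L^a\widehat{O}_R^a}_s^2$, and then each factor is bounded by $d^k$ by expanding $\widehat{O}_L^a\widehat{O}_R^a=\sum_{b,c}U^{ba}V^{ac}\,\widehat{h}_L^b\widehat{h}_R^c$ in the orthonormal strings, using $\dn{\widehat{h}_L^b\widehat{h}_R^c}_s^2\le\dn{\widehat{h}_L^b\widehat{h}_R^c}_f^2=d^k$ (each individual string pair acts nontrivially on at most $k$ sites) together with the fact that the relevant column of $U$ and row of $V^\dagger$ are unit vectors, so the $d^k$ survives the rotation: $\dn{\widehat{O}_L^a\widehat{O}_R^a}_f^2=d^k\sum_b|U^{ba}|^2\sum_c|V^{ac}|^2=d^k$. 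In other words, locality is exploited only string by string, and orthonormality plus unitarity transfer the per-string bound to the SVD operators; this is the mechanism your proposal is missing. If you want to rescue your write-up, you should replace the support-counting paragraph with this term-wise argument (and then confront its own subtleties, e.g.\ the cross terms hidden in the first inequality and the fact that different string pairs have different supports), rather than trying to sharpen the union bound.
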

	The proof is given in Appendix \ref{app:ham_error_bound}.

In other words, the change in the ground state energy from truncation is proportional to the truncated singular values. It is crucial that this error does not involve $N$, the number of sites, so one can easily take the thermodynamic limit to find that, in an infinite system, the change in the ground state energy from truncating on every bond is also small. We also note that, although we have expressed this bound in terms of operators for convenience, this bound also applies to our MPO compression algorithm. Thus one may take a Hamiltonian, write it in a suboptimal MPO representation with a large bond dimension, then compress it to a small bond dimension and run DMRG or other algorithms to find its ground state energy with only a small error. This is particularly useful in the case of long-ranged interactions or two-dimensional problems, where the MPO dimensions for the naive MPOs are can be impractically large.

\section{Relation to Control Theory}
\label{sec:relation_to_control_theory}

Remarkably, our MPO canonicalization procedure is a generalization of an extremely well-studied problem in the field of control theory known as ``model order reduction.'' With this connection in mind, one can use highly optimized libraries from that community to compute MPOs compressions for general two-body Hamiltonians. The relation to control theory was noted previously in Refs~\cite{pirvu2010matrix,zaletel2015time}. Morally, one can think of writing the interaction potential as a sum of decaying exponentials. The MPO, in turn, can then be written as the sum of the small bond dimension MPOs for each operator. Our compression procedure is a strict generalization of this technique: if the input to our algorithm is a two-body interaction, then it \textit{automatically} reproduces the sum of exponentials technique. On the other hand, higher-body Hamiltonians do not obviously map to the problem solved in control theory, so it would be interesting to pursue whether our procedure has useful implications for control theory.

The control systems setting is a ``state-space'' system: a dynamical system whose state is parameterized by a $\chi$-dimensional vector $\v{x}(t)$ with linear dynamics in discrete time. The dynamics are defined by the update rule 
\begin{align}
	\v{x}(t) &= A \v{x}(t-1) + B \v{u}(t) \nonumber \\    
	\v{y}(t) &= C \v{x}(t) + D \v{u}(t) \label{eq:ABCD_control}
\end{align}
where $\v{u}(t)$ is an $n_i$-dimensional vector of possible ``input'' perturbations, $\v{y}(t)$ a $n_o$ dimensional vector of ``outputs,'' and $A$ is a matrix of size $\chi \times \chi$, $B$ is $\chi \times n_i$, $C$ is $n_o \times \chi$, and $D$ is $n_o \times n_i$. The data can thus be bundled into a $(n_o + \chi) \times (n_i + \chi)$ matrix
$\begin{pmatrix}
C & D\\
A & B
\end{pmatrix}$, which was the motivation for our MPO block conventions. One also defines \textbf{transfer function} of the system, $G(t) := C A^t B$, an $n_o \times n_i$ matrix which describes the linear input-output response at time $t$.

Two fundamental questions arise in the control theory setting. (I) Given a set of observations $G(t)$, what state-space system $(A,B,C,D)$ can reproduce the observations? (II) Given a state-space system of dimension $\chi$, can we produce a state-space system of lower order $\chi' < \chi$ which approximates $G(t)$? This problem could arise, for example, when modelling  a complex electrical circuit, where $\v{x}(t)$ parameterizes the voltages on wire segments, which we wish to approximate by a simpler ``lumped element" circuit with fewer components.

It is easy to see that a state-space system is equivalent to a MPO in the particular case of a \textit{two-body} Hamiltonian.
A two-body interaction takes the general form 
\begin{equation}
\widehat{H} = \sum_{x > y} \sum_{\alpha, \beta = 1}^{n_o, n_i} \widehat{O}^\alpha_x V^{\alpha\beta}(x - y) \widehat{P}^\beta_y \label{eq:H_twobody}
\end{equation}
where $\{\1\} \cup \{\widehat{O}_x^{\alpha}\}_{\alpha=1}^{n_0}$ and $\{\1\} \cup \{\widehat{P}_y^{\beta}\}_{\beta=1}^{n_i}$ are orthonormal sets of operators on sites $x$ and $y$ respectively. On the other hand, each set of matrices $A, B, C$ as in~\eqref{eq:ABCD_control} define an MPO in regular form via
$$ \CC = \widehat{\v{O}} C \,,\, \BB = B \widehat{\v{P}} \,,\, \AA = A \1 \,,\, \DD = 0  \,,$$
where $\widehat{\v{O}} = ( \widehat{O}^{\alpha} )_{\alpha=1}^{n_0}$ and $\widehat{\v{P}} = ( \widehat{P}^{\beta} )_{\beta=1}^{n_0} $. It is not hard to check this MPO represents the Hamiltonian~\eqref{eq:H_twobody} if and only if 
$$ \left[ C A^r B \right]^{ab} = V^{ab}(r) \,. $$
This data is in precise agreement with that of state-space system, with the transfer matrix $G(t)$ of the state-space encoding the two-body interaction $V(r)$. One could easily include on-site terms as well, in which case $\DD$ would be non-zero. 

With this mapping, we see that problems (I) and (II) are equivalent to finding an MPO which reproduces a desired two-body interaction, and approximating an MPO by one of lower bond dimension. In the control theory literature, (I) has been solved by an algorithm of Kung \cite{Kung1978}, and (II) by ``balanced truncation''\cite{silverman1980optimal}, which we focus on here.

The starting point of the balanced truncation algorithm is the
``controllability'' Gramian $X$ and the ``observability'' Gramian $Y$, 
\begin{subequations}
\begin{align}
X &\equiv \sum_{k=0}^\infty A^k B B^\dagger \left({A^\dagger}\right)^k \\
Y & \equiv \sum_{k=0}^\infty \left({A^\dagger}\right)^k C^\dagger C  A^k
\end{align}
\end{subequations}
They are determined by the discrete Lyapunov equations
\begin{align}
 A X A^\dagger  &= X - B B^\dagger \\
 A^\dagger Y A  &= Y - C^\dagger C 
\end{align}
We can identify these as the fixed point condition for the left/right eigenvectors of the right/left transfer matrix $T_{R/L}$ of $\widehat{W}$  (c.f. $T_V$ above) in the particular case that $\widehat{A} = A \mathds{1}$.
The controllability Gramian $X$ is nothing other than the  relevant block of the dominant eigenvector  of the transfer matrix, and similarly for $T_L$ and $Y$.

The idea of balanced truncation is to use the gauge freedom $A \to g A g^{-1}, C \to C g^{-1}, B \to g B$, under which the Gramians transform as $X \to g X g^\dagger, Y \to {g^\dagger}^{-1}Y g^{-1}$, to demand that the Gramians be equal and diagonal: $X = Y = \textrm{diag}(\Sigma)$. This is called the \textit{balanced} condition. The $\Sigma$ are called the ``Hankel singular values'' for reasons we will explain shortly.
In operator language, this is nothing other than the almost-Schmidt decomposition Eq. \eqref{eq:almost_schmidt_decomposition} with values $s_a = \Sigma_a$.
In balanced truncation, the model is then  reduced by keeping the largest $\Sigma_a$, which is known to be optimal with respect to a particular norm, the ``Hankel norm''~\cite{al1987error}.

Indeed, with this mapping in mind, the balanced truncation algorithms found in the literature are equivalent to the canonicalization procedure discussed here: solve the Lyapunov equations for the Gramians $X, Y$ (equivalent to finding the dominant eigenvector of transfer matrix), compute the Cholesky decompositions $X = R R^\dagger$ and $Y = L L^\dagger$, and then SVD $U \Sigma V = L^\dagger R$, and let $g = \Sigma^{-1/2} V R^{-1}$.

Why are they called Hankel singular values? This brings us to Kung's algorithm, which obtains an approximate state-space representation given the desired output $G(t) \sim V(r)$.
For simplicity, let's consider the simplest $n_i = n_o = 1$ case, arising for instance from a density-density interaction $\widehat{H} = \sum_{i, r > 0} \widehat{n}_{i+r} V(r) \widehat{n}_{i}$.
It is easy to see that in the mixed-canonical form at bond $(0,1)$, the left / right operators can be chosen to be
$\widehat{h}^{i}_L = \widehat{n}_{-i}, \widehat{h}^{i}_R = \widehat{n}_{i+1}$ for $i\geq 0$ so that $H = \sum_{i, j} \widehat{h}^{i}_L V(i + j + 1) \widehat{h}^{j}_R$.
The middle tensor then takes the form
\begin{align}
M = 
\begin{pmatrix}
 & V(3) & V(2) & V(1)\\
\cdots & V(4) & V(3) & V(2)\\
& V(5) & V(4) & V(3)\\
\cdot^{\cdot^{\cdot}} & & \vdots & \\
\end{pmatrix},
\end{align}
 which is by definition a ``Hankel matrix,'' with singular values $M = U \Sigma V$ consequently referred to as the Hankel singular values. 

The connection results in highly optimized routines to compute the optimal $A, B, C$ from the desired $V$ using the Hankel structure. These are provided, for example, in the MATLAB Control Systems Toolbox as \texttt{balred, imp2ss} and in the SLICOT library~\footnote{See \hyperlink{http://slicot.org}{\texttt{http://slicot.org}}.} as \texttt{AB09AD}. The latter has a convenient Python API provided in the ``\texttt{control}'' library~\footnote{See \hyperlink{http://python-control.org}{\texttt{http://python-control.org}}.}, which we have used with great success for quantum Hall DMRG \cite{zaletel15}.

While the equivalence is clear in the two-body case, what is the control theory interpretation of canonicalizing and truncating a more general MPO? This seems like an interesting question.

\section{iMPO Examples}
\label{sec:examples}

This section provides two numerical examples of iMPO compression. This is where our almost-Schmidt compression scheme truly shines, as the standard ``MPS''-type truncation schemes do not work at all in this regime. Indeed, to our knowledge, our algorithm is the only one known to work for general iMPOs. We first give a ``proof-of-concept'' example for long-ranged Hamiltonians and then give an iMPO implementation of the Lanczos algorithm.

\begin{figure}
	    \centering
	    \includegraphics{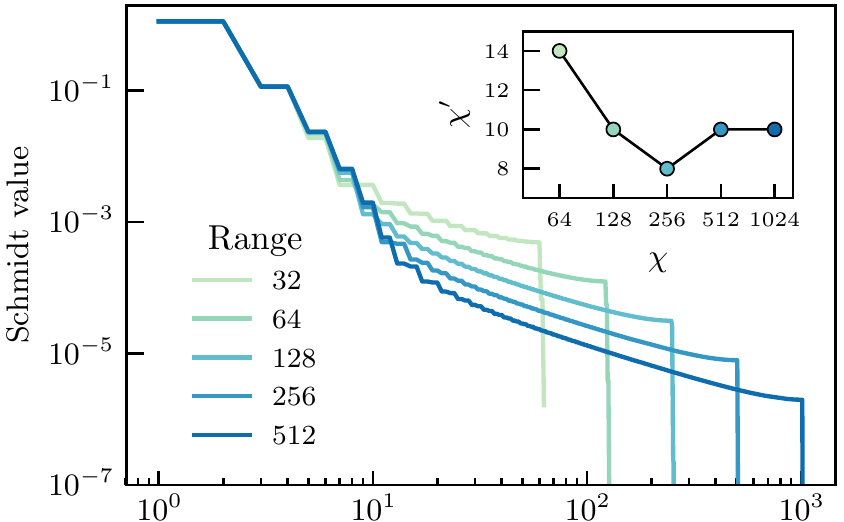}
		\caption{Compression of the iMPO representing~\eqref{eq:H_2_example}. Main: The almost-Schmidt spectra of iMPOs representing $\widehat{H}_2$ with spatial cutoff $R$ ranging from $32$ to $512$. As $R\to \infty$, the largest $s_a$  converge to a point-wise limit, while the long tails rapidly decays (so the latter are finite-$R$ artifacts). Inset: the bond dimensions of the iMPO before and afterwards with a cutoff of $\varepsilon = 10^{-4}$.  Other numerical thresholds are the same as Fig. \ref{fig:mpo}.}
    \label{fig:impo}
\end{figure}

We consider the three-body Hamiltonian
\begin{equation}
	\widehat{H}_2 = \sum_{n\in \Z} \sum_{x, y > 0} 
	\widehat{Z}_{n-x}  \widehat{X}_{n} \widehat{Z}_{n+y} J_{x} J_y, \quad 
	J_r = r^{-2} \,,
	\label{eq:H_2_example}
\end{equation}
with power-law interaction. To encode the Hamiltonian (which has a formally infinite bond-dimension), we give the power-law interaction a large spatial cutoff $R$: $J_r := 0$ for $r > R$ , which we vary, so that the pre-compression bond dimension is $\chi = 2 R$. 
The pre-compressed iMPOs have a block structure specific to three-body interaction; for example, when $R = 3$, we have
   \begin{equation}
    \left(\begin{array}{@{}c|ccc|ccc|c@{}}
		\1&  \widehat{Z} & 0  & 0  & &  & \\ \hline
		   &       0   &  \1 & 0     & J_1 \widehat{X}  &0  &0 & \\
		   &         0 &   0 & \1 &J_2 \widehat{X}  &0 &  0& \\ 
		   &      0    & 0   & 0  &J_3 \widehat{X}  &0 & 0 & \\  \hline
		   &   &   & &0 &  \1 &0 &  J_1\widehat{Z}  \\ 
		   &   &   & & 0 &0  &  \1 &  J_2 \widehat{Z}   \\
		      &   &   & & 0 & 0 &  0 &   J_3\widehat{Z}   \\ \hline
		  &   &   & &  & &  &\1\\
	\end{array}\right) \,. \label{eq:W_example0}
\end{equation} 
 We then compress them the iMPO compression routine~(Algorithm \ref{alg:iMPO_compression}) which calls the upper-triangular canonical form subroutine (Algorithm~\ref{alg:left_can_form_triangular}). The results are given in Figure \ref{fig:impo}.  
For any reasonable tolerance, as $R\to\infty$, the compressed bond dimension stabilizes to a tiny value, thanks to the rapid decay of the almost Schmidt values. It is also interesting to examine a compressed MPO (from $\chi = 2R =  256$ to $\chi' = 4$): 
\begin{equation}
    \left(\begin{array}{@{}c|cc|cc|c@{}}
		\1& \widehat{Z} & \widehat{Z} &   & &  \\ \hline
		   &  0.178 \1  &           & .749\widehat{X}  & .114\widehat{X} & \\
		   &            &  .742 \1  & .11\widehat{X} & .0117 \widehat{X}  & \\  \hline
		   &            &           & 0.178 \1 & &  \widehat{Z}  \\ 
		   &            &           &        &  .742 \1  & \widehat{Z}   \\ \hline
		  &   &   & & & \1\\
	\end{array}\right) \,. \label{eq:W_example}
\end{equation}
Remarkably, while the strict locality of the uncompressed MPO is compromised, the {block} triangular structure of \eqref{eq:W_example0} is intact. We can clearly see that each power-law is approximated by a sum of exponential decays governed by the $2\times 2$ matrix on the diagonal block of \eqref{eq:W_example}. Here we have applied a gauge transform after compression to make the MPO upper-triangular, but this is not possible in general as the SVD step will destroy ``triangularizability".

Our final example is somewhat more involved: an iMPO implementation of the Lanczos algorithm. The Lanczos algorithm is originally from numerical linear algebra, where it is used to tri-diagonalize a matrix. However, it was recognized in the 1980s~\cite{mattis1981reduce} that it provides an exact mapping from many-body dynamics problems to 1d quantum mechanics problems on a semi-infinite tight-binding model. (This is known as the ``recursion method'', see \cite{viswanath2008recursion} for a review.) Recent work by some of us~\cite{parker2018universal} has found there are deep connections between the Lanczos algorithm, thermalization, operator complexity, and quantum chaos.

The Lanczos algorithm is a simple iteration. Suppose $\widehat{H}$ is a Hamiltonian and $\widehat{O}$ is a Hermitian operator. Conceptually, the Lanczos algorithm constructs the Krylov subspace $\operatorname{span}\{\widehat{O}, [\widehat{H},\widehat{O}], [\widehat{H}, [\widehat{H}, \widehat{O}]],\dots\}$ and iteratively orthonormalizes it. More precisely, we start from $\widehat{O}_{-1} = 0 $, $\widehat{O}_{0} := \widehat{O}$, $b_0 := 0$, and for $n > 0$, we define recursively
\begin{subequations}
\begin{align}
	\widehat{A}_n &:= [\widehat{H}, \widehat{O}_n] - b_{n-1} \widehat{O}_{n-2} \nonumber \\
	\widehat{O}_n &:= b_n^{-1} \widehat{A}_n \text{ where } b_n := \dn{\widehat{A}_n}^{1/2} \,.
\end{align}
\end{subequations}
The $b_n$'s are known as the \textit{Lanczos coefficients}, and it is well-known that $\{\widehat{O}_0, \dots,\widehat{O}_n\}$ is an orthonormal basis of the $n$-dimensional Krylov subspace. These objects are highly relevant for the operator dynamics $\widehat{O}(t) = e^{i\widehat{H}t} \widehat{O } e^{-i\widehat{H}t} $, and it is desirable to compute as many of them as possible.

For generic many-body problems, exactly computing $n$ Lanczos coefficients requires $\mathrm{O}(e^{C n})$ resources. Now, whenever $\widehat{H}$ and $\widehat{O}$ are representable as iMPOs, the whole Lanczos algorithm can be implemented using iMPOs using elementary operations from Appendix~\ref{app:elementary_operations} and the intensive norm formula~\eqref{eq:norm_left_can}. If $\widehat{O}_0$ is first degree and $\widehat{H}$ is strictly local, all iMPOs generated in the process will be first degree, so our compression scheme can  potentially reduce the computation cost of the Lanczos algorithm.

We benchmarked our iMPO implementation of the Lanczos algorithm, with the paradigmatic chaotic Ising chain, see Fig.~\ref{fig:lanczos}. Remarkably, we observe that the resulting bond dimension of the operators $\widehat{O}_n$ grows only polynomially: 
\begin{equation}
    \chi[\widehat{O}_n] = \mathrm{O}(n^a) \,,\, a \approx 2 \,,  \label{eq:chi_lanczos}
\end{equation}
shown in Fig. \ref{fig:lanczos} (c), while one would naively expect exponential growth. This means that, in principle, one could reach $n=60-80$ with moderate hardware, far beyond $30-40$ by the exact method~\cite{parker2018universal}.

Practically, however, numerical precision becomes a limiting issue. Due to the iterative nature of the algorithm, any small compression error in $\widehat{O}_n$ is magnified on subsequent steps. One can see from Fig. \ref{fig:lanczos} (d) and (e) that the $\widehat{O}_n$'s singular value spectrum has a gap where the almost Schmidt values fall off by several orders of magnitude. A truncation targeted at the gap will be essentially lossless. However, the smallest singular value above the gap decreases rapidly with $n$, eventually reaching machine precision. Beyond that point, the singular value spectrum will look continuous with no apparent gap, and any further truncation will induce errors that grow quickly --- as shown in Fig. \ref{fig:lanczos} (b).  One could account for this by dynamically increasing the working precision along with $n$. Although this is harder to implement and slower,  the resource cost would still grow only polynomially with $n$, a qualitative improvement over the exact method, so long as \eqref{eq:chi_lanczos} continues to hold.  It will be very interesting to elucidate the reason of such an advantageous bond dimension scaling.

	\begin{figure}
	    \centering
	    \includegraphics{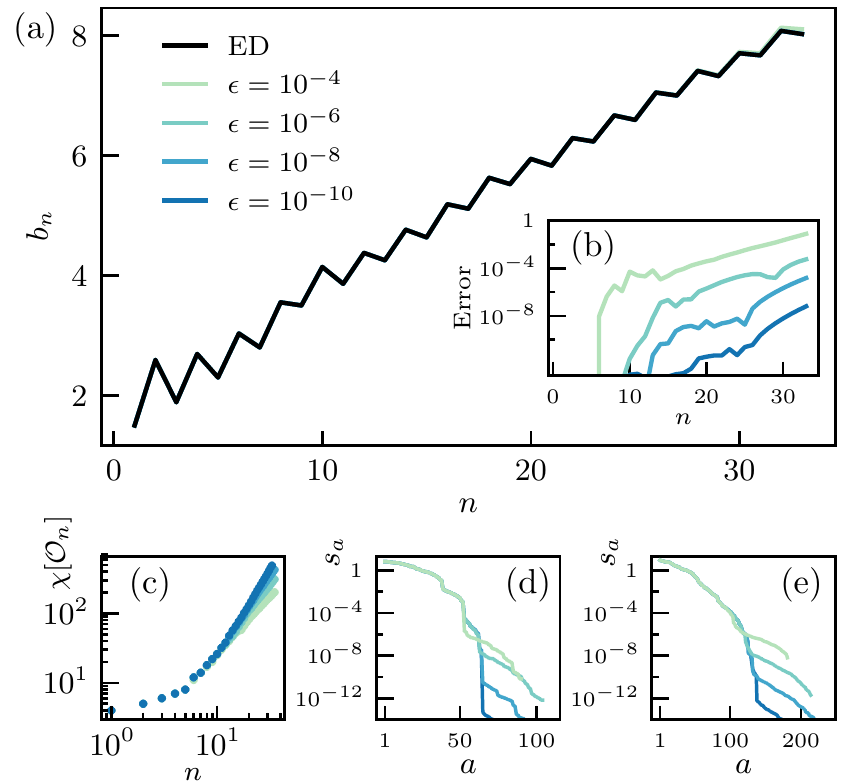}
		\caption{Results of an iMPO implementation of the Lanczos algorithm, applied to $\widehat{H} = \frac12 \sum_n Z_n Z_{n+1} -1.05 Z_n + 0.5 X_n$ and $\widehat{O} = \sum_n Z_n$. (a) The Lanczos coefficients $b_n$ computed by the iMPO implementation with SVD truncation threshold $\varepsilon$, compared to the exact method (``ED'') of Ref. \cite{parker2018universal}. (b) Error in the $b_n$'s at precision $\varepsilon$ (compared to ``ED'' values). (c) Bond dimension of the operators $\widehat{O}_n$. The growth rate is roughly $\mathrm{O}(n^2)$.  (d) The almost Schmidt spectra of $\widehat{O}_{10}$. A large gap is visible at $a \sim 60$ where $s_a$ drops by $\sim 10^{-6}$. (e) The almost Schmidt spectra of $\widehat{O}_{20}$. The gap is barely visible even with the smallest $\varepsilon$; the error starts to grow rapidly around the same $n$.} 
	    \label{fig:lanczos}
	\end{figure}

\section{Conclusions}
\label{sec:conclusions}

In this work we have endevoured to promote matrix-product operators to ``first-class citizens'' amoung computational techniques. Our primary focus was the physically relevant case of local operators, operators that tend to the identity at spatial infinity. Locality of an operator imposes a constraint upon its matrix-product representations, namely a certain upper-triangular block structure. We then adapted the standard tools and techniques of matrix-product states to this framework. In particular, we generalized the notion of left and right canonical forms to the MPO case in a way that respects the local structure, and gave efficient algorithms for computing them. These lead naturally to a novel compression scheme for MPOs that also respects locality and is almost as optimal as SVD truncation is in the MPS case. We treated both the finite and infinite cases and proved the correctness of our techniques wherever possible. To showcase the utility of these new techniques, we included two brief applications: computing the Lanczos coefficients of operator dynamics, and compressing long-range (i)MPOs. In summary, this work enables all standard operations of matrix-product states to be performed on explicitly local matrix-product operators.

On a practical level, these results are applicable both to simulating quantum dynamics in 1d and solving strongly correlated systems in 2d. In 1d, this compression scheme should enable hydrodynamic coefficients, such as diffusion or conductivity, to be calculated using Krylov space techniques. The idea is that the Green's function $G(\omega,k)$ may be well-approximated by information contained in the Lanczos coefficients~\cite{parker2018universal,viswanath2008recursion}. Above we computed these for an example model at $k=0$ (translation invariant sums), but one may work at arbitrary wavevector by slightly modifying the form of the MPO to
\begin{equation}
	\W(k) = \begin{pmatrix}
		e^{ik} \1 & \CC & \DD\\
		0 & \AA & \BB\\
		0 & 0 & \1
	\end{pmatrix}.
\end{equation}
This application will be the focus of future work. In 2d, DMRG studies on infinite strips can be limited by the large bond dimension of the Hamiltonian operator. However, since these Hamiltonians are constructed ``by-hand'', it is reasonable to expect that, in many cases, they can be highly compressed. Moreover, as they have an upper-triangular form, this compression can be carried out quite efficiently. Alternatively, one could use an ``over-compressed'' Hamiltonian as a pre-conditioning step to find an approximate ground state before carrying out the full DMRG algorithm. In any event, the operator-centric tools developed in this work should bring immediate practical benefits to a variety of applications.

We wish to close with a few speculative remarks on our theoretical results. Operators are more than merely states in a doubled Hilbert space in at least two ways: (I) they have an algebraic structure and can thus be multiplied, and (II) they can be local. One perspective on this work is that local operators, as we have defined them, are the analogue of area law states, with a bounded amount of information per site. The standard notions of quantum information theory, especially the entanglement spectrum, struggle to capture the non-trivial local structure of operators --- which is what led us to define the ``almost-Schmidt decomposition". It is unclear how general this notion is. For example, how do we treat ``second degree'' and ``multi-local'' operators that arise naturally as products such as $\widehat{H} \widehat{H}$ (used in computing energy fluctuations in DMRG~\cite{hubig2017generic})? Can it be extended beyond 1d?

Curiously, the algebraic nature of operators is almost completely absent from this work. After all, locality is a by-product of the operator algebra, namely the condition that spatially-separated operators tend to commute. It is natural to speculate that a deeper ``quantum information theory of operators'' would be intimately connected to the operator algebra structure and yield greater benefits for computation.

\begin{acknowledgments}
  We thank Ehud Altman, Nick Bultinck, Joel E. Moore, Johannes Motruk, Frank Verstraete, and Laurens Vanderstraeten for helpful discussions. We thank the second anonymous referee whose careful comments greatly improved this work.  We acknowledge support from the NSF Graduate Research Fellowship Program NSF DGE 1752814 (DP), ERC synergy Grant UQUAM and DOE grant DE-SC001938 (XC). MZ was supported by the DOE, office of Basic Energy Sciences under contract no. DEAC02-05-CH11231.
\end{acknowledgments}

\newpage

	\appendix

\section{Proofs for Local MPOs}
\label{app:proofs}
This appendix proves statements about local MPOs from Section~\ref{sec:local_iMPOs} of the main text. Our main goal is the proof of the forms of the dominant Jordan blocks, Prop~\ref{prop:quasi_local_Tv}, but we begin with a series of technical Lemmas.

\begin{lemma}
   Suppose $\W$ and $\W'$ are related by a gauge transform $L \W = \W' L$, and $\W$ is first degree. Then $\W'$ is also first degree. 
   \label{lem:quasilocal-gauge}
\end{lemma}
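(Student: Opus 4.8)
The plan is to peel off the only part of the gauge relation that matters, namely the one involving the $\AA$ blocks, and then turn the statement into a spectral‑inclusion between the two transfer matrices. Since $L$ must be block triangular of the form \eqref{eq:gauge_form}, write its middle block as $\mathsf{L}$. Comparing the $(2,2)$ blocks of $L\W = \W'L$ gives $\mathsf{L}\AA = \AA'\mathsf{L}$; expanding $\AA = \sum_\alpha \widehat{O}_\alpha A_\alpha$ and $\AA' = \sum_\alpha \widehat{O}_\alpha A'_\alpha$ as in \eqref{eq:Walpha} and using linear independence of the $\widehat{O}_\alpha$, this becomes the system of c‑number identities $\mathsf{L}A_\alpha = A'_\alpha\mathsf{L}$ for every $\alpha$. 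First‑degreeness of $\W'$ is then exactly the claim that $\n{\lambda}<1$ for all $\lambda\in\spec(T_A)$ forces the same for all $\lambda\in\spec(T_{A'})$, where $T_A = \sum_\alpha\overline{A}_\alpha\otimes A_\alpha$ and likewise for $T_{A'}$.

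Next I would build an intertwiner. Using the matrix notation \eqref{eq:TW_matrix_notation}, define the linear map $\psi(Y):=\mathsf{L}^\dagger Y\mathsf{L}$ from matrices on the $\AA'$‑auxiliary space to matrices on the $\AA$‑auxiliary space. Taking daggers of $\mathsf{L}A_\alpha = A'_\alpha\mathsf{L}$ gives $A_\alpha^\dagger\mathsf{L}^\dagger = \mathsf{L}^\dagger{A'_\alpha}^\dagger$, so
\begin{equation}
	\psi(Y\,T_{A'}) = \sum_\alpha \mathsf{L}^\dagger {A'_\alpha}^\dagger Y A'_\alpha \mathsf{L} = \sum_\alpha A_\alpha^\dagger \mathsf{L}^\dagger Y \mathsf{L} A_\alpha = \psi(Y)\,T_A ,
\end{equation}
i.e.\ $\psi$ intertwines $T_{A'}$ and $T_A$. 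If $\psi$ is injective, then any (generalized) left eigenvector of $T_{A'}$ for an eigenvalue $\lambda$ is carried by $\psi$ to a nonzero (generalized) left eigenvector of $T_A$ for the same $\lambda$, which yields $\spec(T_{A'})\subseteq\spec(T_A)$ and hence the lemma. Injectivity of $\psi$ holds precisely when $\mathsf{L}$ has rank equal to the new bond dimension (equivalently, $\mathsf{L}$ is surjective): then $\mathsf{L}^\dagger$ is injective, so $\mathsf{L}^\dagger Y\mathsf{L}=0\Rightarrow Y\mathsf{L}=0$, and the columns of $\mathsf{L}$ span the whole primed auxiliary space, forcing $Y=0$.

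The main obstacle is this rank condition on $\mathsf{L}$. It is automatic for every gauge transform that actually occurs in this work, since those only ever shrink or preserve the bond dimension (cf.\ the rank‑revealing $\widehat{QR}$ steps), and I would simply note this. If one wished to allow the fully general rectangular gauge matrices of \eqref{eq:gauge_MPO} --- where $\W'$ could carry extra, unreachable auxiliary states with arbitrary transfer‑matrix eigenvalues --- one should instead restrict attention to $\operatorname{range}\mathsf{L}$, which $A'_\alpha\mathsf{L}=\mathsf{L}A_\alpha$ shows is invariant under every $A'_\alpha$, and run the same argument there; this is the only genuinely delicate point, everything else being a routine block computation. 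Finally, I would remark that reading the identical argument with $\V$ in place of $\AA$ re‑derives the gauge invariance of the dominant‑Jordan‑block structure of Prop.~\ref{prop:quasi_local_Tv}, and that the tempting shortcut via the extensivity of $\dn{\widehat{H}_N}_F^2$ (Prop.~\ref{prop:dominant_jordan_block}) does \emph{not} suffice, precisely because disconnected auxiliary junk can leave the norm linear while spoiling first‑degreeness.
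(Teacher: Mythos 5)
Your argument is essentially the paper's own proof: from the block form \eqref{eq:gauge_form} of $L$ you extract $\mathsf{L}\AA=\AA'\mathsf{L}$, establish the intertwining identity $(\mathsf{L}^\dagger X\mathsf{L})\,T_A=\mathsf{L}^\dagger (X\,T_{A'})\,\mathsf{L}$, and conclude spectral containment, exactly as in Appendix~\ref{app:proofs}. The one point where you go beyond the paper --- demanding that $\mathsf{L}$ have rank equal to the new bond dimension so that $X\mapsto\mathsf{L}^\dagger X\mathsf{L}$ is injective --- is a genuine refinement rather than a detour: the paper's proof silently assumes $\mathsf{L}^\dagger X\mathsf{L}\neq0$, and for a completely general rectangular $L$ (e.g.\ one embedding $\W$ into a larger $\W'$ that carries an extra auxiliary state, unreachable in the sense of never reaching the final state, with a diagonal entry of modulus $\ge1$) that assumption, and hence the lemma as literally stated, can fail, so your restriction to surjective $\mathsf{L}$ --- or to $\operatorname{range}\mathsf{L}$, which is invariant under the $A'_\alpha$ --- is the correct way to make the statement airtight while covering all gauge transforms actually used in this work.
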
 
\begin{proof}
	The block triangular form~\eqref{eq:gauge_form} of the gauge matrix $L$ implies the sub-matrices $\AA$ and $\AA'$ are related by $\mathsf{L} \AA = \AA' \mathsf{L}$. Then, by the definition of the transfer matrix, we have 
   \begin{align}
    &    [\mathsf{L}^\dagger X \mathsf{L}] T_A = 
    \sum_\alpha  A_{\alpha}^\dagger \mathsf{L}^\dagger X \mathsf{L}  A_{\alpha}  \nonumber \\
    = 
     &\sum_\alpha \mathsf{L}^\dagger (A'_{\alpha})^\dagger X A'_{\alpha}  \mathsf{L} 
     = \mathsf{L}^\dagger (X T_{A'}) \mathsf{L}  \,. \label{eq:JXJ}
   \end{align}
   Now, suppose $\W'$ is not first degree, then there is $X$ such that $X T_{A'} = \lambda X$ with $|\lambda| \ge 1$. By \eqref{eq:JXJ},
   $Y := \mathsf{L}^\dagger X \mathsf{L}$ is an eigenvector of $T_A$ with the same $\lambda$, which contradicts the first degree property of $\W$.  
\end{proof}
\begin{lemma}
	Suppose $\spec(T_A)$ is strictly inside the unit disk. Then so is $\spec(A_0)$.
	\label{lemma:A_0_spectrum_constraint}
\end{lemma}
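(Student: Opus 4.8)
Here is the plan. The key realisation is that $T_A$, when written in the matrix notation of Eq.~\eqref{eq:TW_matrix_notation}, is a completely positive map, and that the identity component $A_0=\IP{\1}{\AA}$ appears as a single Kraus operator of that map. Writing $\AA=\sum_\alpha \widehat O_\alpha A_\alpha$ with $\widehat O_0=\1$, define $\mathcal{E}(X):=X T_A=\sum_\alpha A_\alpha^\dagger X A_\alpha$. Since $\mathcal{E}$ is literally $T_A$ acting on $\mathcal{V}\otimes\mathcal{V}$ under the identification with matrices, $\spec(\mathcal{E})=\spec(T_A)$, so the hypothesis says the spectral radius satisfies $\rho(\mathcal{E})<1$. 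Now split off the $\alpha=0$ term: $\mathcal{E}(X)=A_0^\dagger X A_0+\mathcal{F}(X)$, where $\mathcal{F}(X):=\sum_{\alpha\geq 1}A_\alpha^\dagger X A_\alpha$ is completely positive and therefore maps positive semidefinite (PSD) matrices to PSD matrices.

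The main step is a monotonicity bootstrap. I would show by induction on $n$ that $\mathcal{E}^n(X)\succeq (A_0^\dagger)^n X A_0^n$ (Löwner order) for every PSD $X$. The base case is immediate: $\mathcal{E}(X)=A_0^\dagger X A_0+\mathcal{F}(X)\succeq A_0^\dagger X A_0$ because $\mathcal{F}(X)\succeq 0$. For the inductive step, use that completely positive maps are monotone with respect to the Löwner order, so $\mathcal{E}^{n+1}(X)=\mathcal{E}\bigl(\mathcal{E}^n(X)\bigr)\succeq \mathcal{E}\bigl((A_0^\dagger)^n X A_0^n\bigr)\succeq A_0^\dagger (A_0^\dagger)^n X A_0^n A_0=(A_0^\dagger)^{n+1}X A_0^{n+1}$, the last inequality being the base-case bound applied to the PSD matrix $(A_0^\dagger)^n X A_0^n$. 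Taking $X=\Id$ gives $(A_0^\dagger)^n A_0^n\preceq \mathcal{E}^n(\Id)$, and since the spectral norm $\dn{\cdot}$ is monotone on the PSD cone, $\dn{A_0^n}^2=\dn{(A_0^\dagger)^n A_0^n}\leq \dn{\mathcal{E}^n(\Id)}\leq \dn{\mathcal{E}^n}$. Gelfand's formula then yields $\rho(A_0)^2=\lim_{n\to\infty}\dn{A_0^n}^{2/n}\leq \lim_{n\to\infty}\dn{\mathcal{E}^n}^{1/n}=\rho(\mathcal{E})=\rho(T_A)<1$, so every eigenvalue of $A_0$ lies strictly inside the unit disk. (As a byproduct one even obtains the quantitative bound $\rho(A_0)\leq\sqrt{\rho(T_A)}$, which is consistent with the example $\W_{\mathrm{FD}}$ where $\rho(T_A)=\n{\alpha}^2$ and $\rho(A_0)=\n{\alpha}$.)

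I do not expect a serious obstacle here: once the CP/monotonicity viewpoint is adopted the argument is routine. The only points that warrant an explicit sentence are (i) that $\mathcal{E}$ and $T_A$ have the same spectrum — they are the same linear operator in two notations — and (ii) the two standard facts invoked, namely that completely positive maps preserve the Löwner order and that the spectral norm is monotone on PSD matrices. An alternative, more elementary route is available — assume $A_0 v=\mu v$ with $\n{\mu}\geq 1$, test the bound $\mathcal{E}^n(\Id)\succeq (A_0^\dagger)^n A_0^n$ against $v$, and contradict $\rho(T_A)<1$ — but it gives a slightly less clean statement, so I would present the norm/Gelfand version above.
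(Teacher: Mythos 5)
Your proof is correct, and it rests on the same key observation as the paper's: after splitting off the identity component, $X T_A = A_0^\dagger X A_0 + \mathcal{F}(X)$ with $\mathcal{F}$ a positive map, the non-identity Kraus terms can only add something positive semidefinite. Where you differ is in the packaging. The paper argues by contradiction: it takes an eigenvector $A_0 \v{v} = \sigma\v{v}$ with $\n{\sigma}\ge 1$, sandwiches $T_A^N$ against $w = \overline{\v{v}}\otimes\v{v}$, uses positivity of every cross term in the expansion of $(T_0+\sum_{\alpha\ge 1}T_\alpha)^N$ to get $\IP{w}{T_A^N w}\ge \n{\sigma}^{2N}\ge 1$, and contradicts the decay of $T_A^N$ forced by $\spec(T_A)$ lying inside the unit disk. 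Your route replaces the eigenvector sandwich with the L\"owner-order induction $\mathcal{E}^n(X)\succeq (A_0^\dagger)^n X A_0^n$ and closes with Gelfand's formula. This is a genuine, if modest, variant: it never touches eigenvectors of $A_0$, and it yields the quantitative bound $\rho(A_0)\le\sqrt{\rho(T_A)}$ as a bonus, at the price of invoking monotonicity of positive maps and Gelfand's formula rather than only the decay of $T_A^N$; both steps are standard and your use of them is sound. One small correction to your closing aside: in the paper's example $\W_{\mathrm{FD}}$ one has $\AA=\alpha\widehat{Z}$ with $\widehat{Z}$ traceless, so $A_0=\IP{\1}{\alpha\widehat{Z}}=0$, not of modulus $\n{\alpha}$; the bound $\rho(A_0)\le\sqrt{\rho(T_A)}$ is trivially consistent there, and its sharpness is instead exhibited by a block such as $\AA=\alpha\1$.
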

\begin{proof}
	Suppose not. Then there is a (generalized) eigenvalue $\sigma \in \spec(A_0)$ with $\n{\sigma} \ge 1$. This eigenvalue must be in some Jordan block
	\begin{equation}
		J = \begin{pmatrix}
			\sigma & 1 & & &\\
			& \ddots & \ddots & &\\
			& & \sigma & 1\\
			& & & \sigma
		\end{pmatrix}
	\end{equation}
	with some (generalized) eigenvector $A_0 \v{v} = \sigma \v{v}$. Then $\v{w} := \v{v}^\dagger \otimes \v{v}$ is an eigenvector $T_{A_0} \v{w} = \n{\sigma}^2 \v{w}$. So
	\begin{equation}
		\IP{\v{w}}{T_{A^0}^N \v{w}} = \n{\sigma}^{2N} \geq 1 \,,\, \forall N \,.
	\end{equation}

	For each component $A_\alpha$, $0\le \alpha < d^2$, of $\AA$, let $T_\alpha[X] := \left( A_\alpha \right)^\dagger [X] A_\alpha$. Each of these is a positive map and $T_A = \sum_\alpha T_\alpha$, so
	\begin{equation}
		 T_A^N = T_{0}^N + \sum_{\substack{\alpha_1,\dots, \alpha_N\\\exists \alpha_i \neq 0}} T_{\alpha_1} \cdots T_{\alpha_N},
		 \label{eq:T_A_power}
	\end{equation}
	Since the composition of positive maps is positive, $\IP{\v{w}}{T_{\alpha_1} \cdots T_{\alpha_N} \v{w}} \ge 0$. So \eqref{eq:T_A_power} implies $\IP{\v{w}}{T_A^N\v{w}}~\ge~\n{\sigma}^{2N} \to \infty$.	But all the eigenvalues of $T_A$ are less than $1$, so $\IP{\v{w}}{T_A^N \v{w}} \to 0$, a contradiction.
\end{proof}

\begin{lemma}
    Let 
    \begin{equation}
         T = \begin{pmatrix}
          A & B \\ 0 & C 
         \end{pmatrix}
    \end{equation}
    be a block upper-triangular matrix such that $A$ and $C$ are square matrices. Let $\lambda \in \mathrm{spec}(A) \setminus \mathrm{spec}(C)$ and $(x \; y) T = \lambda (x \; y)$ be a left eigenvector. Then $x \neq 0$ and satisfies $x A = \lambda x$, so $x$ is a left eigenvector of $A$.
    \label{lemma:triangular}
\end{lemma}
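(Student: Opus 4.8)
The plan is to simply expand the eigenvector equation block by block. Writing $(x \; y)$ in the block structure compatible with $T$, the condition $(x \; y)T = \lambda (x \; y)$ splits into the two equations
\begin{align}
	xA &= \lambda x, \label{eq:triang_plan_first}\\
	xB + yC &= \lambda y. \label{eq:triang_plan_second}
\end{align}
So the second conclusion, $xA = \lambda x$, is immediate from \eqref{eq:triang_plan_first}: the vanishing lower-left block of $T$ means the first block column of the eigenvalue equation decouples entirely.

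It remains to show $x \neq 0$. I would argue by contradiction: suppose $x = 0$. Then \eqref{eq:triang_plan_second} reduces to $yC = \lambda y$, i.e. $y$ is a left eigenvector of $C$ with eigenvalue $\lambda$. Since $\lambda \notin \mathrm{spec}(C)$, the matrix $C - \lambda \Id$ is invertible, forcing $y = 0$. But then $(x \; y) = 0$, contradicting the fact that eigenvectors are nonzero by definition. Hence $x \neq 0$, and by \eqref{eq:triang_plan_first} it is a genuine left eigenvector of $A$ for $\lambda$.

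There is no real obstacle here; the only point requiring a word of care is the hypothesis $\lambda \notin \mathrm{spec}(C)$, which is exactly what makes the ``$x=0$'' branch collapse. (The hypothesis $\lambda \in \mathrm{spec}(A)$ is not needed for the argument itself — it merely guarantees that such an eigenvector exists — so I would either not use it or remark on this.) One could alternatively phrase the nonvanishing of $x$ without contradiction, by noting that the full eigenvector is nonzero and that $y$ is uniquely determined from $x$ via $y = x B (\lambda \Id - C)^{-1}$, so $y = 0$ whenever $x = 0$; I would present whichever version reads more cleanly.
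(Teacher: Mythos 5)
Your proof is correct and matches the paper's argument essentially verbatim: expand $(x\;y)T=\lambda(x\;y)$ into the two block equations, then rule out $x=0$ by noting that it would force $yC=\lambda y$ with $y\neq 0$, contradicting $\lambda\notin\mathrm{spec}(C)$. Your side remarks (that $\lambda\in\mathrm{spec}(A)$ is only needed for existence, and the alternative via $y = xB(\lambda \Id - C)^{-1}$) are fine but add nothing beyond the paper's proof.
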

\begin{proof}
    $(x \; y) T = \lambda (x \; y)$ means $xA = \lambda x$ and $x B + y = \lambda y$.
    Now suppose $x = 0$. Then $y \neq 0$, and $y C =  \lambda y$, so $\lambda \in \mathrm{spec}(C)$, a contradiction. 
\end{proof}
	
\begin{lemma}
\label{lem:traceform}
    Suppose $\W$ is an first degree iMPO. Then there exists a gauge transform 
	\footnote{When using this to compute the norm via Eq. \eqref{eq:norm_left_can}, one should only compute $\v{s}$ and set $\v{t} \equiv 0$ so that left-canonical form is preserved. When compressing iMPOs, one should instead set $\v{s} = 0$ and use only $\v{t}$.}
    \begin{equation}
    \W' = L \W L^{-1} \text{ where }
        L = \begin{pmatrix}
        1 & \v{t} & 0 \\ 
        0 &  \Id &  \v{s} \\
        0 & 0 & 1 
        \end{pmatrix} \label{eq:gauge_perp_form}
    \end{equation}
    such that
    \begin{equation}
        W_0' = \IP{\1}{\W'} = \begin{pmatrix}
        1 &  0 & d_0' \\
         0 & A_0 & 0 \\
          0 &  0 & 1 
        \end{pmatrix} \, \label{eq:W0_traceform}
    \end{equation}
    In fact, $\AA' = \AA$ is unchanged.
\end{lemma}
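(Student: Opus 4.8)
The plan is to write down the gauge matrix $L$ explicitly by choosing its two free blocks $\v{t}$ and $\v{s}$ to solve a decoupled pair of linear equations, and then verify the claimed form by a direct block multiplication. First I would put $\W$ in regular form \eqref{eq:regular_form_MPO} and take identity components term by term, so that $W_0 := \IP{\1}{\W}$ is the block matrix with diagonal $(1, A_0, 1)$ and off-diagonal blocks $\v{c}_0 := \IP{\1}{\CC}$, $\v{b}_0 := \IP{\1}{\BB}$, and $d_0 := \IP{\1}{\DD}$, where $A_0 := \IP{\1}{\AA}$. The objective is to annihilate $\v{c}_0$ and $\v{b}_0$ while leaving the diagonal blocks of $W_0$ alone.

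Next I would take $L$ of the stated shape \eqref{eq:gauge_perp_form}, observe that $L^{-1}$ has the same block-triangular shape (with $\v{t} \mapsto -\v{t}$, $\v{s} \mapsto -\v{s}$, and a $\v{t}\v{s}$ entry in the corner), and multiply out $\W' = L\W L^{-1}$ block by block. The middle block is unchanged, $\AA' = \AA$, which already proves the final sentence of the lemma. The relevant off-diagonal blocks come out to $\CC' = \CC + \v{t}\AA - \v{t}\1$ and $\BB' = \BB - \AA\v{s} + \v{s}\1$, where $\v{t}\1$ and $\v{s}\1$ denote the operator-valued row and column vectors with entries $t_a\1$ and $s_a\1$; the corner block $\DD'$ acquires several extra terms whose identity component I would just call $d_0'$, its actual value being irrelevant.

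Then I would take identity components: $\IP{\1}{\CC'} = \v{c}_0 - \v{t}(\Id - A_0)$ and $\IP{\1}{\BB'} = \v{b}_0 + (\Id - A_0)\v{s}$, which vanish precisely when $\v{t}(\Id - A_0) = \v{c}_0$ and $(\Id - A_0)\v{s} = -\v{b}_0$. The only substantive point is solvability of this system: since $\W$ is first degree, $\spec(T_A)$ lies strictly inside the unit disk, and hence so does $\spec(A_0)$ by Lemma \ref{lemma:A_0_spectrum_constraint}, so $\Id - A_0$ is invertible and the solutions $\v{t} = \v{c}_0(\Id - A_0)^{-1}$, $\v{s} = (A_0 - \Id)^{-1}\v{b}_0$ exist and are unique. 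I would finish by noting that $\W'$ is again in regular form (and first degree by Lemma \ref{lem:quasilocal-gauge}), that $W_0'$ then has exactly the form \eqref{eq:W0_traceform}, and that the two equations are decoupled — $\v{t}$ affects only the identity component of $\CC'$ and $\v{s}$ only that of $\BB'$ — which is the observation the footnote relies on.

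The main obstacle is essentially nonexistent: everything is routine block algebra, and the one genuine ingredient is the invertibility of $\Id - A_0$, which is not a consequence of the block-triangular structure alone and is exactly where first-degreeness must be invoked (through Lemma \ref{lemma:A_0_spectrum_constraint}). The only place requiring care is keeping operator-valued and scalar quantities straight in the products $\v{t}\1$ versus $\v{t}\AA$, so that the identity-component bookkeeping is done correctly.
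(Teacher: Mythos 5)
Your proof is correct and takes essentially the same route as the paper's: a direct block computation of $L\W L^{-1}$, choosing $\v{t}$ and $\v{s}$ to cancel the identity components of $\CC$ and $\BB$, with the only substantive ingredient being the invertibility of $\Id - A_0$, obtained from first-degreeness via Lemma~\ref{lemma:A_0_spectrum_constraint}. The only discrepancy is an immaterial sign: your $\v{t} = \v{c}_0(\Id - A_0)^{-1}$ is the solution consistent with the stated convention $\W' = L\W L^{-1}$, whereas the paper's proof records $\v{t} = \v{c}_0(A_0 - \Id)^{-1}$; your $\v{s}$ agrees with the paper exactly.
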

\begin{proof}
	A direct computation shows
    \begin{equation}
        \v{s} := (A_0- \Id)^{-1} \v{b}_0 \,,\,
        \v{t} := \v{c}_0  (A_0 - \Id)^{-1} \,
		\label{eq:last_col_Id_free}
    \end{equation}
    give the desired gauge transform. The inverse $(A_0- \Id)^{-1}$ exists by Lemma \ref{lemma:A_0_spectrum_constraint}.
\end{proof}

We now have all the tools needed to unravel the Jordan block structure of MPOs. We prove Prop \ref{prop:dominant_jordan_block} in the special case where the operator has subextensive trace, and subsequently sketch the more general case.

\begin{proof}[Proof of Prop. \ref{prop:dominant_jordan_block}.]

The idea of the proof is to explicitly find the dominant Jordan block (i.e. the Jordan block that gives the leading contribution to the norm) using the block structure of $T_W$. Unfortunately, just as in \eqref{eq:T_W_example}, there are two other ``spurious" eigenvectors whose eigenvalue is also $1$. Just as the dominant Jordan block is responsible for the extensive norm, they give rise to the extensive part of the trace. For a traceless operator, they form an invariant subspace that does not contribute to the extensive norm --- hence the name ``spurious".

We first impose the condition of tracelessness. Without loss of generality, we work in the gauge of Lemma~\ref{lem:traceform}, and note that $\AA$ is unchanged so the first degree property is maintained. On a finite system of $N$ sites, the trace is given by,
\begin{align}
    \mathrm{Tr}[\widehat{H}_N]& = \v{\ell} W_0^N \v{r}  = 
   \left(1 \; \v{\ell}' \; \ell_{\chi+1} \right) W_0^N  
    \left(r_0 \; \v{r}' \; 1\right)^T \nonumber \\
  & =  r_0 + \ell_{\chi+1} + N d_0 + \v{\ell}' A_0^N \v{r}' \nonumber \\
  & = N d_0 + O(1) \,,\, N \to \infty \,,
\end{align}
where we used the standard boundary conditions \eqref{eq:standard_boundary_conditions} and used Lemma~\ref{lemma:A_0_spectrum_constraint} for the last asymptotic. Therefore
\begin{equation}
    \lim_{N\to \infty} \frac{1}{N} \Tr[\widehat{H}_N] = 0 \Longleftrightarrow d_0 = 0 \text{ in gauge \eqref{eq:gauge_perp_form}}.
\end{equation}

We now exhibit all the generalized eigenvectors with eigenvalue $1$. For concision, we rewrite $\W$ as
\begin{equation}
    \W = \begin{pmatrix} \V &  \v{\widehat{f}} \\ 0 & \1 \end{pmatrix}
    \,,\,
   \widehat{ \v{f}} := \begin{pmatrix} \DD\\
    \BB \end{pmatrix} \label{eq:W_with_f} \,,
\end{equation}
with block sizes $1+\chi$ and $1$. Similarly to Eq. \eqref{eq:Tv_block}, we have
\begin{equation}
    T_{W} = 
	 \left(
	 \begin{array}{c|cc|cc|c}
	T_V & 0 & \overline{U} & 0 & U & F \\ \hline
	0 & 1 & 0 & 0 & 0 &  0 \\
	0 & 0 & A_0 & 0 & 0 & 0 \\ \hline
	0 & 0 & 0 &  1 & 0 & 0 \\
	0 & 0 & 0 &  0 & \overline{A}_0 & 0 \\ \hline
	0 & 0 & 0 &  0 & 0 & 1 \\
	\end{array} \right) 
    \label{eq:TW_block}
\end{equation}
for some $U$, where the block sizes are $(1+\chi)^2, 1 + \chi, 1+ \chi, 1$, and
\begin{equation}
    F := \sum_{\alpha} \overline{\v{f}}_\alpha \otimes \v{f}_\alpha \,.\label{eq:Fmatrix}
\end{equation}
We observe that $T_W$ is the sum of ``reduced" and ``spurious" parts
\begin{equation}
  T_W =   \begin{pmatrix}
	T_{V} & \overline{U} & {U}  & F \\
	0 & { A_0} & 0 &  0 \\
	0 & 0 & \overline{A}_0 & 0 \\
	0 & 0 & 0 &  1
	\end{pmatrix}
	\oplus 
    \begin{pmatrix}
    1 & 0 \\ 0 & 1 
    \end{pmatrix}
    =: T_{\text{red}} \oplus T_{\text{sp}}
\end{equation}
The spurious block $T_{\text{sp}}$ has eigenvectors $E$ and $E^T$ where $E_{ab} = \delta_{a0} \delta_{b, \chi+1}$ and, in particular, $E_{00} = 0$.

The dominant Jordan block comes from $T_\text{red}$. Consider the truncated operator
\begin{equation}
    T_\text{red}^\text{truncated} = 
    \begin{pmatrix}
	T_{V} & \overline{U} & {U}  & 0 \\
	0 & { A_0} & 0 &  0 \\
	0 & 0 & \overline{A}_0 & 0 \\
	0 & 0 & 0 &  0
	\end{pmatrix} \,.
\end{equation}

By Proposition~\ref{prop:quasi_local_Tv} and Lemma~\ref{lemma:A_0_spectrum_constraint}, it has a unique eigenvalue $1$ (the rest have $|\lambda|<1$). By Lemma~\ref{lemma:triangular}, the corresponding left eigenvector of $T_\text{red}$ is (after rescaling)
\begin{equation} Z' = \begin{pmatrix}X & \v{z}  \\  \overline{\v{z}}  & 0 \end{pmatrix}  \,.  \label{eq:Z'}
\end{equation} for some $\v{z}$ and where $X$ is the unique largest eigenvector of $T_V$ from Eq. \eqref{eq:T_V_leading_eigenvector}. Then we have
\begin{align}
	&	\begin{pmatrix}
			Z' T_{W} &
			Z T_W \\
		\end{pmatrix}
		= \begin{pmatrix}
			Z' &
			Z \\
		\end{pmatrix} \begin{pmatrix}
			1 & \rho \\
			0 & 1
		\end{pmatrix} \,,\, Z = \begin{pmatrix} 0 & 0  \\ 0  & 1 \end{pmatrix} 
	\label{eq:dominant_jordan_block}
\end{align}
where $ \rho := X F = \sum_{a,b=1}^{\chi} X_{ab} \IP{\v{\widehat{f}}_a}{\v{\widehat{f}}_b}$. (In practice, one should compute $\rho$ using Eq. \eqref{eq:norm_left_can} which makes use of canonical form.) All the other eigenvalues of $T_{\text{red}}$, and indeed all other eigenvectors of $T_W$ are those of $A_0$ and $\overline{A}_0$, and satisfy $|\lambda| < 1$ by first degreeness. We have thus found the dominant Jordan block of $T_W$, as well as the ``spurious'' eigenvectors.

We are now ready to compute the norm $\dn{H}^2_N$ using the transfer matrix formula \eqref{eq:norm_transfer_matrix}. We expand $\v{\ell}\v{\ell}$ in the left generalized eigenbasis of $T_{W}$:
\begin{equation}
    \v{\ell}\v{\ell} = \underbrace{(a Z' + b Z)}_{\lambda = 1 \text{ Jordan block}} + 
    \underbrace{(c E + \overline{c} E^T )}_{\lambda = 1 \text{ `spurious'}}  + 
    \underbrace{S}_{|\lambda|<1} \label{eq:uu_expand}
\end{equation}  
where $S$ is a linear combination of generalized left eigenvectors with eigenvalues $|\lambda| < 1$. It follows that 
\begin{align}
     (\v{\ell}\v{\ell}) T_W^N  (\v{r}\v{r}) =  N  a \rho (Z \v{r}\v{r}) + \mathrm{O}(1) 
     = N a \rho  + \mathrm{O}(1) 
\end{align}
as $N \to \infty$, since $\v{r}_{\chi+1} = 1$ by the regular form. It remains to determine the coefficient $a$. For this we look at the $00$-component of \eqref{eq:uu_expand}. First,  $S T_W^N \longrightarrow 0$ by the definition of $S$. Meanwhile, \eqref{eq:Tv_block} and \eqref{eq:TW_block} imply $(S T_W)_{00} = S_{00}$. Therefore, $S_{00} = 0$. For the other terms of the RHS, we have $Z'_{00} = 1$ by \eqref{eq:Z'} and \eqref{eq:T_V_leading_eigenvector}, $Z_{00} =0$ by \eqref{eq:dominant_jordan_block}, and $E_{00} = 0$. On the LHS, the regular form~\eqref{eq:standard_boundary_conditions} requires $(\v{\ell}\v{\ell})_{00}=1$. Therefore we have $a = 1$ and 
\begin{equation}
      \dn{\widehat{H}_N}_F^2 =  \v{\ell}\v{\ell} T_W^N  \v{r}\v{r} = N \rho + \mathrm{O}(1)  \,.
\end{equation}

\end{proof}

As noted above, the condition that the trace is sub-extensive can be lifted.

	Suppose $\widehat{W}$ is an first degree iMPO for $\widehat{H}$. Then the transfer matrix $T_W$ has maximum eigenvalue unity with a generalized eigenspace $V_1$ of dimension four. This may be Jordan decomposed as follows:
	\begin{enumerate}
		\item[Case 1.] $V_1 = J_3 \oplus J_1$ if $\Tr[\widehat{H}_N] = O(N)$, i.e. the trace is extensive
		\item[Case 2.] $V_1 = J_2 \oplus J_1 \oplus J_1$ if $\Tr[\widehat{H}_N] = o(N)$, i.e. the trace is subextensive.
	\end{enumerate}
	Without loss of generality, we adopt the gauge from Lemma \ref{lem:traceform}.	Define block matrices of size $\chi+1 \times \chi+1$
	\begin{equation}
		Z_i = \begin{pmatrix} X & \v{z}\\ \v{z}^\dagger & 0 \end{pmatrix}, \quad	
		Z_{t} = \begin{pmatrix} 0 & \v{t}\\ 0 & 0 \end{pmatrix},\quad
		Z_f = \begin{pmatrix} 0 & 0\\ 0 & 1 \end{pmatrix} 	
	\end{equation}
	and $Z_{t'} = Z_t^\dagger$ where $X$ is the dominant eigenvalue of $T_A$, $\v{z}$ is the same as above, and $\v{t} = (1,0,\dots,0)$ is a vector of length $\chi$. These span the dominant generalized eigenspace:
	\begin{equation}
		\begin{pmatrix} Z_i\\ Z_t\\ Z_{t'}\\ Z_n \end{pmatrix} 
		T_W 
		= 
		\begin{pmatrix} Z_i\\ Z_t\\ Z_{t'}\\ Z_n \end{pmatrix} 
		\underbrace{
		\begin{pmatrix} 1 & d & d & \rho\\
						0 & 1 & 0 & d\\
						0 & 0 & 1 & d\\
						0 & 0 & 0 & 1
					\end{pmatrix}}_{M^\dagger}
	\end{equation}
	where $d$ is the extensive part of the trace: $\Tr[\widehat{H}_N] = N d$ and the dagger is because $T_W$ acts on the right. The Jordan decomposition $M = SJS^{-1}$ is then
		\begin{enumerate}
			\item[Case 1.]
				\begin{equation}
					J = 
				\begin{pmatrix}
				 1 & 0 & 0 & 0 \\
				 0 & 1 & 1 & 0 \\
				 0 & 0 & 1 & 1 \\
				 0 & 0 & 0 & 1 
				\end{pmatrix},
				\quad
				S = 
				\begin{pmatrix}
				 0 & 0 & 0 & 1 \\
				 -\frac{1}{2} & 0 & d & 0 \\
				 \frac{1}{2} & 0 & d & 0 \\
				 -\frac{\rho }{2 d} & 2 d^2 & \rho  & 0
				\end{pmatrix}
			\end{equation}
		\item[Case 2.]
				\begin{equation}
					J = 
				\begin{pmatrix}
				 1 & 1 & 0 & 0 \\
				 0 & 1 & 0 & 0 \\
				 0 & 0 & 1 & 0 \\
				 0 & 0 & 0 & 1 
				\end{pmatrix},
				\quad
				S = 
				\begin{pmatrix}
					0 & \frac{1}{\rho} & 0 & 0 \\
					0 & 0 & 0 & 1 \\
					0 & 0 & 1 & 0 \\
					1 & 0 & 0 & 0
				\end{pmatrix}
			\end{equation}
		\end{enumerate}
		Case 2 is, of course, the same as the above proof, where $Z_t$ and $Z_{t'}$ span the `spurious' dimensions and the Jordan block of size $2$ is responsible for the extensive norm. In Case 1, however, those two dimensions are now mixed together. One can compute
		\begin{equation}
			(\v{\ell}\v{\ell}) T_W^N  (\v{r}\v{r})  = N^2 d^2 + N (\rho - d) + O(1).
		\end{equation}
The Frobenius norm is then no longer extensive as is has been ``polluted" with the trace. Nevertheless, the largest eigenvalue is still unity and the matrix $Z_i$ overlaps with the dominant Jordan block.

The proof for these statements is directly analogous to the above Proof with the single modification of Eq. \eqref{eq:TW_block} to  
\begin{equation}
    T_{W} = 
	 \left(
	 \begin{array}{c|cc|cc|c}
		 T_V & d\v{t} & \overline{U} & d\v{t} & U & F \\ \hline
	0 & 1 & 0 & 0 & 0 &  0 \\
	0 & 0 & A_0 & 0 & 0 & 0 \\ \hline
	0 & 0 & 0 &  1 & 0 & 0 \\
	0 & 0 & 0 &  0 & \overline{A}_0 & 0 \\ \hline
	0 & 0 & 0 &  0 & 0 & 1 \\
	\end{array} \right). 
\end{equation}

\section{Proofs for Canonical forms}\label{app:convergence}
This appendix provides a sufficient condition for the convergence of the QR iteration in Algorithm~\ref{alg:left_can_alg_infinite_iterated_QR} for first degree MPOs, and proves the existence of left canonical forms. 

It is clear from the definition of canonical forms that only the upper-left sub-matrix $\V$ of an iMPO $\W$ will be actively involved. Indeed, any gauge transform of the sub-matrix ${L} \V = \V' {L}$ can be easily promoted the iMPO level:
\begin{align}
     \underbrace{\begin{pmatrix}
        {L} & \\ & 1
    \end{pmatrix}}_{L_W} \underbrace{ \begin{pmatrix}
        \V &  \widehat{\v{f}}\\
        & \1
    \end{pmatrix}}_{\W} = \underbrace{\begin{pmatrix}
        \V' & {L} \widehat{\v{f}}\\
        & \1
    \end{pmatrix}}_{\W'} \underbrace{ \begin{pmatrix}
        {L} & \\ & 1
    \end{pmatrix}}_{L_W}  \,.   \label{eq:promotion}
\end{align}
Hence we focus on $\V$ and its gauge transforms. \footnote{Accordingly, the notation in this appendix will differ form the main text in that gauge matrices acting on $\V$ will not have an overline.} From this point of view, the QR iteration Algorithm~\ref{alg:left_can_alg_infinite_iterated_QR} is defined by the following recursion:
\begin{subequations}
\begin{align}
   {R}_0 &:= \Id_{[0,\chi]},\\
  \label{eq:QR1_app}
   \V_{n-1} &:= {\Q}_{n} {R}_{n}, \quad \forall n \ge 1\\
   \V_{n} &:={R}_{n}  {\Q}_{n},\\ 
  	\label{eq:QR1Kn_app} 
   {L}_n &:= {R}_n \dots {R}_1,
\end{align}
\end{subequations}
where \eqref{eq:QR1_app} is a (normal) QR decomposition as defined in \eqref{eq:QRdef}.

We also point out a simple fact: two gauge transforms $ L \W = \W' L$ and $ L' \W'= \W'' L'$ can be composed to obtain a new one: $L'L \W = \W'' L' L$. 

\begin{lemma}
	QR iteration produces a sequence $\st{\V_n}$ that are each related to $\V$ be a gauge transform:
\begin{equation}
     {L}_n \V = \V_n {L}_n \,. \label{eq:QR1gauge_app} 
\end{equation}
\end{lemma}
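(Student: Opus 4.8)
The statement is a straightforward induction on $n$, using the single-step relation generated by each QR decomposition together with the composition rule for gauge transforms noted just above the lemma. The plan is to first establish that each individual QR step $\V_{n-1} = \Q_n R_n$, $\V_n = R_n \Q_n$ furnishes a gauge transform ${R}_n \V_{n-1} = \V_n {R}_n$, and then to chain these together via $L_n = R_n \cdots R_1$.

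First I would record the one-step fact. From \eqref{eq:QR1_app} and the definition $\V_n := R_n \Q_n$, multiply $\V_{n-1} = \Q_n R_n$ on the left by $R_n$:
\begin{equation}
	R_n \V_{n-1} = R_n \Q_n R_n = \V_n R_n \,,
\end{equation}
so $R_n$ intertwines $\V_{n-1}$ and $\V_n$ in the sense of \eqref{eq:gauge_iMPO} (restricted to the upper-left block, which is all that is needed by \eqref{eq:promotion}). Note that this argument is valid even when $R_n$ is not square, i.e.\ when the rank-revealing QR reduces the bond dimension, since no inverse of $R_n$ is used.

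Next I would run the induction. For the base case $n=1$ we have $\V_0 = \V$ and $L_1 = R_1$, and the one-step fact above with $n=1$ reads $R_1 \V = \V_1 R_1$, i.e.\ $L_1 \V = \V_1 L_1$. For the inductive step, assume $L_{n-1}\V = \V_{n-1} L_{n-1}$. Since $L_n = R_n L_{n-1}$, compose the gauge transform $L_{n-1}$ (from $\V$ to $\V_{n-1}$) with the gauge transform $R_n$ (from $\V_{n-1}$ to $\V_n$), exactly as in the composition remark preceding the lemma:
\begin{equation}
	L_n \V = R_n L_{n-1} \V = R_n \V_{n-1} L_{n-1} = \V_n R_n L_{n-1} = \V_n L_n \,,
\end{equation}
which closes the induction and proves \eqref{eq:QR1gauge_app}.

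There is essentially no real obstacle here; the only points requiring a word of care are bookkeeping ones: (i) keeping the convention $\V_0 = \V$ straight so the index in $\V_{n-1} = \Q_n R_n$ lines up, and (ii) noting that the relation is stated at the level of the upper-left block $\V$, so that promotion to the full iMPO $\W$ via $L_W = \diag(L_n,1)$ as in \eqref{eq:promotion} is what ultimately yields $L_{W,n}\W = \W_n L_{W,n}$. Neither affects the logic of the induction.
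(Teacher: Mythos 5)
Your proposal is correct and follows essentially the same route as the paper: the one-step identity $R_n \V_{n-1} = R_n \Q_n R_n = \V_n R_n$ followed by composing the gauge transforms along $L_n = R_n\cdots R_1$ by induction. No gaps; your remark that no inverse of $R_n$ is needed (so bond-dimension reduction is harmless) is consistent with the paper's treatment.
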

\begin{proof}
Eq.~\eqref{eq:QR1_app} implies the gauge transform ${R}_m \V_{m-1} 
= \V_{m} {R}_m$ for any $m > 0$. Then \eqref{eq:QR1gauge_app} follows from Eq. \eqref{eq:QR1Kn_app} by composing the gauge transforms.
\end{proof}

Algorithm~\ref{alg:left_can_alg_infinite_iterated_QR} enjoys also a close relation to the `small' transfer matrix:
\begin{lemma}
For any $n \ge 0$,  
\begin{align}
  &  \Id_{[0,\chi]} (T_{V})^n = {L}_n^\dagger {L}_n \,,\, \label{eq:QR_Tv_app}
 \end{align}
 where $\V$ has bond dimension $\chi$, that is, $(1+\chi)$ rows and columns.
 \label{lem:QR_Tv}
\end{lemma}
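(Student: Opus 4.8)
The plan is to prove \eqref{eq:QR_Tv_app} by induction on $n$, using the gauge relation \eqref{eq:QR1gauge_app} together with the left-canonicity of each $\Q_n$. The base case $n=0$ is immediate: ${R}_0 = {L}_0 = \Id_{[0,\chi]}$ and $(T_V)^0 = \Id$, so both sides equal $\Id_{[0,\chi]}$.

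For the inductive step, suppose $\Id_{[0,\chi]}(T_V)^{n-1} = {L}_{n-1}^\dagger {L}_{n-1}$. Recall that $T_V$ acts on the right by $X T_V = \sum_\alpha V_\alpha^\dagger X V_\alpha$ in the notation of \eqref{eq:TW_matrix_notation}, with $V_\alpha$ the c-number components of $\V$. Applying $T_V$ to the hypothesis gives $\Id_{[0,\chi]}(T_V)^{n} = \sum_\alpha V_\alpha^\dagger {L}_{n-1}^\dagger {L}_{n-1} V_\alpha$. The key move is to rewrite ${L}_{n-1}V_\alpha$: componentwise, \eqref{eq:QR1gauge_app} reads ${L}_{n-1}V_\alpha = (\V_{n-1})_\alpha {L}_{n-1}$, while the QR step \eqref{eq:QR1_app} gives $(\V_{n-1})_\alpha = (\Q_n)_\alpha {R}_n$, so that ${L}_{n-1}V_\alpha = (\Q_n)_\alpha {R}_n {L}_{n-1} = (\Q_n)_\alpha {L}_n$ by \eqref{eq:QR1Kn_app}. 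Substituting,
\begin{equation}
\Id_{[0,\chi]}(T_V)^{n} = {L}_n^\dagger \Big( \sum_\alpha (\Q_n)_\alpha^\dagger (\Q_n)_\alpha \Big) {L}_n = {L}_n^\dagger {L}_n ,
\end{equation}
where the last equality is the left-canonical condition \eqref{eq:canonical_transfer} applied to $\Q_n$ (Def.~\ref{defn:left_canonical_form}), which holds because $\Q_n$ is produced by the block-respecting QR decomposition and hence has orthonormal columns. This closes the induction.

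I do not anticipate a genuine obstacle; the only point that needs a word of care is the bookkeeping of matrix shapes when the QR step is rank-revealing: in that case $\Q_n$ has $1+\chi_n \le 1+\chi$ columns with $\sum_\alpha (\Q_n)_\alpha^\dagger (\Q_n)_\alpha = \Id_{[0,\chi_n]}$, while ${L}_n$ has shape $(1+\chi_n)\times(1+\chi)$, so that ${L}_n^\dagger {L}_n$ is consistently a $(1+\chi)\times(1+\chi)$ matrix and the identity above still makes sense. The very same computation, read without projecting onto the orthonormal columns, is what underlies the relation $(\widetilde{R}_{n-1}^\dagger \widetilde{R}_{n-1})T_V = \widetilde{R}_n^\dagger \widetilde{R}_n$ quoted in the main text near \eqref{eq:iterated_QR_convergence}.
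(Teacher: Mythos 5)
Your proof is correct and follows essentially the same route as the paper's: induction on $n$, applying $T_V$ to the hypothesis, then using the gauge relation, the QR step, the definition of $L_n$, and the orthonormality of the columns of $\Q_n$ in the same order. The extra remark about shapes under rank-revealing QR is a harmless (and valid) addition.
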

\begin{proof}
 We again proceed by induction on $n$. The base case $n=0$ is trivial. For $n>0$, we have
 \begin{align*}
    \Id_{[0,\chi]} (T_{V})^{n} 
    &= ({L}_{n-1}^\dagger {L}_{n-1}) T_V  \\
    &= \sum_\alpha V_\alpha^\dagger {L}_{n-1}^\dagger {L}_{n-1} V_\alpha\\
    &=  \sum_\alpha  {L}_{n-1}^\dagger V_{n-1,\alpha}^\dagger  V_{n-1,\alpha} {L}_{n-1}  & \nonumber \\
    &= \sum_\alpha {L}_{n-1}^\dagger {R}_{n}^\dagger {Q}_{n,\alpha}^\dagger {Q}_{n,\alpha}{R}_{n} {L}_{n-1} \nonumber \\
     &= \sum_\alpha {L}_{n}^\dagger {Q}_{n,\alpha}^\dagger {Q}_{n,\alpha} {L}_{n}\\
     &= {L}_{n}^\dagger  {L}_{n}  \nonumber
\end{align*}
where we used the induction hypothesis, \eqref{eq:TW_matrix_notation}, \eqref{eq:QR1gauge_app}, \eqref{eq:QR1_app}, \eqref{eq:QR1Kn_app}, and the definition of QR, respectively.
\end{proof}

We now address the sufficient condition for the convergence of QR iteration. First we must remove some arbitrariness in QR decomposition. For instance, $\W  = \Q R = (-\Q) (-R)$ are both valid, but such freedom can introduce unhelpful oscillations in $n$ preventing convergence. To this end, we require our QR sub-routine to be positive rank-revealing, in the following sense:
\begin{defn}
   Suppose $\V$ have $1+\chi$ columns and column rank $1+\chi'$, where $0 \le \chi' \le \chi$. The  QR decomposition routine $\Q, R \gets {QR}[\V]$ is called positive rank-revealing when the following are guaranteed:
   \begin{enumerate}
   \item[(I).] \textbf{Rank-revealing}: $\Q$ has $\chi'+1$ columns and $R$ has $\chi'+1$ rows. 
	   \item[(II).] \textbf{Positive}: if $\chi' = \chi$ (full column rank), $R$ has positive diagonal elements:
        \begin{equation}
            R_{aa} > 0 \,,\, a = 0, \dots, \chi \,. \label{eq:diagonal_positive}
        \end{equation}
    \end{enumerate}
	\label{defn:rank-revealing}
\end{defn}
These requirements can be fulfilled, for example, by the Gram-Schmidt procedure applied to the columns of $\V$.

\begin{prop}
  Let $\W$ is a first degree iMPO of bond dimension $\chi$, and let the sequence $(\W_n, L_n, R_n)_{n\ge 1}$ be generated by positive, rank-revealing QR starting from $\W$. 
  Suppose further that the leading eigenvector $X$ of $T_V$ is an invertible $(1+\chi)\times (1+\chi)$ matrix. 
  Then the iteration converges and brings $\W$ to left canonical form. \label{prop:QRconverge}
\end{prop}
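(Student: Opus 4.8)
The plan is to follow the accumulated gauge $L_n = R_n\cdots R_1$ itself and show it converges to the Cholesky factor of the leading eigenmatrix $X$, after which every assertion of the proposition drops out. The starting point is Lemma~\ref{lem:QR_Tv}, which gives $L_n^\dagger L_n = \Id_{[0,\chi]}(T_V)^n = \Phi^n(\Id_{[0,\chi]})$, where $\Phi(Y) := \sum_\alpha V_\alpha^\dagger Y V_\alpha$ is the left action of $T_V$ on matrices. By Prop.~\ref{prop:quasi_local_Tv} the eigenvalue $1$ of $\Phi$ is simple and all other eigenvalues lie strictly inside the unit disk, so $\Phi^n$ converges to the rank-one spectral projector onto $\mathbb{C}X$. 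I would identify the complementary left eigenmatrix of $\Phi$ as $\Pi_{00}:=|0\rangle\langle 0|$ — indeed $\sum_\alpha V_\alpha\,\Pi_{00}\,V_\alpha^\dagger=\Pi_{00}$ because the first column of $V_\alpha$ is $\delta_{\alpha 0}|0\rangle$ by the regular form — and since $\IP{\Pi_{00}}{\Id_{[0,\chi]}}=\IP{\Pi_{00}}{X}=X_{00}=1$, conclude $L_n^\dagger L_n=\Phi^n(\Id_{[0,\chi]})\longrightarrow X$.

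The second step uses invertibility of $X$. Rank-revealing $\widehat{QR}$ never raises the bond dimension, so the bond dimensions $\chi_n$ of $\V_n$ are non-increasing with $\chi_0=\chi$; meanwhile $\operatorname{rank}(L_n)=\operatorname{rank}(L_n^\dagger L_n)\to\operatorname{rank}(X)=1+\chi$, forcing $\chi_n\ge\chi$ for large $n$, hence $\chi_n\equiv\chi$. Consequently every $\V_n$ has full column rank, every $R_n$ (and therefore every $L_n=R_n\cdots R_1$) is a square, invertible, upper-triangular matrix with strictly positive diagonal — the latter precisely by the ``positive'' clause of Definition~\ref{defn:rank-revealing}. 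A product of upper-triangular matrices with positive diagonal is again of this type, so $L_n$ is the unique Cholesky factor of $L_n^\dagger L_n$. The Cholesky factorization is continuous on positive-definite matrices and $L_n^\dagger L_n\to X\succ 0$, so $L_n\to\widetilde{R}$, the Cholesky factor of $X$, with $\widetilde{R}^\dagger\widetilde{R}=X$.

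Everything else is then immediate. First $R_n=L_n L_{n-1}^{-1}\to\widetilde{R}\widetilde{R}^{-1}=\Id$, so the termination criterion $\varepsilon=\dn{R_n-\Id}$ of Algorithm~\ref{alg:left_can_alg_infinite_iterated_QR} tends to zero. Second, $\V_n=L_n\V L_n^{-1}$ by Eq.~\eqref{eq:QR1gauge_app} and $\widehat{\v{f}}_n=L_n\widehat{\v{f}}$, so $\W_n$ converges to an iMPO $\W_\infty$, and $\W_n=R_n\Q_n$ with $R_n\to\Id$ shows the returned $\Q_n$ converges to the same limit while being \emph{exactly} left canonical at every step. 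Finally $\W_\infty$ is left canonical: using $X=\widetilde{R}^\dagger\widetilde{R}$ and the fixed-point property $X\,T_V=X$ of Prop.~\ref{prop:quasi_local_Tv},
\begin{align*}
\Id_{[0,\chi]}T_{V_\infty} &= \sum_\alpha (\widetilde{R}^{-1})^\dagger V_\alpha^\dagger\,\widetilde{R}^\dagger\widetilde{R}\,V_\alpha\,\widetilde{R}^{-1}\\
&= (\widetilde{R}^{-1})^\dagger (X\,T_V)\,\widetilde{R}^{-1} = (\widetilde{R}^{-1})^\dagger X\,\widetilde{R}^{-1} = \Id_{[0,\chi]},
\end{align*}
which is the canonical condition~\eqref{eq:canonical_transfer}; and the limiting gauge relation $L_{W,\infty}\W=\W_\infty L_{W,\infty}$ with $L_{W,\infty}=\diag(\widetilde{R},1)$ invertible (promoting via Eq.~\eqref{eq:promotion}) exhibits $\W_\infty$ as the left canonical form of $\W$.

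The one genuinely delicate point — and the place the ``positive rank-revealing'' hypothesis is indispensable — is the passage from $L_n^\dagger L_n\to X$ to $L_n\to\widetilde{R}$: convergence of the Gram matrices alone does not pin down $L_n$, which is exactly why unstabilized QR iteration can fail to converge. The resolution is the observation that the accumulated gauge $L_n$ is itself upper-triangular with positive diagonal, hence equals the Cholesky factor of its Gram matrix, so ordinary continuity of the Cholesky map finishes the argument. The only other bookkeeping requiring care is checking that $\chi_n$ is non-increasing and that full column rank is maintained throughout, which is where invertibility of $X$ actually enters.
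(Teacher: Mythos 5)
Your proof is correct and follows essentially the same route as the paper's: Lemma~\ref{lem:QR_Tv} gives $L_n^\dagger L_n = \Id_{[0,\chi]}(T_V)^n \to X$, invertibility of $X$ rules out bond-dimension reduction so every $R_n$ and hence $L_n$ is upper triangular with positive diagonal, and continuity of the Cholesky factorization (the paper's Lemma~\ref{lem:Cholesky}) yields $L_n \to \widetilde{R}$, $R_n \to \Id$, and a left-canonical limit. Your only departures are minor: you exclude rank reduction via determinant continuity of the Gram matrices plus monotonicity rather than the paper's Lemma~\ref{lem:no_reduction} (which uses $L_n^\dagger X_n L_n = X$ at every step), you fix the normalization of the limit through the dual fixed point $|0\rangle\langle 0|$ rather than the observation $(L_n)_{00}=1$, and you verify left-canonicity of the limit directly from $X T_V = X$ instead of passing to the limit of the left-canonical $\Q_n$.
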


The proof will follow a after a few lemmas.
\begin{lemma}
Let $m > 0$. Let $\mathbb{T}_m$ be the space of $m\times m$ upper-triangular matrices with positive  diagonal elements and let $\mathcal{P}_m$ be the space of $m \times m$ positive definite matrices. Then
 \begin{equation}
	 \mathbb{T}_m \ni L \mapsto L^\dagger L \in \mathcal{P}_m
 \end{equation}
 is a homeomorphism. \label{lem:Cholesky}
\end{lemma}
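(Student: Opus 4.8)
The plan is to recognize this as the assertion that the (upper-triangular) Cholesky factorization exists, is unique, and varies continuously with its argument. Write $f(L) := L^\dagger L$. First I would check $f$ is a well-defined continuous map $\mathbb{T}_m \to \mathcal{P}_m$: if $L \in \mathbb{T}_m$ then $\det L = \prod_a L_{aa} > 0$, so $L$ is invertible and $L^\dagger L$ is positive definite, and $f$ is manifestly continuous, being polynomial in the entries of $L$. The remaining work is to show $f$ is a bijection whose inverse is continuous.

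Bijectivity is routine linear algebra. For surjectivity I would run the standard Cholesky recursion: given $A = \begin{pmatrix} a & \v{v}^\dagger \\ \v{v} & B \end{pmatrix} \in \mathcal{P}_m$ with scalar $a>0$, one has $A = L^\dagger L$ with $L = \begin{pmatrix} \sqrt{a} & \v{v}^\dagger/\sqrt{a} \\ 0 & L' \end{pmatrix}$, where $L' \in \mathbb{T}_{m-1}$ is inductively the Cholesky factor of the Schur complement $B - \v{v}\v{v}^\dagger/a$, which is again positive definite because $A$ is (the base case $m=1$ is trivial). For injectivity, if $L_1^\dagger L_1 = L_2^\dagger L_2$ with $L_1,L_2 \in \mathbb{T}_m$, then $(L_2^\dagger)^{-1} L_1^\dagger = L_2 L_1^{-1}$; the left side is lower triangular and the right side upper triangular, so both equal a diagonal matrix $D$ with positive diagonal entries, and then $L_2 = D L_1$ together with $L_1 = D L_2$ forces $D^2 = \Id$, hence $D = \Id$ and $L_1 = L_2$.

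The main obstacle is continuity of $f^{-1}$, which I would establish sequentially. Given $A_n \to A$ in $\mathcal{P}_m$, set $L_n := f^{-1}(A_n)$ and $L := f^{-1}(A)$. The identity $\dn{L_n}_F^2 = \Tr[L_n^\dagger L_n] = \Tr[A_n] \to \Tr[A]$ shows $\{L_n\}$ is bounded in the finite-dimensional ambient space, so it has convergent subsequences, and it suffices to prove every subsequential limit equals $L$. If $L_{n_k} \to L'$, then $L'$ is upper triangular with nonnegative real diagonal entries (both closed conditions, the diagonal entries being entrywise limits of positive reals), and $L'^\dagger L' = A$ by continuity of $f$. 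The only delicate point is excluding a vanishing diagonal entry of $L'$: this follows from $\prod_a |L'_{aa}|^2 = |\det L'|^2 = \det A > 0$, which forces every $L'_{aa} > 0$, so $L' \in \mathbb{T}_m$, and uniqueness then gives $L' = L$. Hence $L_n \to L$, so $f^{-1}$ is continuous and $f$ is a homeomorphism. (Alternatively, since $\mathbb{T}_m$ and $\mathcal{P}_m$ are open subsets of real vector spaces of the same dimension $m^2$, Brouwer's invariance of domain would make the continuous bijection $f$ automatically a homeomorphism; the elementary argument above avoids invoking it.) This is precisely the input needed later to upgrade the convergence $X_n \to X$ of the transfer-matrix iterates to convergence of the gauge factors themselves.
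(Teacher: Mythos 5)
Your proposal is correct, and it is more self-contained than what the paper does: the paper's entire proof is a deferral, noting that the continuous inverse (the upper-triangular Cholesky factorization) "is constructed explicitly in standard linear algebra textbooks," the implicit point being that the explicit recursive formulas for the Cholesky factor — pivot square roots and Schur complements — are manifestly continuous functions on $\mathcal{P}_m$. You prove the same existence and uniqueness facts by the same standard recursion and triangular-cancellation argument, but you obtain continuity of $f^{-1}$ by a different, softer route: boundedness of the factors via $\Tr[L_n^\dagger L_n]=\Tr[A_n]$, passage to subsequential limits, exclusion of degenerate (singular upper-triangular) limits via $|\det L'|^2=\det A>0$, and then uniqueness to identify every limit point. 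Both arguments are valid; the textbook route gives continuity "for free" from the explicit formulas (and in fact smoothness of $f^{-1}$ on $\mathcal{P}_m$), while your compactness-plus-uniqueness argument is slightly more robust in spirit (it is the pattern one would reuse when no closed-form inverse is available) at the cost of being indirect. Your closing remark that invariance of domain would also suffice, after checking that $\mathbb{T}_m$ and $\mathcal{P}_m$ are open sets of the same real dimension $m^2$, is also correct, though unnecessary given the elementary argument. Either version supplies exactly what the paper needs in the proof of Proposition~\ref{prop:QRconverge}, namely that $L_n^\dagger L_n \to X$ with $X$ positive definite forces $L_n \to L$ with $L^\dagger L = X$.
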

The continuous inverse is constructed explicitly in standard linear algebra textbooks.

In general, the QR iteration with rank revealing will produce a sequence of $\W_n$'s with reducing bond dimensions, $\chi_0 \ge \chi_1 \ge \dots $. However, when $X$ is non-singular, no strict bond dimension reduction can occur:
\begin{lemma}
  Under the same hypotheses of Prop.~\ref{prop:QRconverge}, all the $\W_n$'s have the same bond dimension as $\W$.  \label{lem:no_reduction}
\end{lemma}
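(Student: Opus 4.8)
The claim is that if the leading eigenvector $X$ of $T_V$ is invertible, then the rank-revealing QR iteration never strictly reduces the bond dimension. I would argue by contradiction: suppose $\chi_{n} < \chi_{n-1}$ for some $n$, i.e. the matrix $\V_{n-1}$ has a genuine column-rank deficiency so that $R_n$ has strictly fewer rows than columns. The strategy is to transport this rank deficiency back to $\W$ itself via the gauge relation $L_n \V = \V_n L_n$ from Eq.~\eqref{eq:QR1gauge_app}, and then to show that a rank-deficient $\V$ forces $X$ to be singular, contradicting the hypothesis.

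\textbf{Key steps.} First I would invoke Lemma~\ref{lem:QR_Tv}: for every $n$,
\begin{equation}
  \Id_{[0,\chi]} (T_V)^n = L_n^\dagger L_n .
\end{equation}
Since $\W$ is first degree, Prop.~\ref{prop:quasi_local_Tv} tells us $T_V$ has a unique dominant eigenvalue $1$ with left eigenvector $X$ and all other eigenvalues strictly inside the unit disk, so $\Id_{[0,\chi]}(T_V)^n \to X$ as $n\to\infty$ (the overlap of $\Id_{[0,\chi]}$ with the leading eigenvector is nonzero because $X_{00}=1$ and, from the block-triangular structure \eqref{eq:Tv_block}, the generalized eigenvectors for the subdominant blocks have vanishing $00$-entry). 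Hence $L_n^\dagger L_n \to X$. Now if some $\V_{n-1}$ had reduced column rank $1+\chi' < 1+\chi$, then by the rank-revealing property (Defn.~\ref{defn:rank-revealing}(I)) $R_n$, and therefore $L_n = R_n\cdots R_1$, would have at most $1+\chi'$ rows for all subsequent indices (once the bond dimension drops it cannot come back up, since each $R_m$ for $m\ge n$ maps into a space of dimension $\le 1+\chi'$). Consequently $L_m^\dagger L_m$ would have rank $\le 1+\chi' < 1+\chi$ for all $m\ge n$, so its limit $X$ would be singular — contradicting the assumption that $X$ is invertible. This shows no reduction ever occurs, so every $\W_n$ (equivalently every $\V_n$) has full column rank $1+\chi$, i.e. the same bond dimension as $\W$.

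\textbf{Main obstacle.} The delicate point is the convergence claim $\Id_{[0,\chi]}(T_V)^n \to X$ and, in particular, that the limit is genuinely $X$ rather than $0$ — this requires that $\Id_{[0,\chi]}$ has a nonzero component along the dominant eigenvector in the generalized-eigenspace decomposition of $T_V$. I would establish this by the same $00$-component bookkeeping used in the proof of Prop.~\ref{prop:dominant_jordan_block}: from the block-upper-triangular form \eqref{eq:Tv_block}, the $00$-entry of $X(T_V)^n$ equals $X_{00}=1$ for all $n$, whereas any generalized eigenvector for a block with $|\lambda|<1$ contributes $o(1)$ to the $00$-entry; since $(\Id_{[0,\chi]})_{00}=1$, the coefficient of $X$ in the expansion of $\Id_{[0,\chi]}$ is exactly $1$, so $\Id_{[0,\chi]}(T_V)^n\to X$. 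A secondary subtlety is making precise that once $R_n$ has fewer rows than columns the matrices $L_m$ for $m>n$ stay at the reduced size; this follows because $\Q_n$ has $1+\chi'$ columns, so $\V_n = R_n\Q_n$ is $(1+\chi')\times(1+\chi')$, and rank-revealing QR of a matrix with at most $1+\chi'$ columns produces $R_{n+1}$ with at most $1+\chi'$ rows, and so on inductively.
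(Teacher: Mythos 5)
Your proof is correct, but it takes a genuinely different route from the paper's. The paper argues per-$n$, with no limits: since $\W_n$ is gauge-equivalent to $\W$ it is also first degree (Lemma~\ref{lem:quasilocal-gauge}), so $T_{V_n}$ has its own dominant eigenvector $X_n$, and transporting it through the gauge relation \eqref{eq:QR1gauge_app} gives the exact identity $L_n^\dagger X_n L_n = X$ (the normalization fixed by the $00$-entry, using uniqueness of the dominant eigenvector from Prop.~\ref{prop:quasi_local_Tv}); invertibility of $X$ then forces each $L_n$ to be square on the spot. You instead work asymptotically: you reuse Lemma~\ref{lem:QR_Tv} to write $L_n^\dagger L_n = \Id_{[0,\chi]}(T_V)^n$, show via the block-triangular structure \eqref{eq:Tv_block} and the $00$-entry bookkeeping that this converges to $X$ with coefficient exactly $1$, and then use the fact that a bond-dimension drop at any finite step caps the number of rows of all later $L_m$, so by lower semicontinuity of rank the limit $X$ would be singular. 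Both arguments are sound; yours is economical in prerequisites (it never needs to apply Prop.~\ref{prop:quasi_local_Tv} to the iterates $\W_n$, nor Lemma~\ref{lem:quasilocal-gauge}, and the convergence $\Id_{[0,\chi]}(T_V)^n \to X$ is needed anyway in Eq.~\eqref{eq:KntoX_app} of the convergence proof — in fact your $00$-component justification of the unit coefficient is spelled out more explicitly than in the paper), while the paper's version is sharper in that it yields the structural identity $L_n^\dagger X_n L_n = X$ for every $n$ and excludes a rank drop at each step without any limiting or semicontinuity argument. One small point worth making explicit if you write this up: the final step uses that the set of matrices of rank at most $1+\chi'$ is closed, so the limit of the $L_m^\dagger L_m$ cannot have full rank.
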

\begin{proof}
By the gauge transform~\eqref{eq:QR1gauge_app} and Lemma~\ref{lem:quasilocal-gauge}, $\W_n$ is also first degree. So we can apply Prop.~\ref{prop:quasi_local_Tv} and let $X_n$ be the dominant eigenvector of $T_{V_n}$: $ X_n T_{V_n} = X_n.$  Then the gauge transform \eqref{eq:QR1gauge_app} implies \begin{equation}
    [{L}_n^\dagger X_n {L}_n] T_V = {L}_n^\dagger X_n {L}_n \,,
\end{equation}
similarly to \eqref{eq:JXJ}. This means that ${L}_n^\dagger X_n {L}_n = X$ by Prop.~\ref{prop:quasi_local_Tv}  (the constant is fixed by the $00$-th element). For $X$ to be non-singular, ${L_n}$ must be a square matrix, so the bond dimension does not change.
\end{proof}

We remark on a useful consequence of Lemma~\ref{lem:no_reduction}: since no rank reduction will happen, we only need the QR to be positive, not necessarily rank-revealing. This can be fulfilled by numerically stable implementations of QR based on Givens rotations or Householder reflections. 

\begin{proof}[Proof of Prop.~\ref{prop:QRconverge}]
By the definition of positive rank-revealing QR, and Lemma~\ref{lem:no_reduction}, for any $n\ge 1$, ${R}_n \in \mathbb{T}_{1+\chi}$, and thus ${L}_n \in \mathbb{T}_{1+\chi}$. Now, Lemma~\ref{lem:QR_Tv} and Prop.~\ref{prop:quasi_local_Tv} imply that 
\begin{equation}
  {L}_{n}^\dagger  {L}_{n} =  \Id_{[0,\chi]} (T_{V})^{n} \xrightarrow{n \to \infty}
  X = \begin{pmatrix}
  1 & \v{y} \\ 
  \v{y}^\dagger & Y 
  \end{pmatrix}. \,\, \label{eq:KntoX_app}
\end{equation}
Note that $({L}_n)_{00} = 1$ for all $n$.  Eq.~\eqref{eq:KntoX_app} implies that $X$ is positive semi-definite. Since we assume $X$ is non-singular, $X$ is positive {definite}. Then, Lemma~\ref{lem:Cholesky} implies that ${L}_n \to {L}$ for some invertible ${L}$, and the QR iteration converges as follows:
\begin{align*}
    & \V_n = {L}_n \V {L}_n^{-1} \to {L} \V {L}^{-1} := \V_L \\  
    & {R}_n = {L}_{n+1}^{-1} {L}_n \to \Id_{[0,\chi+1]}  \\
    &{\Q}_n = \V_{n-1} {R}_n^{-1} \to  \V_L \,,
\end{align*}
so that $\V_L$ is a left canonical MPS. Promoting to the iMPO level using~\eqref{eq:promotion} completes the proof.
\end{proof}

Prop.~\ref{prop:QRconverge} establishes the existence of left canonical for all ``generic'' first degree iMPOs, in the sense that $X$ is non-singular. We now treat the singular cases:
\begin{prop}
Let $\W$ be a first degree iMPO and such that the leading eigenvector $X$ of $T_V$ is positive semi-definite of rank $1+\chi'\le 1+\chi$. Then there is gauge transform $L\W = \W' L$ is such that $\W'$ has bond dimension $\chi'$ and such that $X'$ is positive definite. \label{lem:rank_reduce}
\end{prop}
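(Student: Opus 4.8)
The plan is to build the required gauge matrix $L$ from a ``thin'' square root of the leading eigenvector $X$. By Prop.~\ref{prop:quasi_local_Tv}, $X$ is Hermitian with $X_{00}=1$, and we are assuming it is positive semi-definite of rank $1+\chi'$. So I would write $X = K^{\dagger}K$ with $K$ a $(1+\chi')\times(1+\chi)$ matrix of full row rank; note that then $\ker K = \ker X$. Since this factorization is unique up to $K \mapsto UK$ with $U$ unitary, and $\|K\v{e}_0\|^2 = X_{00}=1$, I would use the unitary freedom to rotate the first column of $K$ to the standard basis vector, obtaining $K = \left(\begin{smallmatrix} 1 & \v{t}\\ 0 & \mathsf{K}\end{smallmatrix}\right)$ with $\mathsf{K}$ of shape $\chi'\times\chi$ and full row rank $\chi'$ (since $\operatorname{rank}\mathsf{K} = \operatorname{rank}K - 1 = \operatorname{rank}X - 1 = \chi'$). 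Padding this as $L := \operatorname{diag}(1,\mathsf{K},1)$ gives a matrix of the admissible block form~\eqref{eq:gauge_form}.

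The heart of the argument is to check that the gauge-transformed iMPO $\W'$ actually exists, i.e.\ that $K\V = \V' K$ is solvable with $\V'$ again in regular form of bond dimension $\chi'$. Because $K$ has full row rank, such a $\V'$ exists if and only if $\ker K$ is invariant under right multiplication by $\V$, i.e.\ $V_\alpha(\ker X)\subseteq \ker X$ for every component $\alpha$. I would obtain this from the fixed-point relation $\sum_\alpha V_\alpha^{\dagger} X V_\alpha = X$ (which is $X T_V = X$ written via~\eqref{eq:TW_matrix_notation}): for $v \in \ker X$ one has $0 = v^{\dagger} X v = \sum_\alpha (V_\alpha v)^{\dagger} X (V_\alpha v)$, a sum of non-negative terms since $X\succeq 0$, so each $V_\alpha v \in \ker X$. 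One then defines $\V'$ on $\operatorname{im}K$ by $\V'(Kv) := K\V v$; this is well defined and has bond dimension $\chi'$. Regular form is inherited because $K\v{e}_0 = \v{e}_0$ and the first column of $\V$ is $(\1,0,\dots,0)^{T}$, forcing the first column of $\V'$ to be $(\1,0,\dots,0)^{T}$ as well, and in particular the lower-left block of $\V'$ to vanish. Promoting to the iMPO level via~\eqref{eq:promotion} yields $\W'$ with $L\W = \W'L$ and bond dimension $\chi'$.

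Finally I would establish that $X'$, the leading eigenvector of $T_{V'}$, is positive definite. By Lemma~\ref{lem:quasilocal-gauge}, $\W'$ is first degree (the argument there still applies because $\mathsf{K}^{\dagger}$ is injective), so Prop.~\ref{prop:quasi_local_Tv} gives a \emph{unique} dominant eigenvector $X'$ of $T_{V'}$ with $X'_{00}=1$. From $V'_\alpha K = K V_\alpha$ one checks directly that $K^{\dagger}X'K$ is a fixed point of $T_V$, and $(K^{\dagger}X'K)_{00} = \v{e}_0^{\dagger}X'\v{e}_0 = 1 = X_{00}$; by the uniqueness in Prop.~\ref{prop:quasi_local_Tv}, $K^{\dagger}X'K = X$. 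Positivity of $X'$ is then automatic: $K$ is surjective, so any $w$ is $Kv$ for some $v$, and $w^{\dagger}X'w = v^{\dagger}K^{\dagger}X'Kv = v^{\dagger}Xv \ge 0$. Moreover $1+\chi' = \operatorname{rank}X = \operatorname{rank}(K^{\dagger}X'K)\le \operatorname{rank}X' \le 1+\chi'$, so $X'$ has full rank and is therefore positive definite, as claimed.

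I expect the main obstacle to be the second step --- showing $\V'$ is well defined and remains in regular form --- since this is where the positive semi-definiteness of $X$ and the specific block structure $X_{00}=1$ from Prop.~\ref{prop:quasi_local_Tv} are both essential; the remaining parts are bookkeeping with the gauge-promotion identity~\eqref{eq:promotion} and a rank count.
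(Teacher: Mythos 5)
Your proof is correct, and it rests on the same mechanism as the paper's own argument: positive semidefiniteness of the fixed point $X$ of $T_V$ forces $\ker X$ to be invariant under every component $V_\alpha$, and that invariance is what lets the bond dimension be cut to $\operatorname{rank}X - 1 = \chi'$. The packaging differs. The paper factors $X = L^\dagger X_1 L$ with $L$ invertible and $X_1 = \diag(1,\sigma_1,\dots,\sigma_\chi)$, gauges by $L$, reads off from the diagonal components of $X_1 T_{V_1} = X_1$ that the entries $(\V_1)_{ba}$ with $\sigma_b>0$, $\sigma_a=0$ vanish, and then applies a projector gauge; you instead take a thin square root $X = K^\dagger K$ (normalized so that $K\v{e}_0=\v{e}_0$) and do both steps in one stroke by factoring $\V$ through $K$ --- your kernel-invariance computation is exactly the coordinate-free version of the paper's vanishing-block observation. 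You also handle the final claim differently: the paper exhibits the new fixed point explicitly as $\diag(1,\sigma_1,\dots,\sigma_{\chi'})$, whereas you recover $K^\dagger X' K = X$ from the uniqueness statement in Prop.~\ref{prop:quasi_local_Tv} and conclude positive definiteness by a rank count; both are valid, and your aside that Lemma~\ref{lem:quasilocal-gauge} applies because $\mathsf{K}^\dagger$ is injective is a worthwhile precision. One notational slip: the padded gauge must be $\diag(K,1)$, i.e.\ it has to retain the row vector $\v{t}$; written literally as $\diag(1,\mathsf{K},1)$ it would no longer satisfy $K^\dagger K = X$. Since the rest of your argument consistently uses the full $K$, nothing breaks.
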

\begin{proof}
We will construct the gauge transform by composing two gauge transforms, and still work on the level of $\V$.

First, we perform a Cholesky step followed by eigen-decomposition:
\begin{align}
    X = & \begin{pmatrix}
  1 & \v{x} \\ 
  \v{x}^\dagger & \mathsf{X} 
  \end{pmatrix} =  
  \begin{pmatrix}
  1 & 0 \\ 
  \v{x}^\dagger & \Id
  \end{pmatrix}  \begin{pmatrix}
  1 & 0 \\ 
  0 & \mathsf{X} - \v{x}^\dagger \otimes \v{x}
  \end{pmatrix}\begin{pmatrix}
  1 & \v{x} \\ 
  0  & \Id
  \end{pmatrix}   \nonumber 
   \nonumber\\ = &  
   \begin{pmatrix}
  1 & 0 \\ 
  \v{x}^\dagger & \mathsf{U}^\dagger 
  \end{pmatrix} 
  X_1
  \begin{pmatrix}
  1 & \v{x} \\ 
  0  & \mathsf{U}
  \end{pmatrix} =: {L}^\dagger X_1 {L} \label{eq:Ymatrix_app}
\end{align}
where $U$ is unitary and $X_1 = \diag (1,\sigma_1, \dots, \sigma_\chi)$ where
\begin{equation}
	\begin{cases}
		\sigma_a > 0 & \text{if } a \le \chi'\\
	\sigma_a = 0 & \text{if } a > \chi'.
	\end{cases}
\end{equation}
Since ${L}$ is invertible, we have the gauge transform
\begin{equation}
    \V_1 := {L} \V {L}^{-1}  \label{eq:rank_reduce_gauge1}
\end{equation}
so that the leading eigenvector of $T_{V_1}$ becomes the diagonal matrix $X_1$. Thus, the $aa$-th component of the equation $X_1 T_{V_1} = X_1$ becomes
\begin{equation}
    \sigma_a = \sum_{b=0}^\chi \sigma_b \left< (\V_1)_{ba}, (\V_1)_{ba}\right>  \,.
\end{equation}
When $a > \chi'$, $\sigma_a = 0$, so every term on the RHS must also vanish. Now for $b \le \chi'$, $\sigma_b > 0$, so $(\V_1)_{ba} = 0$. Namely, we showed that $\V_1$ has the block-diagonal form:
\begin{align}
    \V_1 &= \begin{pmatrix}
        \V'_{[0,\chi']} & 0 \\ 
        0 & * \\
    \end{pmatrix},
\end{align}
where $\V'$ has shape $(1+\chi') \times (1+\chi')$. This implies that $\V_1$ can be gauge transformed to $\V'$ by a projector:
\begin{equation}
 \begin{pmatrix}
 \Id_{[0,\chi']} & 0 
 \end{pmatrix}    \V_1=  \V' \begin{pmatrix}
 \Id_{[0,\chi']} & 0 
 \end{pmatrix}   \label{eq:rank_reduce_gauge2}
\end{equation}
It is easy to check that $T_{V'}$ has leading eigenvector $X_2 =\diag(1, \sigma_1, \dots, \sigma_{\chi'})$, which is non-singular. Composing the two gauge transforms~\eqref{eq:rank_reduce_gauge1} and \eqref{eq:rank_reduce_gauge2} and promoting them to the iMPO level completes the proof.
\end{proof}

Now we can finally prove the existence of left canonical form for all first degree iMPOs.
\begin{proof}[Proof of Prop.~\ref{prop:leftcan_exist}]
By Prop~\ref{lem:rank_reduce}, we find first a rank-reducing $L_0$ and $\W'$ so that $L_0 \W = \W' L_0$ and $\W'$ satisfies the assumptions of Prop.~\ref{prop:QRconverge}. Then the QR iteration must converge and bring $\W'$ to a left canonical $\W_L$ by some gauge transform $L_1 \W' = \W_L L_1$. Composing the gauge transforms gives $L \W = \W_L L$ with $L = L_1 L_0$.
\end{proof}

Note that the above proof and that of Lemma~\ref{lem:rank_reduce} provide a foolproof algorithm to compute the left canonical form: first precondition the MPO by reducing its rank, then use QR iteration. We provide an implementation in Algorithm~\ref{alg:general_left_canonical_form}. This algorithm is provably convergent for all first degree iMPOs, and has comparable numerical precision and stability to the QR iteration Algorithm~\ref{alg:left_can_alg_infinite_iterated_QR}. (Recall that any method of taking the square root of $X$ directly reduces the precision from $10^{-16}$ to $10^{-8}$ with standard floating point; QR iteration is required for high precision.)
\begin{figure}
\begin{algorithm}[H]
    \caption{iMPO Left Can. Form: General}
	\label{alg:general_left_canonical_form}
	\setstretch{1.35}
	\begin{algorithmic}[1]
        \Procedure{Precondition}{$\W, \eta$}  
	    \State $X \gets \textsc{EigMax}(T_V)$  \Comment{Find max. eigenvector}
		\State $\v{x}, \mathsf{U}, \Sigma \gets X$  \Comment{Eq.~\eqref{eq:Ymatrix_app}}
	    \State $\chi' \gets \max \{a: \sigma_a > \eta^2 \}$
		\State $\v{x}, \mathsf{U}  \gets [{x}_a]_{1\le a\le \chi'},  [\mathsf{U}_{ab}]_{1 \le a\le \chi', 1\le b\le \chi }$ \\
	    \State $L \gets \begin{pmatrix}
			1 & \v{x} & 0 \\ & \mathsf{U} & 0 \\ && 1
	    \end{pmatrix}, L' \gets  \begin{pmatrix}
			1 & -\v{x} & 0 \\ & \mathsf{U}^\dagger & 0 \\ && 1
	    \end{pmatrix}$
	    \State \textbf{return} $ L \W L', L	     $
	    \EndProcedure
	  \Procedure{LeftCan}{$\W, \eta$}  \Comment{$\eta$: tolerance}
	    \State $\W, L_0 \gets \textsc{PreCondition}(\W, \eta)$
	    \State $\W, L_1 \gets \textsc{QRIter}(\W, \eta)$ \Comment{Alg. ~\ref{alg:left_can_alg_infinite_iterated_QR}}
	    \State \textbf{return} $\W, L_1 L_0$ 
	  \EndProcedure
	\end{algorithmic}
\end{algorithm}
\end{figure}
The main drawback of Algorithm~\ref{alg:general_left_canonical_form} is its efficiency: the preconditioning routine involves two eigenvalue problems: finding the leading eigenvector $X$, and (almost) diagonalizing it. It is often more expensive than the QR iteration itself. This brings us to a natural question: why couldn't we prove the existence of left canonical form for all first degree iMPOs (Prop.~\ref{prop:leftcan_exist}) directly using QR iteration? After all, the rank-revealing QR can also reduce bond dimension and potentially serve the r\^ole of the preconditioning step. The answer, unfortunately, is that there are first degree iMPOs for which the QR iteration fails. 

\begin{eg} \label{eg:troll} Consider the spin-half iMPO
\begin{equation}
    \W := \begin{pmatrix}
    \1 & 0 & \widehat{Z} \\
       & \alpha \widehat{Z} & \widehat{X} \\
       &               & \1
    \end{pmatrix}  \,,
\end{equation}
where $|\alpha| < 1$ so that $\W$ is first degree. But applying Algorithm~\ref{alg:left_can_alg_infinite_iterated_QR} to it will yield 
\begin{equation}
    \W_n  = \begin{pmatrix}
    \1 & 0 & \widehat{Z} \\
       & \alpha \widehat{Z} & \alpha^n \widehat{X} \\
       &               & \1
    \end{pmatrix},
	\quad L_n = \begin{pmatrix}
     1 & 0 & 0 \\
       & \alpha^n  & 0 \\
       &  0                & 1 \\
    \end{pmatrix}  \,.
\end{equation}
Everything seems to converge, but $\lim_{n\to\infty}\W_n$ is not left canonical! In fact, $\lim_{n\to\infty} L_n$ is singular, which makes the argument in the proof of Prop.~\ref{prop:QRconverge} inapplicable. The origin of this failure is that, the middle state of the state machine is not reachable from the initial state, so the middle row and column can be removed altogether. (This is precisely what the \textsc{Precondition} routine in Algorithm~\ref{alg:general_left_canonical_form} does.) But the rank-revealing QR fails to detect this, because $\W$ has full column rank. 
\end{eg}

We close this appendix by noting that the above theory for the convergence of QR iteration can be improved. Indeed the assumption of Prop.~\ref{prop:QRconverge} can be certainly relaxed. It will be interesting to find a sufficient and necessary condition of convergence, and improve the efficiency of the preconditioning step.

\section{Exact estimates of Schmidt values}\label{app:schmidt}
We study the singular values of the matrix $M$ defined in \eqref{eq:Schmidt_for_MPO} (which form the entanglement spectrum of an MPO) by repeatedly applying a rank one perturbation. 

First, we consider the sub-matrix 
\begin{equation}
    M_0 := \begin{pmatrix}
		\mathcal{N}_R & \v{p}_R  \\
		0 & \mathsf{S} 
	\end{pmatrix} \,,
\end{equation}
where $\mathsf{S} = \diag(s_1 \ge \dots \ge s_\chi)$ so that 
\begin{equation}
  M_0^\dagger M_0    = 
     \begin{pmatrix}
		0 & 0 \\
		0 & \mathsf{S}^2   
	\end{pmatrix} + 
	 \begin{pmatrix} \mathcal{N}_R \\ \v{p}_R^\dagger \end{pmatrix} \begin{pmatrix} \mathcal{N}_R & \v{p}_R \end{pmatrix}
\end{equation}
is a rank one perturbation of $\mathrm{diag}(0, s_1^2, s_2^2, \dots )$.  A standard result then shows that the singular values of $M_0$, denoted $\mu_0 \ge \mu_1 \ge \mu_2 \ge \dots \mu_\chi$, are given by the positive roots of the equation 
\begin{equation}
    \frac{\mathcal{N}_R^2}{\mu^2} + 
    \sum_a \frac{|p_R^a|^2}{\mu^2 - s_a^2} = 1 \,. \label{eq:rank1}
\end{equation}
This implies the interlacing relation 
\begin{equation}
    \mu_0 \ge s_1 \ge \mu_1 \ge s_2 \ge  \dots \ge s_\chi \ge \mu_\chi.
    \label{eq:interlacing0}
\end{equation}
For the largest singular value, \eqref{eq:rank1} further implies 
$$ \frac{\mathcal{N}_R^2}{\mu_0^2} + 
    \sum_a \frac{|p_R^a|^2}{\mu_0^2} \le 1 \le   \frac{\mathcal{N}_R^2}{\mu_0^2-s_1^2} + 
    \sum_a \frac{|p_R^a|^2}{\mu_0^2-s_1^2} \,, $$
leading to the following estimates:
\begin{equation}
  \mathcal{N}_R^2 + \dn{\v{p}_R}^2 + s_1^2 \ge  \mu_0^2 \ge \mathcal{N}_R^2 + \dn{\v{p}_R}^2 \,. \label{eq:mu0_esti}
  \end{equation}
 In particular, the separation of scales \eqref{eq:scales} implies $\mu_0^2 = \mathrm{\Theta}(N)$ and $\mu_{a\ge 1}^2 = \mathrm{O}(1) $.

In a very similar fashion, we now go back to the full matrix and consider 
\begin{equation}
    M M^\dagger = 
    \begin{pmatrix}
    M_0 M_0^\dagger & \\ & 0
    \end{pmatrix} + \begin{pmatrix} 0 \\ \v{p}_L  \\ \mathcal{N}_L  \end{pmatrix} \begin{pmatrix} 0 & \v{p}_L^\dagger  & \mathcal{N}_L  \end{pmatrix}
\end{equation}
which is similar to 
\begin{equation}
  \begin{pmatrix}
   \mu_0^2 && \\  & D_{\mu} & \\ & & 0
    \end{pmatrix} + \begin{pmatrix} \v{q}_L  \\ \mathcal{N}_L  \end{pmatrix} \begin{pmatrix} \v{q}_L^\dagger & \mathcal{N}_L  \end{pmatrix} \,, \label{eq:M_similar}
\end{equation}
under conjugation where $D_\mu = \mathrm{diag}(\mu_1^2, \dots, \mu_\chi^2)$,  $\v{q}_L= U (0 \; \v{p}_L)^T $, $U$ being a unitary matrix such that $U M_0 M_0^\dagger U^{\dagger} = \mathrm{diag}(\mu_0^2, \mu_1^2, \dots, \mu_\chi^2) $. Applying rank one perturbation again to \eqref{eq:M_similar}, we obtain the following equation determining the singular values of $M$:
\begin{equation}
    \frac{\mathcal{N}_L^2}{\lambda^2} + 
     \frac{|q_L^0|^2}{\lambda^2 - \mu_0^2} + 
    \sum_{a=1}^{\chi} \frac{|q_L^a|^2}{\lambda^2 - \mu_a^2} = 1  \,. \label{eq:rank1_2}
\end{equation}
This implies the interlacing relation 
\begin{equation}
    \lambda_{-1} \ge \mu_0 \ge \lambda_0 \ge \mu_1 \ge  \dots \ge \mu_\chi \ge \lambda_\chi  \,,
\end{equation}
which, combined with \eqref{eq:interlacing0}, gives \eqref{eq:interlacing} in the main text. 

Similarly to \eqref{eq:mu0_esti}, we can bound $\lambda_{-1}$ as follows:
\begin{subequations}
\begin{align}
 \lambda_{-1}^2 &\ge 
  \mathcal{N}_L^2 + \dn{\v{q}_L}^2 =  \mathcal{N}_L^2 + \dn{\v{p}_L^2} \\
  \lambda_{-1}^2 &\le 
   \mathcal{N}_L^2 + \dn{\v{p}_L^2}  + \mu_0^2 \,.
\end{align}
   \label{eq:lambda-1}%
   \end{subequations}
Under the separation of scales~\eqref{eq:scales}, $\lambda_{-1} = \mathrm{\Theta}(N)$ is extensive. 

We also need a useful lower bound for largest singular value $\lambda_0$. For this, we note that \eqref{eq:rank1_2} implies
\begin{equation}
    \frac{\mathcal{N}_L^2}{\lambda_{0}^2} +  \sum_{a=1}^{\chi} 
    \frac{|q_L^a|^2}{\lambda_{0}^2} \le 1 + \frac{|q_L^0|^2}{\mu_0^2 - \lambda_0^2}
\end{equation}
which is a quadratic inequality (of $\lambda_0^2$). Its solution entails
\begin{align}
   2 \lambda_0^2  &\ge\,  \mu_0^2 + \mathcal{N}_L^2+ 
     \dn{\v{q}_L}^2 - \\ &\sqrt{(\mu_0^2 - \mathcal{N}_L^2 - 
     \dn{\v{q}_L}^2)^2 + 4 \mu_0^2  |q_L^0|^2}  \nonumber \\
     &\ge\, 2 \min ( \mu_0^2, \mathcal{N}_L^2 + \dn{\v{q}_L}^2) - 2 \mu_0 |q_L^0| \,.
\end{align}
Now, under~\eqref{eq:scales}, $\mu_0^2, \mathcal{N}_L^2 \in \mathrm{\Theta}(N)$ and $\v{q}_L \in \mathrm{O}(1)$, so we conclude that $\lambda_0^2 \in \mathrm{\Theta}(N)$ is also extensive. 

\section{Hamiltonian Error Bound}
\label{app:ham_error_bound}
This Appendix discusses the relation between the norm we have used throughout this work and the standard sup norm.

Recall that the sup norm is
\begin{equation}		
\dn{\widehat{H}}_s^2 := \sup_{\ket{\psi}} \frac{\braket{\psi|\widehat{H}\widehat{H}|\psi}}{\braket{\psi|\psi}}.
\end{equation}
For convenience, we work in this section with the non-scaled Frobenius norm, which we denote with a lowercase `$f$':
	\begin{equation}
		\dn{\widehat{H}}_f^2 := \Tr[\widehat{H}^\dagger \widehat{H}]  = \Tr[I] \cdot \dn{\widehat{H}}^2_F
	\end{equation}	
	With our default inner product, if $\hat{O}$ is an operator supported on $S\subset \Z$, a set of size $\n{S} = k$, then
\begin{equation}
	\braket{\hat{O}|\hat{O}} = \frac{\Tr[\hat{O}^\dagger \hat{O}]}{\Tr[I]} = \frac{\Tr[\widehat{O}_S^\dagger \widehat{O}_S]}{\Tr[\1^{\otimes k}]} = \frac{\dn{\widehat{O}}_f^2}{d^k}. 
\end{equation}
Quite generally, $\dn{\widehat{O}}_s \le \dn{\widehat{O}}_f$. So $\braket{\widehat{O}\big|\widehat{O}} =1$ implies
\begin{equation}
	\dn{\widehat{O}}_s^2 \le \dn{\widehat{O}}_f^2 = d^k
	\label{eq:locality_norm_bound}
\end{equation}
for an operator supported on $k$ sites.

We now prove Prop. \ref{prop:GS_error_bound}.  The idea is that each term in the Hamiltonian, being local, can only change the ground state energy slightly. The total change in the energy is then bounded above by the number of terms times the size of each term, which, we know to be small since they have small singular values. One could prove analagous bounds broader classes of Hamiltonians, such as long-range interactions, but this might require a significant amount of ``technology" to specify the class of operators under discussion. Nevertheless, we expect the essential point to remain unchanged: for Hamiltonian-class operators, the change in the ground state energy is $O(1)$ times the weight of the truncated singular values.

\begin{proof}[Proof of Pro.p \ref{prop:GS_error_bound}]

We will assume that $E_0 = \dn{\widehat{H}}_2$ is the ground state of the Hamiltonian, though in principle it could also be the ground state of $-\widehat{H}$. As each term is unique, the operators on the right and left sides are both orthonormal:
\begin{equation}
	\braket{\widehat{H}_S^a|\widehat{H}_S^b} = \delta^{ab}, S \in \st{L,R}.
\end{equation}
As each term is supported on at most $k$ sites, it follows from \eqref{eq:locality_norm_bound} that
\begin{equation}
	\dn{\widehat{H}_L^a \widehat{H}_R^b}_f^2 = d^k.
\end{equation}

It is a standard fact about extremal eigenvalues that if $\widehat{H} = \widehat{H}' + \delta \widehat{H}$ and $E_0' := \dn{\widehat{H}'}_s$, then
\begin{equation}
	\delta E := \n{E_0' - E_0} \le \dn{\delta \widehat{H}}_s.
\end{equation}
By hypothesis
\begin{equation}
	\dn{\delta \widehat{H}}^2_s 
	= \dn{\sum_{a=\chi'}^\chi \widehat{O}_L^a s_a \widehat{O}_R^a}_s^2
	\le 
	\sum_{a=\chi'}^\chi s_a^2 \dn{\widehat{O}_L^a \widehat{O}_R^a}_s^2  
\end{equation}
We can now separately bound each term in the sum using locality:
\begin{align*}
		&\dn{\widehat{O}_L^c \widehat{O}_R^c}_s^2\\
		& \le \dn{\widehat{O}_L^c \widehat{O}_R^c}_f^2\\
		& = \dn{\sum_{a,b=1}^\chi U^{ac} V^{cb} \widehat{H}_L^a \widehat{H}_R^b}_f^2 \\	
		& = \sum_{abef} U^{ac} V^{cb} (U^{ec})^* (V^{cf})^* \Tr[\widehat{H}_L^{a\dagger} \widehat{H}_R^{b\dagger} \widehat{H}_L^e \widehat{H}_R^f]\\
		& = \sum_{abef} U^{ac} V^{cb} (U^{ec})^* (V^{cf})^* d^k \delta^{ae} \delta^{bf}\\ 
		& = d^k\sum_{a=1}^\chi \n{U^{ac}}^2 \sum_{b=1}^\chi \n{V^{cb}}^2\\
		& = d^k,
\end{align*}
where we have used \eqref{eq:locality_norm_bound} several times and the last two equalities follow from orthogonality of the $H$'s and orthogonality of the columns and rows of $U$ and $V$, respectively.

Combining our inequalities, we have
\begin{equation}
	\dn{\delta \widehat{H}}_s^2 \le \sum_{a=\chi'}^\chi s_a^2 \dn{\widehat{O}_L^a \widehat{O}_R^a}_s^2 \le d^k \sum_{a =\chi'}^{\chi} s_a^2.
\end{equation}

\end{proof}

\section{Elementary operations}
	\label{app:elementary_operations}

	This Appendix discusses how to perform the standard algebraic operations --- scalar multiplication, addition, multiplication, and commutation --- for local MPOs. These are standard operations and are discussed in various places in the literature, but we review them here for completeness.

Suppose below that $\lambda \in \R$ is a scalar and operators $\O_1$ and $\O_2$ are represented by iMPOs
\begin{equation}
	\W[\O_1] = \begin{pmatrix}
		\1 & \CC_1 & \DD_1\\
		0 & \AA_1 & \BB_1\\
		0 & 0 & \1\\
	\end{pmatrix}
	,
	\W[\O_2] = \begin{pmatrix}
		\1 & \CC_2 & \DD_2\\
		0 & \AA_2 & \BB_2\\
		0 & 0 & \1
	\end{pmatrix}
	\label{eq:example_iMPOs}
\end{equation}
respectively with finite-automata as follows.

\begin{center}
	\includegraphics{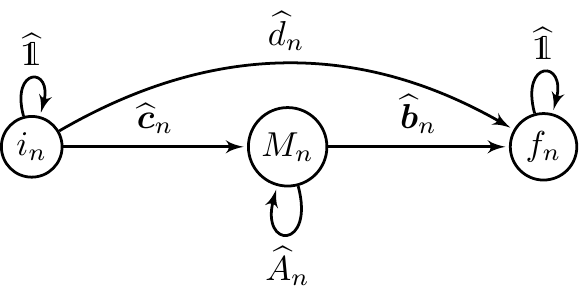}
\end{center}
Here $(i_n,M_n, f_n), n = 1,2$ stand for the initial state, the $\chi$ middle states, and the final state. 

The scalar product is straightforward: each term needs to be scaled exactly once as it moves through the automata. This can be done by scaling all the edges that are incident to the final (or initial) state.

\begin{center}
\includegraphics{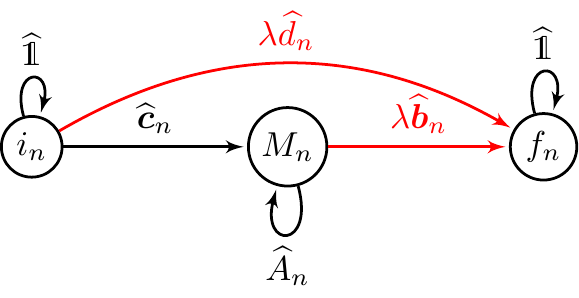}
\end{center}
At the matrix level:
\begin{equation}
	\W[\lambda \O_1] 
	= \begin{pmatrix}
		\1 & \CC_1 & \lambda \DD_1\\
		0 & \AA_1 & \lambda \BB_1\\
		0 & 0 & \1\\
	\end{pmatrix}
	= \begin{pmatrix}
		\1 & \lambda\CC_1 & \lambda \DD_1\\
		0 & \AA_1 &  \BB_1\\
		0 & 0 & \1\\
	\end{pmatrix}.
	\label{eq:scalar_mult_of_MPOs}
\end{equation}
These two choices preserve left and right canonical forms respectively.

Addition of iMPOs is essentially the direct sum of the matrices:
\begin{equation}
	\W[\O_1 + \O_2]
	= 
	\begin{pmatrix}
		\1 & \CC_1 & \CC_2 & \DD_1 + \DD_2\\
		0 & \AA_1 & 0 & \BB_1\\
		0 & 0 & \AA_2 & \BB_2\\
		0 & 0 & 0 & \1
	\end{pmatrix}.
	\label{eq:sum_of_MPOs}
\end{equation}

The operation of multiplication is more involved. The multiplication of two local operators, say $\O_1 = \sum_i \widehat{X}_i$ and $\O_2 = \sum_i \widehat{Y}_i$ is ``bi-local'' or ``second degree", with arbitrarily long strings of identities between sites with information: $\O_1 \O_2 = \sum_i \sum_{N=0}^\infty  \widehat{X}_i \1^N \widehat{Y}_{i+N} + \cdots$. This is represented as an iMPO as 
\begin{equation}
\W[O_1 O_2]
	= 
	\begin{pmatrix}
	\1 & \widehat{X} & \widehat{Y} & i \widehat{Z}\\
	0 & \1 & 0 & \widehat{Y}\\
	0 & 0 & \1 & \widehat{X}\\
	0 & 0 & 0 & \1
	\end{pmatrix}.
 \end{equation}
 The $\1$'s on the diagonal are an unavoidable consequence of being ``second degree'': $\W[O_1 O_2]$ norm $\propto N^2$ in a system of size $N$.
 
 It is insightful to look at the generic ``product automata''.
\begin{center}
	\includegraphics[scale=.9]{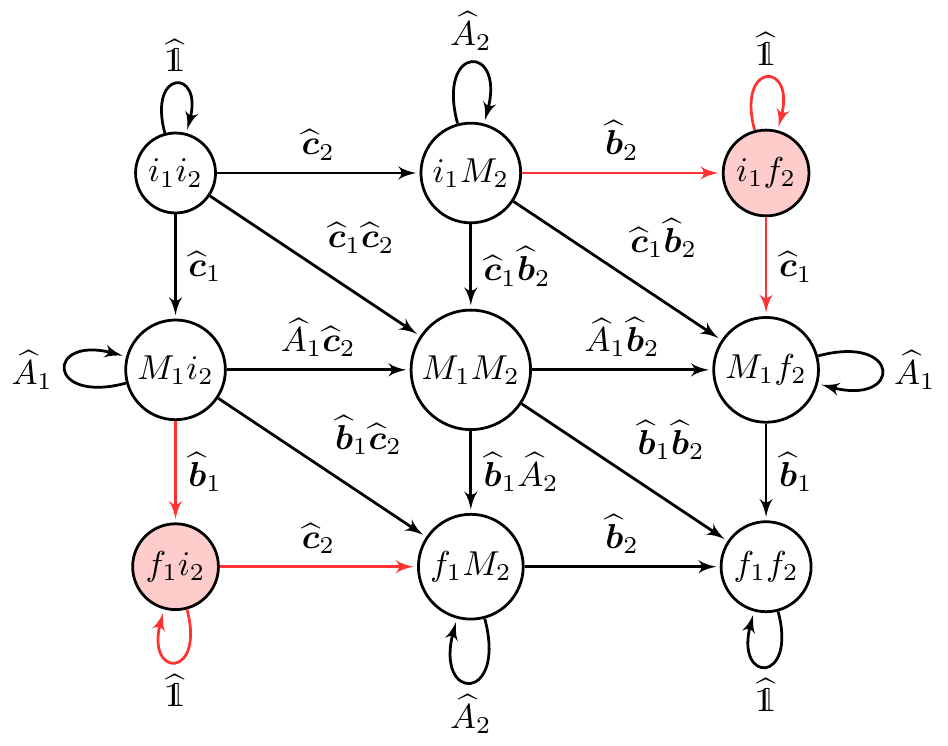}
\end{center}
(We have dropped the $\DD$ terms and also the self-loop on $M_1M_2$ for clarity.) One should interpret the products on edges as the tensor products of the ancilla space but products in the physical space. For example, ``$\BB_1 \AA_2$'' has components
\begin{equation}
	\left( \BB_1 \AA_2 \right)^\gamma_{(a_1 a_2), b_2} = \sum_{\alpha,\beta} f_{\alpha \beta}^\gamma	(B_1)_{a_1}^\alpha \left( A_2 \right)_{a_2,b_2}^\beta
\end{equation}
where $f_{\alpha\beta}^\gamma$ are the structure constants of the on-site algebra $\A$. 

The non-locality of the product comes only from the shaded parts of the automata. What if we were to simply remove the troublesome parts? This motivates a definition. 
	\begin{defn}
		Suppose $\O_1$ and $\O_2$ are two strings of single site operators (Pauli strings in the spin-$1/2$ case) with support on sites $[a_1, b_1]$ and $[a_2,b_2]$ respectively. The \textbf{non-disjoint product} is
		\begin{equation}
			\O_1 \odot \O_2 = \begin{cases}
				\O_1 \O_2 & \text{ if } [a_1,b_1] \cap [a_2,b_2] \neq \emptyset\\
				0 & \text{ otherwise.}
			\end{cases}
			\label{eq:non-disjoint_product}
		\end{equation}
The definition extends to any local operators by linearity. At the MPO level, this is just the non-shaded part of the above diagram.
	\end{defn}

	Terms with disjoint spatial support always commute, so the ``non-disjoint commutator'' is the same as the normal one:
	\begin{equation}
	[\O_1,\O_2] = \O_1 \odot \O_2 - \O_2 \odot \O_1.
	\label{eq:non-disjoint_commutator}
\end{equation}
This means that the commutator is local whenever $\O_1 \odot \O_2$ is. Therefore strictly local operators form a closed algebra under commutation.

First degree operators are not closed under commutation, as we now demonstrate by counterexample. This is a consequence of the fact that the class of first degree operators includes operators which do not make sense as local or physical Hamiltonians. These can have bizarre properties from a ground state perspective, such as superextensive ground state energy, which lead in turn to other strange issues such as the non-closure under commutation. The subset of first degree operators which \textit{are} physical Hamiltonians should be free of these issues. 

For the counterexample, suppose $\widehat{H}_l$ has an iMPO representation
\begin{equation}
	\widehat{W}_l = \begin{pmatrix}
		\1 & \widehat{X} & 0\\
		0 & {\O} & \widehat{Y}\\ 
		0 & 0 & \1
	\end{pmatrix}
\end{equation}
where $\O = \frac{c}{2} \left( \1 + \widehat{Z} \right) = \begin{pmatrix}
	c & 0\\
	0 & 0
\end{pmatrix}$ is an on-site projector matrix and take $c \in (2^{1/4}, 2^{1/2})$. The norm of $H_l$ is $\dn{H_l}^2 = \sum_{N=0}^\infty \dn{{\O}}^{2N} = \sum_{N=0}^\infty (c^2/2)^N < \infty$. However, the norm of the product diverges:
\begin{equation}
	\dn{H_l \odot H_l}^2 > \sum_{N=0}^\infty \dn{{\O} {\O}}^{2N} = \sum_{N=0}^\infty \left( c^4/2 \right)^N = \infty,
\end{equation}
since $c > 2^{1/4}$. The divergent terms here are not from the diagonal ones but from an eigenvalue $c^4/2 > 1$ of $T_A$. So not only can the product of two first degree iMPOs be strictly non-local, but the norm-per-unit-length is not even submultiplicative: there are cases where $\dn{{\O}_1 {\O}_2} \not\leq \dn{{\O}_1} \dn{{\O}_2}$. It would be interesting to find the largest closed subalgebra of the first degree operators.

Thankfully, the commutator of a first degree operator with a strictly-local operator is well-controlled, which is what enables us to perform the Lanczos algorithm within first degree operator, so long as the Hamiltonian is strictly local --- the most physically relevant case.

\begin{prop}
	If $\O_1$ is strictly local and $\O_2$ is first degree, then $[\O_1, \O_2]$ is first degree.
\end{prop}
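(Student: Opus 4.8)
The plan is to exhibit one convenient iMPO for $[\O_1,\O_2]$ and check Definition~\ref{def:first_degree_operator} directly on it; since first degreeness is gauge invariant (Lemma~\ref{lem:quasilocal-gauge}) any representative suffices. The natural choice is built from the non-disjoint product of Appendix~\ref{app:elementary_operations}: by~\eqref{eq:non-disjoint_commutator} one has $[\O_1,\O_2] = \O_1\odot\O_2 - \O_2\odot\O_1$, so I take the iMPO for $\O_1\odot\O_2$, scale the iMPO for $\O_2\odot\O_1$ by $-1$ (which leaves its $\AA$ block unchanged, cf.~\eqref{eq:scalar_mult_of_MPOs}), and add them, which direct-sums the $\AA$ blocks (cf.~\eqref{eq:sum_of_MPOs}). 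Thus $\AA[[\O_1,\O_2]] = \diag\!\big(\AA[\O_1\odot\O_2],\,\AA[\O_2\odot\O_1]\big)$, and it suffices to show that the transfer matrix of $\AA[\O_1\odot\O_2]$ (and symmetrically of $\AA[\O_2\odot\O_1]$) has spectral radius strictly below $1$. I will also use the elementary facts that $T_W$ is block upper triangular whenever $W$ is, so that $\mathrm{spec}(T_W)$ is the union over diagonal super-blocks $(p,q)$ of the spectra of the maps $X\mapsto\sum_\alpha (W_{pp})_\alpha^\dagger X (W_{qq})_\alpha$, and that $T$ of a (block-)strictly upper triangular operator-valued matrix is nilpotent.

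Next I read off the structure of $\AA[\O_1\odot\O_2]$ from the product automaton of Appendix~\ref{app:elementary_operations}. Its middle states are pairs (state of machine $1$, state of machine $2$); the $\odot$ prescription deletes exactly the two ``waiting'' states $(f_1,i_2)$ and $(i_1,f_2)$ — precisely the states carrying a bare $\1$ self-loop that make the plain product second degree. What remains is block upper triangular, and the self-loop (``content'') of each surviving diagonal super-block is among $\AA_1$ (for states where only machine $1$ is mid-term, e.g.\ $(M_1,i_2)$ and $(M_1,f_2)$), $\AA_2$ (for $(i_1,M_2)$ and $(f_1,M_2)$), and the on-site operator product $\AA_1\AA_2$ (for $(M_1,M_2)$); for $\O_2\odot\O_1$ the contents are among $\{\AA_1,\AA_2,\AA_2\AA_1\}$. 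By the block-triangularity remark, $\mathrm{spec}(T_{\AA[\O_1\odot\O_2]})$ is the union, over ordered pairs of these contents, of the spectra of the maps $X\mapsto\sum_\alpha C_\alpha^\dagger X C'_\alpha$.

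Now strict locality of $\O_1$ enters decisively. It makes $\AA_1$ strictly upper triangular, hence each c-number matrix $(A_1)_\alpha$ (cf.~\eqref{eq:Walpha}) is strictly upper triangular and any product of $\chi_1$ of them vanishes; likewise $\AA_1\AA_2$ and $\AA_2\AA_1$ are \emph{block}-strictly upper triangular for the $\AA_1$-index grading (since $(A_1)_{ab}=0$ unless $a<b$), so products of $\chi_1$ of them also vanish. Consequently any map $X\mapsto\sum_\alpha C_\alpha^\dagger X C'_\alpha$ in which \emph{either} $C$ \emph{or} $C'$ lies in $\{\AA_1,\AA_1\AA_2,\AA_2\AA_1\}$ is nilpotent, with spectral radius $0$. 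The only pairs left are those with both contents equal to $\AA_2$, for which the map is exactly $T_{\AA_2}$, whose spectral radius is $<1$ because $\O_2$ is first degree. Hence $\rho\big(T_{\AA[\O_1\odot\O_2]}\big)<1$ and $\rho\big(T_{\AA[\O_2\odot\O_1]}\big)<1$, so $\rho\big(T_{\AA[[\O_1,\O_2]]}\big)<1$; that is, $[\O_1,\O_2]$ is first degree.

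The main obstacle is making the automaton bookkeeping airtight: one must verify precisely which diagonal super-blocks survive the $\odot$ truncation and that none of them carries a bare $\1$ self-loop. This is exactly the step that collapses if $\O_1$ is only first degree rather than strictly local — cf.\ the projector counterexample in Appendix~\ref{app:elementary_operations}, where the content of the coupled state is $\O^2$ and $\rho(T_{\O^2})$ can exceed $1$ even when $\rho(T_\O)<1$. The remaining ingredients — block-triangularity of $T_W$ from block-triangularity of $W$, and nilpotency of the transfer map of a (block-)strictly upper triangular matrix — are routine and can each be dispatched in a sentence.
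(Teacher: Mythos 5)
Your proof is correct and follows essentially the same route as the paper's: reduce to the non-disjoint products via $[\O_1,\O_2]=\O_1\odot\O_2-\O_2\odot\O_1$, read off the block upper-triangular $\AA$ block of the product automaton, and conclude from nilpotency of the blocks involving the strictly upper-triangular $\AA_1$ together with $\rho(T_{A_2})<1$. If anything, you are slightly more careful than the paper, since you also account for the mixed off-diagonal pairs of diagonal contents in the block-triangular transfer matrix rather than only the diagonal pairs.
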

\begin{proof}
It is sufficient to show $\O_1 \odot \O_2$ is first degree.

	Let the iMPOs for the operators be given by Eq. \eqref{eq:example_iMPOs}. In particular, $\AA_1$ is strictly upper triangular. From the product automata above, we can see that the $\AA$ block of $\O_1 \odot \O_2$ is given by
	\begin{equation}
		\AA = 
		\begin{pmatrix}
		\AA_2 & 0 & \CC_1 \AA_2 & \CC_1 \BB_2 & 0\\
		0 & \AA_1 & \AA_1 \CC_2 & 0 & \BB_1 \CC_2\\
		0 & 0 & \AA_1 \AA_2 & \AA_1 \BB_2 & \BB_1 \AA_2\\
		0 & 0 & 0 & \AA_1 & 0 \\
		0 & 0 & 0 & 0  & \AA_2
	\end{pmatrix},
	\label{eq:product_A_subblock}
	\end{equation}
	where ``multiplications'' such as $\AA_1 \AA_2$ again stands for the tensor product in ancilla indices and multiplication in the physical indices. This is block-upper triangular, so the transfer matrix $T_A$ is also block upper triangular, and it's spectrum is the union of the spectra of the transfer matrices of the diagonal blocks of $\AA$. Since $\AA_1$ and $\AA_1 \AA_2$ are upper triangular with zeros on the diagonal, the maximal eigenvalue of their transfer matrices is also zero. Since $\AA_2$ is first degree, the maximal eigenvalue of its transfer matrix is some $\lambda<1$, so the maximal eigenvalue of $T_A$ is also $\lambda$. This completes the proof.
\end{proof}

As a practical matter, then, one should compute the commutator of two MPOs via Eq. \eqref{eq:non-disjoint_commutator}. It is advisible to compress the operator after each product and again after the difference. In circumstances where $\O_1$ and $\O_2$ are Hermitian or anti-Hermitian, the two non-disjoint products are related by a Hermitian conjugate and a sign, and need to be computed only once.

	\bibliography{references}

\begin{thebibliography}{22}%
\makeatletter
\providecommand \@ifxundefined [1]{%
 \@ifx{#1\undefined}
}%
\providecommand \@ifnum [1]{%
 \ifnum #1\expandafter \@firstoftwo
 \else \expandafter \@secondoftwo
 \fi
}%
\providecommand \@ifx [1]{%
 \ifx #1\expandafter \@firstoftwo
 \else \expandafter \@secondoftwo
 \fi
}%
\providecommand \natexlab [1]{#1}%
\providecommand \enquote  [1]{``#1''}%
\providecommand \bibnamefont  [1]{#1}%
\providecommand \bibfnamefont [1]{#1}%
\providecommand \citenamefont [1]{#1}%
\providecommand \href@noop [0]{\@secondoftwo}%
\providecommand \href [0]{\begingroup \@sanitize@url \@href}%
\providecommand \@href[1]{\@@startlink{#1}\@@href}%
\providecommand \@@href[1]{\endgroup#1\@@endlink}%
\providecommand \@sanitize@url [0]{\catcode `\\12\catcode `\$12\catcode
  `\&12\catcode `\#12\catcode `\^12\catcode `\_12\catcode `\%12\relax}%
\providecommand \@@startlink[1]{}%
\providecommand \@@endlink[0]{}%
\providecommand \url  [0]{\begingroup\@sanitize@url \@url }%
\providecommand \@url [1]{\endgroup\@href {#1}{\urlprefix }}%
\providecommand \urlprefix  [0]{URL }%
\providecommand \Eprint [0]{\href }%
\providecommand \doibase [0]{http://dx.doi.org/}%
\providecommand \selectlanguage [0]{\@gobble}%
\providecommand \bibinfo  [0]{\@secondoftwo}%
\providecommand \bibfield  [0]{\@secondoftwo}%
\providecommand \translation [1]{[#1]}%
\providecommand \BibitemOpen [0]{}%
\providecommand \bibitemStop [0]{}%
\providecommand \bibitemNoStop [0]{.\EOS\space}%
\providecommand \EOS [0]{\spacefactor3000\relax}%
\providecommand \BibitemShut  [1]{\csname bibitem#1\endcsname}%
\let\auto@bib@innerbib\@empty
\bibitem [{\citenamefont {Hastings}(2007)}]{Hastings_2007}%
  \BibitemOpen
  \bibfield  {author} {\bibinfo {author} {\bibfnamefont {M.~B.}\ \bibnamefont
  {Hastings}},\ }\href {\doibase 10.1088/1742-5468/2007/08/p08024} {\bibfield
  {journal} {\bibinfo  {journal} {Journal of Statistical Mechanics: Theory and
  Experiment}\ }\textbf {\bibinfo {volume} {2007}},\ \bibinfo {pages} {P08024}
  (\bibinfo {year} {2007})}\BibitemShut {NoStop}%
\bibitem [{\citenamefont {Schollw{\"o}ck}(2011)}]{schollwock2011density}%
  \BibitemOpen
  \bibfield  {author} {\bibinfo {author} {\bibfnamefont {U.}~\bibnamefont
  {Schollw{\"o}ck}},\ }\href@noop {} {\bibfield  {journal} {\bibinfo  {journal}
  {Annals of Physics}\ }\textbf {\bibinfo {volume} {326}},\ \bibinfo {pages}
  {96} (\bibinfo {year} {2011})}\BibitemShut {NoStop}%
\bibitem [{\citenamefont {McCulloch}(2007)}]{mcculloch2007density}%
  \BibitemOpen
  \bibfield  {author} {\bibinfo {author} {\bibfnamefont {I.~P.}\ \bibnamefont
  {McCulloch}},\ }\href@noop {} {\bibfield  {journal} {\bibinfo  {journal}
  {Journal of Statistical Mechanics: Theory and Experiment}\ }\textbf {\bibinfo
  {volume} {2007}},\ \bibinfo {pages} {P10014} (\bibinfo {year}
  {2007})}\BibitemShut {NoStop}%
\bibitem [{\citenamefont {Hauschild}\ and\ \citenamefont
  {Pollmann}(2018)}]{hauschild2018efficient}%
  \BibitemOpen
  \bibfield  {author} {\bibinfo {author} {\bibfnamefont {J.}~\bibnamefont
  {Hauschild}}\ and\ \bibinfo {author} {\bibfnamefont {F.}~\bibnamefont
  {Pollmann}},\ }\href@noop {} {\bibfield  {journal} {\bibinfo  {journal}
  {SciPost Phys. Lect. Notes}\ }\textbf {\bibinfo {volume} {5}} (\bibinfo
  {year} {2018})}\BibitemShut {NoStop}%
\bibitem [{\citenamefont {Vanderstraeten}\ \emph {et~al.}(2019)\citenamefont
  {Vanderstraeten}, \citenamefont {Haegeman},\ and\ \citenamefont
  {Verstraete}}]{laurens1}%
  \BibitemOpen
  \bibfield  {author} {\bibinfo {author} {\bibfnamefont {L.}~\bibnamefont
  {Vanderstraeten}}, \bibinfo {author} {\bibfnamefont {J.}~\bibnamefont
  {Haegeman}}, \ and\ \bibinfo {author} {\bibfnamefont {F.}~\bibnamefont
  {Verstraete}},\ }\href {\doibase 10.21468/SciPostPhysLectNotes.7} {\bibfield
  {journal} {\bibinfo  {journal} {SciPost Phys. Lect. Notes}\ ,\ \bibinfo
  {pages} {7}} (\bibinfo {year} {2019})}\BibitemShut {NoStop}%
\bibitem [{\citenamefont {Schuch}\ \emph {et~al.}(2008)\citenamefont {Schuch},
  \citenamefont {Wolf}, \citenamefont {Verstraete},\ and\ \citenamefont
  {Cirac}}]{schuch2008entropy}%
  \BibitemOpen
  \bibfield  {author} {\bibinfo {author} {\bibfnamefont {N.}~\bibnamefont
  {Schuch}}, \bibinfo {author} {\bibfnamefont {M.~M.}\ \bibnamefont {Wolf}},
  \bibinfo {author} {\bibfnamefont {F.}~\bibnamefont {Verstraete}}, \ and\
  \bibinfo {author} {\bibfnamefont {J.~I.}\ \bibnamefont {Cirac}},\ }\href@noop
  {} {\bibfield  {journal} {\bibinfo  {journal} {Physical review letters}\
  }\textbf {\bibinfo {volume} {100}},\ \bibinfo {pages} {030504} (\bibinfo
  {year} {2008})}\BibitemShut {NoStop}%
\bibitem [{\citenamefont {Verstraete}\ and\ \citenamefont
  {Cirac}(2006)}]{verstraete2006matrix}%
  \BibitemOpen
  \bibfield  {author} {\bibinfo {author} {\bibfnamefont {F.}~\bibnamefont
  {Verstraete}}\ and\ \bibinfo {author} {\bibfnamefont {J.~I.}\ \bibnamefont
  {Cirac}},\ }\href@noop {} {\bibfield  {journal} {\bibinfo  {journal}
  {Physical Review B}\ }\textbf {\bibinfo {volume} {73}},\ \bibinfo {pages}
  {094423} (\bibinfo {year} {2006})}\BibitemShut {NoStop}%
\bibitem [{\citenamefont {Michel}\ and\ \citenamefont
  {McCulloch}(2010)}]{michel2010schur}%
  \BibitemOpen
  \bibfield  {author} {\bibinfo {author} {\bibfnamefont {L.}~\bibnamefont
  {Michel}}\ and\ \bibinfo {author} {\bibfnamefont {I.}~\bibnamefont
  {McCulloch}},\ }\href@noop {} {\bibfield  {journal} {\bibinfo  {journal}
  {arXiv:1008.4667}\ } (\bibinfo {year} {2010})}\BibitemShut {NoStop}%
\bibitem [{\citenamefont {Chan}\ \emph {et~al.}(2016)\citenamefont {Chan},
  \citenamefont {Keselman}, \citenamefont {Nakatani}, \citenamefont {Li},\ and\
  \citenamefont {White}}]{chan2016matrix}%
  \BibitemOpen
  \bibfield  {author} {\bibinfo {author} {\bibfnamefont {G.~K.-L.}\
  \bibnamefont {Chan}}, \bibinfo {author} {\bibfnamefont {A.}~\bibnamefont
  {Keselman}}, \bibinfo {author} {\bibfnamefont {N.}~\bibnamefont {Nakatani}},
  \bibinfo {author} {\bibfnamefont {Z.}~\bibnamefont {Li}}, \ and\ \bibinfo
  {author} {\bibfnamefont {S.~R.}\ \bibnamefont {White}},\ }\href@noop {}
  {\bibfield  {journal} {\bibinfo  {journal} {The Journal of chemical physics}\
  }\textbf {\bibinfo {volume} {145}},\ \bibinfo {pages} {014102} (\bibinfo
  {year} {2016})}\BibitemShut {NoStop}%
\bibitem [{\citenamefont {Hubig}\ \emph {et~al.}(2017)\citenamefont {Hubig},
  \citenamefont {McCulloch},\ and\ \citenamefont
  {Schollw{\"o}ck}}]{hubig2017generic}%
  \BibitemOpen
  \bibfield  {author} {\bibinfo {author} {\bibfnamefont {C.}~\bibnamefont
  {Hubig}}, \bibinfo {author} {\bibfnamefont {I.}~\bibnamefont {McCulloch}}, \
  and\ \bibinfo {author} {\bibfnamefont {U.}~\bibnamefont {Schollw{\"o}ck}},\
  }\href@noop {} {\bibfield  {journal} {\bibinfo  {journal} {Physical Review
  B}\ }\textbf {\bibinfo {volume} {95}},\ \bibinfo {pages} {035129} (\bibinfo
  {year} {2017})}\BibitemShut {NoStop}%
\bibitem [{\citenamefont {Pirvu}\ \emph {et~al.}(2010)\citenamefont {Pirvu},
  \citenamefont {Murg}, \citenamefont {Cirac},\ and\ \citenamefont
  {Verstraete}}]{pirvu2010matrix}%
  \BibitemOpen
  \bibfield  {author} {\bibinfo {author} {\bibfnamefont {B.}~\bibnamefont
  {Pirvu}}, \bibinfo {author} {\bibfnamefont {V.}~\bibnamefont {Murg}},
  \bibinfo {author} {\bibfnamefont {J.~I.}\ \bibnamefont {Cirac}}, \ and\
  \bibinfo {author} {\bibfnamefont {F.}~\bibnamefont {Verstraete}},\
  }\href@noop {} {\bibfield  {journal} {\bibinfo  {journal} {New Journal of
  Physics}\ }\textbf {\bibinfo {volume} {12}},\ \bibinfo {pages} {025012}
  (\bibinfo {year} {2010})}\BibitemShut {NoStop}%
\bibitem [{\citenamefont {Zaletel}\ \emph
  {et~al.}(2015{\natexlab{a}})\citenamefont {Zaletel}, \citenamefont {Mong},
  \citenamefont {Karrasch}, \citenamefont {Moore},\ and\ \citenamefont
  {Pollmann}}]{zaletel2015time}%
  \BibitemOpen
  \bibfield  {author} {\bibinfo {author} {\bibfnamefont {M.~P.}\ \bibnamefont
  {Zaletel}}, \bibinfo {author} {\bibfnamefont {R.~S.}\ \bibnamefont {Mong}},
  \bibinfo {author} {\bibfnamefont {C.}~\bibnamefont {Karrasch}}, \bibinfo
  {author} {\bibfnamefont {J.~E.}\ \bibnamefont {Moore}}, \ and\ \bibinfo
  {author} {\bibfnamefont {F.}~\bibnamefont {Pollmann}},\ }\href@noop {}
  {\bibfield  {journal} {\bibinfo  {journal} {Physical Review B}\ }\textbf
  {\bibinfo {volume} {91}},\ \bibinfo {pages} {165112} (\bibinfo {year}
  {2015}{\natexlab{a}})}\BibitemShut {NoStop}%
\bibitem [{\citenamefont {Kung}(1978)}]{Kung1978}%
  \BibitemOpen
  \bibfield  {author} {\bibinfo {author} {\bibfnamefont {S.-Y.}\ \bibnamefont
  {Kung}},\ }\href@noop {} {\bibfield  {journal} {\bibinfo  {journal} {Proc.
  Twelfth Asilomar Conf. on Circuits, Systems and Computers}\ } (\bibinfo
  {year} {1978})}\BibitemShut {NoStop}%
\bibitem [{\citenamefont {Crosswhite}\ and\ \citenamefont
  {Bacon}(2008)}]{crosswhite}%
  \BibitemOpen
  \bibfield  {author} {\bibinfo {author} {\bibfnamefont {G.~M.}\ \bibnamefont
  {Crosswhite}}\ and\ \bibinfo {author} {\bibfnamefont {D.}~\bibnamefont
  {Bacon}},\ }\href {\doibase 10.1103/PhysRevA.78.012356} {\bibfield  {journal}
  {\bibinfo  {journal} {Phys. Rev. A}\ }\textbf {\bibinfo {volume} {78}},\
  \bibinfo {pages} {012356} (\bibinfo {year} {2008})}\BibitemShut {NoStop}%
\bibitem [{\citenamefont {Choi}(1975)}]{choi1975completely}%
  \BibitemOpen
  \bibfield  {author} {\bibinfo {author} {\bibfnamefont {M.-D.}\ \bibnamefont
  {Choi}},\ }\href@noop {} {\bibfield  {journal} {\bibinfo  {journal} {Linear
  algebra and its applications}\ }\textbf {\bibinfo {volume} {10}},\ \bibinfo
  {pages} {285} (\bibinfo {year} {1975})}\BibitemShut {NoStop}%
\bibitem [{\citenamefont {Bratteli}\ and\ \citenamefont
  {Robinson}(1996)}]{operator_algebras}%
  \BibitemOpen
  \bibfield  {author} {\bibinfo {author} {\bibfnamefont {O.}~\bibnamefont
  {Bratteli}}\ and\ \bibinfo {author} {\bibfnamefont {D.~W.}\ \bibnamefont
  {Robinson}},\ }\href@noop {} {\emph {\bibinfo {title} {Operator algebras and
  quantum statistical mechanics.}}},\ Vol.\ \bibinfo {volume} {Vol. 2}\
  (\bibinfo  {publisher} {Springer},\ \bibinfo {year} {1996})\BibitemShut
  {NoStop}%
\bibitem [{\citenamefont {Silverman}\ and\ \citenamefont
  {Bettayeb}(1980)}]{silverman1980optimal}%
  \BibitemOpen
  \bibfield  {author} {\bibinfo {author} {\bibfnamefont {L.~M.}\ \bibnamefont
  {Silverman}}\ and\ \bibinfo {author} {\bibfnamefont {M.}~\bibnamefont
  {Bettayeb}},\ }\bibfield  {booktitle} {\emph {\bibinfo {booktitle} {Joint
  Automatic Control Conference}},\ }\href@noop {} {\ ,\ \bibinfo {pages} {81}
  (\bibinfo {year} {1980})}\BibitemShut {NoStop}%
\bibitem [{\citenamefont {Al-Saggaf}\ and\ \citenamefont
  {Franklin}(1987)}]{al1987error}%
  \BibitemOpen
  \bibfield  {author} {\bibinfo {author} {\bibfnamefont {U.}~\bibnamefont
  {Al-Saggaf}}\ and\ \bibinfo {author} {\bibfnamefont {G.}~\bibnamefont
  {Franklin}},\ }\href@noop {} {\bibfield  {journal} {\bibinfo  {journal} {IEEE
  transactions on Automatic Control}\ }\textbf {\bibinfo {volume} {32}},\
  \bibinfo {pages} {815} (\bibinfo {year} {1987})}\BibitemShut {NoStop}%
\bibitem [{\citenamefont {Zaletel}\ \emph
  {et~al.}(2015{\natexlab{b}})\citenamefont {Zaletel}, \citenamefont {Mong},
  \citenamefont {Pollmann},\ and\ \citenamefont {Rezayi}}]{zaletel15}%
  \BibitemOpen
  \bibfield  {author} {\bibinfo {author} {\bibfnamefont {M.~P.}\ \bibnamefont
  {Zaletel}}, \bibinfo {author} {\bibfnamefont {R.~S.~K.}\ \bibnamefont
  {Mong}}, \bibinfo {author} {\bibfnamefont {F.}~\bibnamefont {Pollmann}}, \
  and\ \bibinfo {author} {\bibfnamefont {E.~H.}\ \bibnamefont {Rezayi}},\
  }\href {\doibase 10.1103/PhysRevB.91.045115} {\bibfield  {journal} {\bibinfo
  {journal} {Phys. Rev. B}\ }\textbf {\bibinfo {volume} {91}},\ \bibinfo
  {pages} {045115} (\bibinfo {year} {2015}{\natexlab{b}})}\BibitemShut
  {NoStop}%
\bibitem [{\citenamefont {Mattis}(1981)}]{mattis1981reduce}%
  \BibitemOpen
  \bibfield  {author} {\bibinfo {author} {\bibfnamefont {D.~C.}\ \bibnamefont
  {Mattis}},\ }in\ \href@noop {} {\emph {\bibinfo {booktitle} {Physics in One
  Dimension}}}\ (\bibinfo  {publisher} {Springer},\ \bibinfo {year} {1981})\
  pp.\ \bibinfo {pages} {3--10}\BibitemShut {NoStop}%
\bibitem [{\citenamefont {Viswanath}\ and\ \citenamefont
  {M{\"u}ller}(2008)}]{viswanath2008recursion}%
  \BibitemOpen
  \bibfield  {author} {\bibinfo {author} {\bibfnamefont {V.}~\bibnamefont
  {Viswanath}}\ and\ \bibinfo {author} {\bibfnamefont {G.}~\bibnamefont
  {M{\"u}ller}},\ }\href@noop {} {\emph {\bibinfo {title} {The Recursion
  Method: Applications to Many-body Dynamics}}}\ (\bibinfo  {publisher}
  {Springer},\ \bibinfo {year} {2008})\BibitemShut {NoStop}%
\bibitem [{\citenamefont {Parker}\ \emph {et~al.}(2018)\citenamefont {Parker},
  \citenamefont {Cao}, \citenamefont {Avdoshkin}, \citenamefont {Scaffidi},\
  and\ \citenamefont {Altman}}]{parker2018universal}%
  \BibitemOpen
  \bibfield  {author} {\bibinfo {author} {\bibfnamefont {D.~E.}\ \bibnamefont
  {Parker}}, \bibinfo {author} {\bibfnamefont {X.}~\bibnamefont {Cao}},
  \bibinfo {author} {\bibfnamefont {A.}~\bibnamefont {Avdoshkin}}, \bibinfo
  {author} {\bibfnamefont {T.}~\bibnamefont {Scaffidi}}, \ and\ \bibinfo
  {author} {\bibfnamefont {E.}~\bibnamefont {Altman}},\ }\href@noop {}
  {\bibfield  {journal} {\bibinfo  {journal} {arXiv:1812.08657}\ } (\bibinfo
  {year} {2018})}\BibitemShut {NoStop}%
\end{thebibliography}%

	\end{document}